\newcommand{\yes}{\textsc{Yes}\xspace}
\newcommand{\no}{\textsc{No}\xspace}
\newcommand{\yesinstance}{\yes-instance\xspace}
\newcommand{\noinstance}{\no-instance\xspace}
\newcommand{\positives}{\ensuremath{\mathbb{N}^{+}}\xspace}
\newcommand{\ttuple}{\ensuremath{\mathbb{T}}\xspace}
\newcommand{\NPH}{\textsf{NP}-hard\xspace}
\newcommand{\vc}{\textsc{Vertex Cover}\xspace}
\newcommand{\VC}{\vc}
\newcommand{\fvs}{\textsc{Feedback Vertex Set}\xspace}
\newcommand{\FVS}{\fvs}
\newcommand{\fairvc}{\textsc{$\mathbb{T}$-Fair Vertex Cover}\xspace}
\newcommand{\coloursum}{\ensuremath{(\sum_{i=1}^{t}k_{i})}\xspace}
\newlength{\RoundedBoxWidth}
\newsavebox{\GrayRoundedBox}
\newenvironment{GrayBox}[1]%
   {\setlength{\RoundedBoxWidth}{.93\textwidth}
    \def\boxheading{#1}
    \begin{lrbox}{\GrayRoundedBox}
       \begin{minipage}{\RoundedBoxWidth}}%
   {   \end{minipage}
    \end{lrbox}
    \begin{center}
    \begin{tikzpicture}%
       \node(Text)[draw=black!20,fill=white,rounded corners,%
             inner sep=2ex,text width=\RoundedBoxWidth]%
             {\usebox{\GrayRoundedBox}};
        \coordinate(x) at (current bounding box.north west);
        \node [draw=white,rectangle,inner sep=3pt,anchor=north west,fill=white] 
        at ($(x)+(6pt,.75em)$) {\boxheading};
    \end{tikzpicture}
    \end{center}}     
\newenvironment{defproblemx}[2][]{\noindent\ignorespaces%
                                \FrameSep=6pt%
                                \parindent=0pt%
                \vspace*{-1.5em}
                \ifthenelse{\isempty{#1}}{%
                  \begin{GrayBox}{\textsc{#2}}%
                }{%
                  \begin{GrayBox}{\textsc{#2}  parameterized by~{#1}}%
                }
                \begin{tabular*}{\textwidth}{@{\hspace{.1em}} >{\itshape} p{1.8cm} p{0.8\textwidth} @{}}%
            }{
                \end{tabular*}%
                \end{GrayBox}%
                \ignorespacesafterend
            }  
\newcommand{\defproblemshort}[3]{
  \vspace{1mm}
\begin{center}
\noindent\fbox{
  \begin{minipage}{.9\linewidth}
  \begin{tabular*}
    {\linewidth}{@{\extracolsep{\fill}}lr} \textsc{#1}   \\ \end{tabular*}
  {\bf{Input:}} #2  \\
  {\bf{Task:}} #3
  \end{minipage}
  }
\end{center}
  \vspace{1mm}
}
\newcommand{\NP}{\textsf{NP}}
\newcommand{\twidth}{\ensuremath{t_w}}
\newcommand{\card}[1]{\ensuremath{{\vert {#1} \vert }}}
\newcommand{\set}[1]{\ensuremath{\left\{ {#1} \right\}}}
\newcommand{\fn}[3]{\ensuremath{{{#1} : {#2} \rightarrow {#3}}}}
\newcommand{\cO}{\mathcal{O}}
\newcommand{\ffvsfull}{\textsc{$\mathbb{T}$-Fair Feedback Vertex Set}}
\newcommand{\ffvs}{\textsc{$\mathbb{T}$-Fair FVS}}
\newcommand{\ins}{\ensuremath{\mathtt{ins}}}
\newcommand{\proj}{\ensuremath{\mathtt{proj}}}
\newcommand{\join}{\ensuremath{\mathtt{join}}}
\newcommand{\glue}{\ensuremath{\mathtt{glue}}}
\newcommand{\fvsruntime}{\ensuremath{2^{O(\twidth)} n^{\cO(1)}}}
\newcommand{\fvsalgo}{{\sf Algo-Fair-FVS-Treewidth}}
\newcommand{\fvsalgofpt}{{\sf Algo-Fair-FVS-TCB}}
\title{Addressing Bias in Algorithmic Solutions:
Exploring Vertex Cover and Feedback Vertex Set}
\author{Sheikh Shakil Akhtar\inst{1}\orcidID{ 0000-0002-3653-6670 } \and
	Jayakrishnan Madathil\inst{2}\orcidID{0000-0001-6337-6759} \and
	Pranabendu Misra\inst{1}\orcidID{0000-0002-7086-5590} \and
	Geevarghese Philip\inst{1}\orcidID{0000-0003-0717-7303}}
\authorrunning{S.S. Akhtar,J. Madathil, P. Misra, G. Philip}
\institute{Chennai Mathematical Institute \and University of Glasgow}
\begin{document}

\maketitle

\begin{abstract}

A typical goal of research in combinatorial optimization is to come up with fast
algorithms that find optimal solutions to a computational
problem.
The process that takes a real-world problem and extracts a clean mathematical
abstraction of it often throws out a lot of ``side information'' which is deemed
irrelevant. However, the discarded information could be of real significance to
the end-user of the algorithm's output. All solutions of the same cost are not
necessarily of equal impact in the real-world; some solutions may be much more
desirable than others, even at the expense of additional increase in cost. If the impact, positive or negative, is mostly felt by some specific (minority) subgroups of the population, the population at large will be largely unaware of it.

In this work we ask the question of finding solutions to combinatorial optimization problems that are ``unbiased'' with respect to a collection of specified subgroups of the total
population. We consider a simple model of bias, and study it via two basic optimization problems on graphs: 
\VC and \FVS, which are both NP-hard.
Here, the input is a graph and the solution is a subset of the vertex set. The vertices represent members of a population, and
each vertex has been assigned a subset of colors, where each color indicates membership of a specific subgroup. The goal is to find a small-sized solution to the optimization problem in which no color appears more than a specified---per-color---number of times.
%
%
The colors can be used to model various
relevant---economic, political, demographic, or other---classes to which the
entities belong, and the variants that we study can then be used to look for
small solutions which enforce per-class upper bounds on the number of removed
entities. These upper-bounds enforce the constraint that no subclass of the population is over-represented in the solution.

We show the new variants of \VC and \FVS, obtained by adding these additional constraints, are \emph{Fixed-Parameter Tractable}, when parameterized by various combinations of the solution size, the number of colors, and the treewidth of the graph. Our results shows that it is possible to devise fast algorithms to solve these problem in many practical settings.

\end{abstract}

\keywords{Parameterized Algorithms, Vertex Cover, 
Feedback Vertex Set, Fairness, Bias}

\section{Introduction}\label{mainintroduction}

Consider a hospital that treats a large number of patients every day. The patients require timely access to a number of diagnostic and medical procedures for making a successful recovery. Since the manpower and equipment at a hospital are shared and in high-demand, scheduling these procedures is a
non-trivial task. A simple way to schedule the patients could be as follows:
construct a graph $G$ where the vertices $V(G)$ represent the patients, and we
have an edge between patient $u$ and patient $v$ if they have a common medical
procedure; i.e. both can't be scheduled simultaneously. The objective is to
maximize the number of patients that can be scheduled now, or equivalently to
minimize the number of patients who will have to wait for later.
The second formulation of the problem can be recognized as the classic {\sc Vertex Cover} problem, for
which efficient Parameterized algorithms~\cite{FPTbook2015} and Approximation
algorithms~\cite{ApproxBookWS2011,ApproxBookVazirani2001} are well-known.

The known algorithms for {\sc Vertex Cover} however aren't designed for being
\emph{unbiased} and \emph{impartial}; i.e. it can't be ensured that the patients
who are rescheduled are not disproportionately from one subgroup, e.g. economically weaker. Here the
subgroups may be social, economic, medical or as per some other relevant criteria. If not addressed, such (unintended) bias can have severe and lasting impact.
%
Moreover, if the adverse effects are limited
to some small subgroups, then the majority population will be quite unaware that there is a problem at all. 
For example, if the local hospital is resource constrained and the delayed patients are largely from some minority social
groups, then to the general population it will appear that the hospital is well-functioning and there is no need for any additional funding or resources! This is clearly a serious problem.

It is self-evident that (combinatorial) algorithms themselves are unbiased. The
bias in the solutions computed by them appears in other ways. For example, the order in which the input data is presented could influence the output. Sometimes the bias may be inherent in the data itself; e.g. the prevalence and severity of
certain medical conditions varies by gender, ethnicity, income, age etc. This means that certain sophisticated medical diagnostics and procedures may correlate with certain subgroups of patients. The algorithm may decide to schedule these patients for later in the pursuit of an optimal solution.

One of the main reasons for this un-indented bias is that when we reduce a real-world problem to an
abstract mathematical problem, we try to simplify it as much as possible and discard a lot of associated information. 
Furthermore, certain heuristics employed to speed up the algorithm could have unexpected consequences. For example, most algorithms for {\sc Vertex Cover} will pick high-degree vertices into the solution. These vertices corresponds to patients who require a number of medical procedures and are likely to be in a critical condition, and must not be delayed!

Other examples arise in the applications of resource allocation using
algorithms.
Consider the process of setting up sample-collection centers across a city
medical-tests during the COVID-19 pandemic~\cite{munguia2021fair}. Due to cost reasons, only a limited
number of such centers can be opened and the goal will be to place them so as to
minimize the maximum travel distance of a citizen to the nearest center. 
Similar challenges arise in many other public health settings.
The problem of finding the optimal location for these centers can be modeled as
a \emph{Clustering} problem which is very well-studied (see, e.g.
~\cite{charu2013data,tan2016introduction}). However, consider the subset of 
people who are all quite far way from the centers, as compared to the rest of the population, e.g. in a remote settlement.
They are less likely to travel to the centers compared to the general population. If they happen to be disproportionately from a certain subgroup, which could be case for a historically disadvantaged ethnic subgroup, then the data collected from the tests will fail to properly account for this subgroup. Then, any policy built upon this data will be flawed~\cite{abbasi2020risk}.

The current algorithms for these computational problems are oblivious to the biases in the output solution.
If unchecked, such bias can have significant impact on certain groups. 
We emphasize once more that this is not because the algorithms are themselves
biased, but because of the attributes of the problem instance itself, the choice
of mathematical abstraction, the presentation of the input data etc. 
We also remark that simple strategies such as randomizing the order of input data may or may not be helpful; 
it is unclear without a formal (or experimental) analysis. Moreover, this will not ameliorate
every form of bias (such as the consequence of the high-degree rule for {\sc
  Vertex Cover} mentioned earlier). It is therefore essential to consider more concrete ways of
addressing the bias in algorithmic solutions.

Motivated by this, we explore the class of
computational problems derived from classic NP-complete optimization problems obtained by introducing
additional constraints for being unbiased with respect to population subgroups.
This new class of problems are natural combinatorial questions
that are interesting in their own right. For simplicity, we focus on graph
problems where the vertices represent members of the general population, and
vertex subsets naturally represent various subgroups of the population. 
We may also generalize this definition to broader classes of problems such as those modeled by \emph{Constraint Satisfaction Problems(CSPs)} or \emph{Linear Programming}, but at the expense of clarity. Therefore we state them for optimization problems on graphs, which capture a broad class of classic optimization problems.
 Let $\Pi$ be a
graph problem such that the solution to \(\Pi\) is a subset of vertices of the
input graph $G$. Then, we have the following \emph{unbiased} variant of $\Pi$: 

\defproblemshort{UnBiased $\Pi$}{$(G,c)$ where $G$ is a graph whose vertices are
  labeled by \emph{colors} from a set $\{1,2,\ldots, t\}$ via the function
  $c:V(G) \rightarrow 2^{\{1,2,\ldots, t\}} \setminus \{\emptyset\}$.}{Find a
  solution $S$ of minimum size such that for every $i \in \{1,2,\ldots,t\}$,
  $\frac{|\{v \in S ~|~ i \in c(v)\}|}{|S|} = \frac{|\{v \in V(G) ~|~ i \in
    c(v)\}|}{|V(G)|}$}

Here the colors $\{1,2,\ldots,t\}$ can represent any relevant subgroup of the
total population. A single vertex can be a member of more than one subgroup,
which is captured by the \emph{coloring function $c$} ; and note that every vertex is
part of at least one subgroup. Our objective is to compute a solution $S$ where
the fraction of vertices for each color $i$ is the same as the fraction of the
color in the total population. We call such a solution \emph{unbiased}.
Naturally, our objective is to compute an unbiased solution of the smallest
size.

We remark that the \emph{price of fairness} in a given instance $(G,c)$ of {\sc UnBiased $\Pi$} could be very high, due to the strict unbiasedness constraints.
Indeed, it is possible to construct artificial examples where the difference between the optimal solution to $G$ and the optimal \emph{unbiased} solution to $(G,c)$ is unbounded. 
Therefore, we define the following more general variant. Here $\alpha \leq 1 \leq \beta$ are real numbers, that set the desired level of \emph{unbiased-ness}.

\defproblemshort{$(\alpha, \beta)$-UnBiased $\Pi$}{$(G,c)$ where $G$ is a graph whose vertices are
    labeled by \emph{colors} from a set $\{1,2,\ldots, t\}$ via the function
    $c:V(G) \rightarrow 2^{\{1,2,\ldots, t\}} \setminus \{\emptyset\}$.}{Find a
    solution $S$ of minimum size such that for every $i \in \{1,2,\ldots,t\}$, we have 
    $\alpha \cdot \frac{|\{v \in V(G) ~|~ i \in c(v)\}|}{|V(G)|} \leq \frac{|\{v \in S ~|~ i \in c(v)\}|}{|S|} \leq \beta \cdot \frac{|\{v \in V(G) ~|~ i \in c(v)\}|}{|V(G)|}$}

Observe that for $\alpha = 0, \beta = \infty$ we obtain $\Pi$, whereas for $\alpha = \beta = 1$, we obtain {\sc UnBiased $\Pi$}. 
We can obtain the desired level of trade-off between the level of bias and the price of fairness by setting $\alpha$ and $\beta$.

In this paper we consider these problems in the framework of Parameterized
Complexity~\cite{FPTbook2015}. Here, the problem instances are parameterized,
i.e. they consist of a pair $(I,k)$, where $I$ is an instance of a problem $\Pi$
and $k$ is a number called the parameter representing some structural property of $I$ that is typically bounded in practice. A typical parameter is the optimum solution size, and obtaining an FPT algorithm for this parameter means that we can compute an optimal solution efficiently when the solution size is bounded, which is the case in many practical settings, even if the input instance itself is very large.
Other examples parameters are treewidth of a
graph, the genus of the graph etc. The objective is to obtain \emph{FPT algorithms} that find an optimal
solution in time $f(k) \cdot n^{O(1)}$, where $f$ is a function of $k$ alone.
FPT algorithms are employed to compute optimal solutions to NP-hard problems in
nearly-polynomial time. Another objective is to design \emph{kernelization
  algorithms}, that given an instance $(I,k)$, run in polynomial time, and
produce an an equivalent instance $(I',k')$ such that $|I'| + k' = k^{O(1)}$.
$(I',k')$ is called a \emph{polynomial kernel}. Kernelization captures the
notion of efficient data-reduction algorithms. We refer to~\cite{FPTbook2015}
for the details.

We define a slightly different variant of $(\alpha, \beta)$-UnBiased $\Pi$ which naturally fits
into this paradigm. Here for each color-class $i \in \{1,\ldots, t\}$, we are
given a number $k_i$, and let 
$\mathbb{T} = (k_i ~|~ 1 \leq i \leq t)$ denote
this tuple of integers. 
Then we define the following problem.

\defproblemshort{$(\alpha, \beta) \mbox{-} \mathbb T$-fair $\Pi$}{$(G,c)$ where $G$ is a graph whose
  vertices are labeled by \emph{colors} from a set $\{1,2,\ldots, t\}$ via the
  function $c:V(G) \rightarrow 2^{\{1,2,\ldots, t\}} \setminus \{
  \emptyset \}$, and a \(t\)-tuple of
  integers \({\mathbb T} = (k_{1}, k_{2},\dots,k_{t})\).}
  {Decide whether there is a
  solution $S$ such that $|S| \leq k$ and $\alpha \cdot k_i \leq |\{v \in S ~|~ i \in c(v)\}| \leq \beta \cdot k_i$ for each $1 \leq i
  \leq t$. Here $k = \sum_{i=1}^t k_i$.} 

Observe that, if $k$ were the size of the solution $S$, then we can set $k_i = k \cdot \frac{|\{v \in V(G) ~|~ i \in c(v)\}|}{|V(G)|}$
to obtain  {\sc $(\alpha, \beta)$-UnBiased $\Pi$}. Also, when $\alpha = \beta = 1$, we call it {\sc $\mathbb T$-fair $\Pi$}, which corresponds to 
{\sc Unbiased $\Pi$}.

For simplicity, we say $(G,c)$ is \emph{$t$-colored} if vertices of $G$ are
labeled by \emph{colors} from the set $\{1,2,\ldots, t\}$ via the given function
$c:V(G) \rightarrow 2^{\{1,2,\ldots, t\}} \setminus \{\emptyset\}$. We say that
a solution $S$ is \emph{$(\alpha, \beta) \mbox{-} \mathbb T$-fair} if it is a solution to $\Pi$, $|S| \leq k$ and $\alpha \cdot k_i \leq |\{v \in S ~|~ i \in c(v)\}| \leq \beta \cdot k_i$ for each $1 \leq i \leq t$. In the strict setting where $\alpha = \beta = 1$, we call it \emph{$\mathbb T$-fair solution}.

In this paper, we initiate the study of these class of problems via two classic graph problems, namely {\sc Vertex Cover}
and {\sc Feedback Vertex Set}, which are defined as follows. Recall that a
vertex-subset $S$ is a \emph{vertex cover} of a graph $G$ if $G-S$ has no edges.
$S$ is a \emph{$(\alpha, \beta) \mbox{-} \mathbb{T}$-fair vertex cover} if it satisfies the constraints
imposed by $(G,c)$ and $\mathbb{T}$. A vertex-subset $S$ is a \emph{feedback
  vertex set} of a graph $G$, if $G-S$ is acyclic. We can similarly define a
\emph{$(\alpha, \beta) \mbox{-} \mathbb{T}$-fair feedback vertex set}.

\defproblemshort{$(\alpha, \beta) \mbox{-} \mathbb T$-Fair Vertex Cover ($\mathbb{T}$-Fair VC)}{A
  \(t\)-coloured graph \((G,c)\) and a \(t\)-tuple of integers \({\mathbb T} =
  (k_{1}, k_{2},\dots,k_{t})\).}{Decide whether \(G\) has a $(\alpha, \beta) \mbox{-} \mathbb T$-fair
  vertex cover.}

\defproblemshort{$(\alpha, \beta) \mbox{-} \mathbb T$-Fair Feedback Vertex Set($\mathbb{T}$-Fair FVS)}{A
  \(t\)-coloured graph \((G,c)\) and a \(t\)-tuple of integers \({\mathbb T} =
  (k_{1}, k_{2},\dots,k_{t})\).}{Decide whether \(G\) has a $(\alpha, \beta) \mbox{-} \mathbb T$-fair
  feedback vertex set.}

We first consider {\sc $\mathbb{T}$-Fair VC} and {\sc $\mathbb{T}$-Fair FVS} and 
obtain FPT-algorithms and polynomial kernels for these problems when
parameterized by solution-size, number of colors, and the treewidth of the
underlying graph. Note that the treewidth of a graph is never larger than the
size of a minimum vertex cover, and at most one larger than the size of a
minimum feedback vertex set~\cite{FPTbook2015}. 
We then show that these algorithms can be extended to the more general {\sc $(\alpha, \beta) \mbox{-} \mathbb{T}$-Fair VC} 
and {\sc $(\alpha, \beta) \mbox{-} \mathbb{T}$-Fair FVS}.
Formally, 

\begin{restatable}{theorem}{fairVCfpt}
  \label{thm:fairVCfpt}
  \fairvc can be solved in \(\cO(n^{2}t(\prod_{i=1}^{t}k_{i}^{2})\cdot
  1.4656^{k})\) time where \(n\) is the number of vertices in the input graph
  \(G\), \(t\) is the number of colours used by the colouring function \(c\),
  \((k_{1}, k_{2}, \dotsc, k_{t})\) is the colour budget specified in the input,
  and \(k = \coloursum\).
\end{restatable}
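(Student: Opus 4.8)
The plan is to combine the classical bounded-search-tree algorithm for \VC with a dynamic program that enforces the exact per-colour budgets. Since we are in the strict regime $\alpha=\beta=1$, a $\mathbb{T}$-fair vertex cover is precisely a vertex cover $S$ with $\card{\set{v\in S \mid i\in c(v)}}=k_i$ for every colour $i$; note that this forces $\sum_{v\in S}\card{c(v)}=\sum_i k_i=k$ and hence $|S|\le k$ automatically, so only the per-colour equalities need to be tracked. Throughout, I would measure the progress of the search by $k=\coloursum$, which upper-bounds $|S|$.

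\textbf{Branching.} While $G$ has a vertex $v$ of degree at least $3$, I would branch on the two exhaustive possibilities for any vertex cover: either (a) $v\in S$, recursing on $G-v$ after recording $v$ in the partial solution and decrementing the relevant colour budgets, or (b) $N(v)\subseteq S$, recursing on $G-N[v]$ after recording all of $N(v)$. Case (a) adds one vertex to the partial solution and case (b) adds $\deg(v)\ge 3$; a branch is pruned as soon as the partial solution exceeds $k$ or \emph{overshoots} some budget (i.e.\ uses more than $k_i$ vertices of colour $i$), which is sound because vertices are only ever added. The resulting recurrence $T(k)\le T(k-1)+T(k-3)$ has characteristic root $x^3=x^2+1$, giving at most $\cO(1.4656^{k})$ leaves. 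When the rule no longer applies, the graph $H$ on the undecided vertices has maximum degree at most $2$, so it is a disjoint union of paths, cycles, and isolated vertices.

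\textbf{Profile dynamic programming at the leaves.} At a leaf with forced set $S_0$, I would compute the residual budget $r_i=k_i-\card{\set{v\in S_0\mid i\in c(v)}}$ and discard the leaf if any $r_i<0$. It then remains to decide whether $H$ admits a vertex cover $S_H$ whose colour profile equals $\mathbf r=(r_1,\dots,r_t)$ \emph{exactly}; the instance is a \yesinstance iff some leaf realises its $\mathbf r$. For each path/cycle component I would run a left-to-right dynamic program whose state is (current position, membership of the previous vertex, colour-profile accumulated so far, capped coordinatewise by $\mathbf r$), thereby computing the set of colour profiles realisable by a vertex cover of that component; isolated vertices are handled as optional single-vertex components that may be added purely to pad colour counts. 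I would then merge components by a subset-sum-style convolution, maintaining the set of profiles $\mathbf p\le\mathbf r$ realisable over the components processed so far and checking at the end whether $\mathbf r$ itself is realisable.

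\textbf{Running time and main obstacle.} There are at most $\prod_{i=1}^{t}(k_i+1)$ relevant profiles, each manipulated in $\cO(t)$ time; the per-component DPs cost $\cO(n\,t\prod_i k_i)$ in total, while each convolution merge costs $\cO(t\prod_i k_i^{2})$, and together with the $\cO(n^2)$ bookkeeping for degrees and neighbourhood deletions (the graph may have $\Theta(n^2)$ edges) this yields $\cO(n^{2}t\prod_i k_i^{2})$ work per leaf, hence $\cO(n^{2}t(\prod_i k_i^{2})\cdot 1.4656^{k})$ overall. The main obstacle is that the exact colour constraints destroy the greedy/optimal substructure that makes \VC on paths and cycles trivial: one cannot simply take a minimum cover of each component, and must instead enumerate \emph{all} attainable colour profiles. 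Verifying that the per-component DP captures every vertex cover (correctly handling the cyclic wrap-around and the optional isolated vertices) and that the convolution preserves the exact-budget semantics is the crux of the correctness proof.
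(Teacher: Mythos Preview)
Your proposal is correct and follows essentially the same approach as the paper: branch on a vertex of degree at least $3$ (into the ``$v\in S$'' and ``$N(v)\subseteq S$'' cases), obtain the recurrence $T(k)\le T(k-1)+T(k-3)$ with root $\approx 1.4656$, and handle the degree-$\le 2$ base case by a dynamic program over colour profiles.

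The only notable difference is in the base case. The paper observes that a graph of maximum degree $2$ has treewidth at most $2$ and simply invokes its treewidth DP (\autoref{thm:fairVCtreewidth}) as a black box, whereas you spell out a bespoke left-to-right DP on paths and cycles together with a subset-sum convolution over components. Your DP is exactly what the treewidth algorithm degenerates to on these instances (the join-node recurrence in the paper is the same profile convolution you describe), so the two are functionally identical; your version is more self-contained, the paper's is shorter by reuse. One small cosmetic point: in branch~(b) you recurse on $G-N[v]$, while the paper recurses on $G-N(v)$ (leaving $v$ as an isolated vertex); both are sound for the yes/no decision and give the same parameter drop.
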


\begin{restatable}{theorem}{fairVCkernel}
  \label{thm:fairVCkernel}
	There is a polynomial time algorithm which outputs a polynomial kernel for
  \fairvc. If \(G\) is the input graph, \(t\) is the number of colours used by
  the colouring function \(c\), and \((k_{1}, k_{2}, \dotsc, k_{t})\) is the
  colour budget specified in the input, then the size of this kernel is
  quadratic in \(k = \coloursum\).
\end{restatable}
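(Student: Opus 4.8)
The plan is to adapt the classical kernelization for \vc (Buss's rules) to the coloured, budget\hyp{}constrained setting, and then to add one new rule that controls the vertices which are relevant only for meeting the colour budgets exactly. First I would apply a \emph{forced\hyp{}vertex} rule. If a vertex \(v\) has degree larger than \(k = \coloursum\), then \(v\) lies in every vertex cover of size at most \(k\): otherwise all of its more\hyp{}than\hyp{}\(k\) neighbours would have to be taken, exceeding the budget. Hence I can safely delete \(v\) and decrease \(k_i\) by one for every colour \(i \in c(v)\); if some \(k_i\) drops below zero the instance is a \noinstance. Applying this exhaustively (in polynomial time), every surviving vertex has degree at most the current value of \(k\). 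A standard counting bound then follows: a \(\mathbb{T}\)-fair vertex cover has at most \(k\) vertices, each of degree at most \(k\), so it covers at most \(k^{2}\) edges; if the graph has more than \(k^{2}\) edges I report a \noinstance, and otherwise the graph has at most \(k^{2}\) edges and hence at most \(2k^{2}\) non\hyp{}isolated vertices, all of which I keep.

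The remaining difficulty --- and the step I expect to be the main obstacle --- is the isolated vertices (and, more generally, vertices never forced into the cover), of which the input may contain arbitrarily many. These are useless for covering edges but may still be needed to raise each colour count to exactly \(k_{i}\). The key observation that makes this tractable is that in the strict setting \((\alpha = \beta = 1)\) the size constraint comes for free: for any solution \(S\) we have \(\card{S} \le \sum_{v \in S}\card{c(v)} = \sum_{i=1}^{t}\card{\set{v \in S \mid i \in c(v)}} = \sum_{i=1}^{t} k_{i} = k\), so once the colour counts are exactly right the bound \(\card{S} \le k\) holds automatically. Consequently, when topping up the colour budgets we may use as many ``spare'' vertices as we like. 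I would therefore add a rule that, for each colour \(i\), retains only \(\min(k_{i}, n_{i})\) isolated vertices carrying colour \(i\) (where \(n_{i}\) is the number of such vertices), preferring those with the \emph{fewest} colours. As the retained pools have total size at most \(\sum_{i} k_{i} = k\), this leaves at most \(k\) isolated vertices, yielding an instance with \(\cO(k^{2})\) vertices and edges in all.

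The heart of the argument is to prove this last rule sound, i.e.\ that the reduced instance is a \yesinstance if and only if the original is. One direction is immediate, since the rule only deletes vertices with no incident edges, so any solution of the reduced instance is already a solution of the original one. For the converse I would take a \(\mathbb{T}\)-fair vertex cover \(S\) of the original graph and show, by an exchange argument, that the isolated vertices of \(S\) can be replaced by retained isolated vertices realising \emph{exactly} the same per\hyp{}colour contribution \((a_{1},\dots,a_{t})\), with each \(a_{i} \le k_{i}\). Here the automatic size bound is essential: a discarded multi\hyp{}coloured isolated vertex \(v\) can be simulated by a combination of the ``purer'' retained vertices guaranteed by the preference rule, and the resulting increase in the number of vertices never violates any constraint. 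Making this precise --- in particular verifying that the retained per\hyp{}colour pools are simultaneously large enough to serve all colours of every discarded vertex without conflict --- is the technical crux; once it is established, the three rules together produce a kernel with \(\cO(k^{2})\) vertices and edges, i.e.\ of size quadratic in \(k\).
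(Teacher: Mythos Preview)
Your high-degree rule and edge-counting step are fine, but the isolated-vertex rule is unsound as stated, and the exchange argument you flag as ``the technical crux'' cannot be completed. The problem is that in the strict setting the colour constraints are \emph{equalities}, so ``simulating'' a discarded multi-coloured vertex by several purer ones will in general overcount some colours, and there is no slack to absorb this.

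Concretely, take $t=3$, $k_1=k_2=k_3=1$ (so $k=3$), $E(G)=\emptyset$, and vertices $v$ with $c(v)=\{1,2,3\}$ together with two copies each of profiles $\{1,2\}$, $\{2,3\}$, $\{1,3\}$. The original instance is a \yes-instance via $S=\{v\}$. Your rule keeps, for each colour $i$, one vertex carrying colour $i$ with fewest colours; since the two-coloured vertices are strictly preferred over $v$, the vertex $v$ is discarded. In the reduced instance every remaining vertex has exactly two colours, so for any $S$ we have $c_1(S)+c_2(S)+c_3(S)=2|S|$, which is even, whereas we need it to equal $3$. Hence the reduced instance is a \no-instance. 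This breaks soundness; no amount of tie-breaking by ``fewest colours'' can rescue it, because $v$'s colour profile is simply not representable as a sum of the retained profiles.

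The paper avoids this by bucketing isolated vertices not per \emph{colour} but per \emph{colour profile} $X\subseteq[t]$, keeping up to $k_{\max}$ isolated vertices for each nonempty $X$. The safeness argument is then a trivial one-for-one swap within the same profile, with no colour counts disturbed. The price is an additive $2^{t}\cdot k_{\max}$ term in the vertex bound, which is why the paper states the kernel only for a constant number of colours; your claimed bound with no dependence on $t$ was already a warning sign.
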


\sloppy

\begin{restatable}{theorem}{fairVCtreewidth}
  \label{thm:fairVCtreewidth}
	Given a graph \(G\), a nice tree decomposition \((\mathcal{T}, (B_x)_{x \in
    V(\mathcal{T})})\) of \(G\) of width \twidth\(~\)and with \(l\) nodes, a
  colouring function \(c\), where \(t\) is the number of colours used by the
  colouring function \(c\), and \((k_{1}, k_{2}, \dotsc, k_{t})\) is the colour
  budget specified in the input, there is an algorithm which solves \fairvc in
  time \(O(tl {\twidth}^2 2^{\twidth + 1} \prod_{i = 1} ^t k_i ^2)\).
\end{restatable}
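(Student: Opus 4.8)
The plan is to design a bottom-up dynamic programming algorithm over the given nice tree decomposition, extending the textbook DP for \VC by carrying along the per-colour counts of the partial solution. For a node \(x\) of \(\mathcal{T}\) let \(V_x\) be the union of all bags in the subtree rooted at \(x\) and \(G_x = G[V_x]\); for a vertex \(v\) let \(\vec\chi_v \in \set{0,1}^t\) be the indicator vector of its colour set, i.e.\ \((\vec\chi_v)_i = 1\) iff \(i \in c(v)\), and for \(S \subseteq V(G)\) write \(\vec\chi_S = \sum_{v \in S}\vec\chi_v\). The DP table will be indexed by a triple \((x, S, \vec a)\), where \(S \subseteq B_x\) records which bag vertices are taken into the cover and \(\vec a = (a_1, \dots, a_t)\) with \(0 \le a_i \le k_i\) records how many solution vertices carry each colour. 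The entry \(D[x, S, \vec a]\) will be \textsf{true} precisely when \(G_x\) admits a vertex cover \(U\) with \(U \cap B_x = S\) and \(\vec\chi_U = \vec a\). Since a vertex may carry several colours I track the whole vector rather than a single size; crucially, once every coordinate equals its target \(k_i\) we get \(\card{U} \le \sum_i a_i = \sum_i k_i = k\) automatically (every vertex has at least one colour), so the size bound needs no separate bookkeeping and the table stays Boolean.

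First I would specify the four transitions. At a \emph{leaf} (empty bag) only \(D[x, \emptyset, \vec 0]\) is \textsf{true}. At an \emph{introduce}-\(v\) node with child \(y\): if \(v \in S\) then \(D[x,S,\vec a] = D[y, S\setminus\set{v}, \vec a - \vec\chi_v]\) (only when \(\vec a - \vec\chi_v \ge \vec 0\)); if \(v \notin S\) then the entry equals \(D[y, S, \vec a]\) provided every neighbour of \(v\) inside \(B_x\) lies in \(S\), and is \textsf{false} otherwise. At a \emph{forget}-\(v\) node, \(D[x,S,\vec a] = D[y,S,\vec a] \lor D[y, S\cup\set{v}, \vec a]\), the disjuncts recording whether the forgotten \(v\) was excluded or included (its colours are already folded into \(\vec a\)). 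At a \emph{join} node with children \(y,z\) sharing the bag, \(D[x,S,\vec a]\) is \textsf{true} iff there are \(\vec b, \vec b'\) with \(\vec b + \vec b' - \vec\chi_S = \vec a\) and both \(D[y,S,\vec b]\) and \(D[z,S,\vec b']\) \textsf{true}, the subtraction of \(\vec\chi_S\) undoing the double count of the shared bag vertices. The answer is read at the root \(r\) (empty bag) as \(D[r, \emptyset, (k_1, \dots, k_t)]\).

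Correctness follows by induction over \(\mathcal{T}\) in the usual fashion, checking at each node type that the recurrence characterises exactly those partial covers that extend to the prescribed colour profile; the only new ingredient over the classical \VC DP is the colour vector, and each transition updates it faithfully. For the running time, a node holds \(2^{\twidth+1}\prod_{i=1}^t(k_i+1)\), i.e.\ \(\cO(2^{\twidth+1}\prod_i k_i)\), Boolean entries, and leaf, introduce, and forget nodes are filled in time linear in this count up to an \(\cO(t + \twidth)\) factor for the vector arithmetic and the in-bag neighbourhood test.

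The main obstacle, and the source of the dominant factor, is the \emph{join} node: computing the entries for a fixed bag subset \(S\) requires ranging over all pairs \((\vec b, \vec b')\) of child colour vectors, of which there are \(\bigl(\prod_i(k_i+1)\bigr)^2 = \cO\bigl((\prod_i k_i)^2\bigr) = \cO(\prod_i k_i^2)\). Summing over the \(\cO(l)\) nodes, the \(2^{\twidth+1}\) bag subsets, and an \(\cO(t\,{\twidth}^2)\) overhead for the per-pair vector manipulations and bag-set validations gives the claimed running time \(\cO(t\,l\,{\twidth}^2\,2^{\twidth+1}\prod_{i=1}^t k_i^2)\). The delicate part is verifying that the join recurrence is both sound and complete — that the partition \(U = (U\cap V_y) \cup (U\cap V_z)\) with \((U\cap V_y)\cap(U\cap V_z) = S\) is exactly matched by the identity \(\vec a = \vec b + \vec b' - \vec\chi_S\), so that no valid partial cover is missed and none is spuriously created — and confirming that this naive enumeration of splits, rather than a faster subset-convolution-style combination, already fits within the stated budget.
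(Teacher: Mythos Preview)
Your proposal is correct and takes essentially the same approach as the paper: a Boolean DP over the nice tree decomposition indexed by the bag intersection \(S\) and a per-colour count vector, with the join handled by enumerating splits of the colour vector at cost \(\cO\bigl(\prod_i k_i^2\bigr)\) per bag subset. The only noteworthy difference is that the paper's nice tree decomposition (as defined in its preliminaries) includes \emph{introduce-edge} nodes, so the edge-coverage check is performed at those nodes rather than at introduce-vertex nodes as in your version; this is a standard and equivalent variant of the same DP and does not affect the argument or the running time.
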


\fussy

\begin{restatable}{theorem}{fairFVStreewidth}
  \label{thm:fairFVStreewidth}
	Given a graph \(G\) with \(n\) vertices, a nice tree decomposition \break
	\((\mathcal{T},(B_x)_{x \in V(\mathcal{T})})\) 
  of \(G\) of width \twidth\(~\)and with \(l\)
  nodes, a colouring function \(c\), where \(t\) is the number of colours used
  by the colouring function \(c\), and \((k_{1}, k_{2}, \dotsc, k_{t})\) is the
  colour budget specified in the input, there is an algorithm which solves
  \ffvsfull\(~\)in time \(n^{O(1)} 2^{O(\twidth)}\).
\end{restatable}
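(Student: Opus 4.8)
The plan is to solve \ffvsfull\ by dynamic programming over the supplied nice tree decomposition \((\mathcal{T}, (B_x)_{x \in V(\mathcal{T})})\), obtained by grafting fairness bookkeeping onto a treewidth-based algorithm for the uncoloured \FVS\ problem. The principal difficulty, as always for \FVS\ (in contrast with \VC), is not enforcing the size bound but maintaining \emph{acyclicity}: a partial solution must be extensible so that the surviving vertices induce a forest, and the naive dynamic program that records the partition of a bag's surviving vertices into connected components spends \(\twidth^{\cO(\twidth)}\) time per node. To bring this down to the \(2^{\cO(\twidth)}\) dependence demanded by the theorem, I would use the Cut \& Count technique (which replaces explicit partition tracking by counting consistently cut candidate solutions modulo a prime and gives a randomized single-exponential algorithm), or, for a deterministic bound, a rank-based / representative-sets treatment of the connectivity states.

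First I would fix the state space at a node \(x\). For each subset \(Y \subseteq B_x\) of bag vertices declared \emph{kept} (not deleted), for each piece of connectivity/cut data required by the chosen single-exponential method, and---this is the new ingredient---for each vector \(\mathbf{c} = (c_1,\dots,c_t)\) with \(0 \le c_i \le \beta k_i\), I would record whether there is a partial solution on the vertices below \(x\) that is consistent with \(Y\) and the connectivity data and that has placed exactly \(c_i\) vertices of colour \(i\) into the solution so far. Because a vertex may carry several colours, deleting a vertex \(v\) increments \(c_i\) for every \(i \in c(v)\) at once, so the count vector is always updated as a whole.

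Next I would specify the four bag operations. Leaf nodes start from the empty solution with \(\mathbf{c} = \mathbf{0}\). At an introduce node the new vertex is branched into ``kept'' or ``deleted''; deleting it advances \(\mathbf{c}\) by the indicator of \(c(v)\) and consumes size budget, while keeping it updates the connectivity data and rejects any acyclicity violation. Forget nodes project out the forgotten vertex, retaining only states whose surviving part stays a forest in the sense of the underlying method. The join node is the delicate one: two child states merge only when their kept-sets agree on \(B_x\) and their connectivity data are compatible, and the merged vector is the sum of the two child vectors \emph{minus} the contribution of the deleted vertices lying in \(B_x\) itself, so that shared-bag vertices are not double-counted. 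Throughout, any coordinate exceeding its cap \(\beta k_i\) and any total size exceeding \(k = \coloursum\) causes the state to be discarded.

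Finally, at the root I would accept iff some reachable state encodes a genuine feedback vertex set whose vector \(\mathbf{c}\) satisfies \(\alpha k_i \le c_i \le \beta k_i\) for every \(i\) and whose size is at most \(k\). The per-node cost is the \(2^{\cO(\twidth)}\) cost of the connectivity layer times the number of tracked count-and-size configurations, which is polynomially bounded in the budgets and hence folds into the \(n^{\cO(1)}\) factor; multiplying by the number \(l\) of nodes yields the claimed \(n^{\cO(1)} 2^{\cO(\twidth)}\) bound. I expect the main obstacle to be the clean integration of the colour-count vectors with the Cut \& Count (or rank-based) connectivity layer: in particular, performing the join-time subtraction of bag contributions consistently with whatever representation of acyclicity the single-exponential method uses, since that method already manipulates the surviving vertices in a nontrivial, count-sensitive way.
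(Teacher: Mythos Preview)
Your plan is correct and follows the same high-level route as the paper: graft a colour-count vector onto a single-exponential treewidth DP for \FVS, with the connectivity layer supplied by one of the known techniques. The paper commits specifically to the deterministic rank-based method of Bodlaender et al.\ (not Cut \& Count), and realises it via the standard device of adjoining a universal vertex $v_0$ and searching for a pair $(X,X_0)$ so that the kept vertices together with $v_0$ span a tree; acyclicity is then enforced by tracking vertex and edge counts $(j_1,j_2)$ and requiring $j_2 = j_1 - 1$ at the end. Accordingly, the paper's colour counts $(r_i)$ record the colours of the \emph{kept} vertices, whereas you track the colours of the deleted solution; the two are dual and equally valid, and your join-time subtraction over deleted bag vertices corresponds exactly to the paper's over kept bag vertices. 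One small slip: the statement concerns \ffvsfull\ (the exact variant), so your caps should be $k_i$ rather than $\beta k_i$, and acceptance at the root requires $c_i = k_i$ exactly rather than an $(\alpha,\beta)$ window.
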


\begin{restatable}{theorem}{fairFVSfpt}
  \label{thm:fairFVSfpt}
	\ffvsfull\(~\)can be solved in \(n^{O(1)} 2^{O(k)}\)
  time where \(n\) is the number of vertices in the input graph
  \(G\), \(t\) is the number of colours used by the colouring function \(c\),
  \((k_{1}, k_{2}, \dotsc, k_{t})\) is the colour budget specified in the input,
  and \(k = \coloursum\).
\end{restatable}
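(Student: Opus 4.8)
The plan is to reduce the problem to the bounded-treewidth case already handled by Theorem~\ref{thm:fairFVStreewidth}, by showing that every \yesinstance has treewidth $\cO(k)$. The key observation is that every $\mathbb{T}$-fair feedback vertex set is, in particular, an ordinary feedback vertex set of $G$ of size at most $k = \coloursum$: dropping the colour constraints only enlarges the family of admissible sets, and the bound $|S| \le k$ is already part of the definition of a fair solution. Consequently, if $G$ has no feedback vertex set of size at most $k$, then $(G,c,\mathbb{T})$ is a \noinstance.

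First I would run a standard FPT algorithm for (uncoloured) \FVS parameterised by solution size (running in $2^{\cO(k)} n^{\cO(1)}$ time) to either produce a feedback vertex set $F$ with $|F| \le k$ or correctly report that none exists; in the latter case we output \no by the observation above. In the former case, since $G - F$ is a forest and a forest admits a nice tree decomposition of width $1$ with $\cO(n)$ nodes, inserting all of $F$ into every bag yields a tree decomposition of $G$ of width at most $|F| + 1 \le k + 1$ --- this is exactly the bound, recalled earlier, that \twidth{} is at most one more than the feedback vertex set number. I would then convert this into a nice tree decomposition $(\mathcal{T}, (B_x)_{x \in V(\mathcal{T})})$ of width $\twidth \le k+1$ with $\cO(k\,n)$ nodes in polynomial time via the usual transformation.

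Next I would invoke the algorithm of Theorem~\ref{thm:fairFVStreewidth} on $(G,c,\mathbb{T})$ together with this nice tree decomposition. Its running time is $n^{\cO(1)} 2^{\cO(\twidth)}$, so substituting $\twidth \le k+1$ gives $n^{\cO(1)} 2^{\cO(k)}$. Since the treewidth dynamic program also maintains, for each colour $i$, the exact number of chosen vertices of that colour (a count in $\{0,1,\dotsc,k_{i}\}$), one must check that the resulting factor $\prod_{i=1}^{t}(k_{i}+1)$ does not spoil the bound; this is harmless because $k_{i}+1 \le 2^{k_{i}}$ for every $k_{i} \ge 0$, whence $\prod_{i=1}^{t}(k_{i}+1) \le 2^{\sum_{i=1}^{t} k_{i}} = 2^{k}$, which is $2^{\cO(\twidth)}$ in the regime $\twidth \le k+1$ and is therefore already absorbed into the stated running time. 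Summing the two phases yields a total of $n^{\cO(1)} 2^{\cO(k)}$, as required.

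The individual steps are routine, so the part I expect to need genuine care --- and the main obstacle in a fully formal write-up --- is the bookkeeping of this last phase: verifying that Theorem~\ref{thm:fairFVStreewidth} is being applied to a legitimately nice tree decomposition of width $\cO(k)$, and that every auxiliary factor coming from the colour budgets (notably $\prod_{i}(k_{i}+1)$) is genuinely $2^{\cO(k)}$ rather than merely $n^{\cO(t)}$. Everything else --- the reduction to \FVS, the forest-plus-$F$ tree decomposition, and the equivalence between a fair solution and a size-$\le k$ feedback vertex set meeting the colour constraints --- follows immediately from the definitions and standard tools.
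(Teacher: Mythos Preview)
Your proposal is correct and follows essentially the same approach as the paper: find a small feedback vertex set to certify treewidth $\cO(k)$, then invoke Theorem~\ref{thm:fairFVStreewidth}. The only minor difference is that the paper uses a polynomial-time $2$-approximation for \FVS\ (yielding $\twidth \le 2k+1$) rather than an exact FPT algorithm, so its first phase is polynomial rather than $2^{\cO(k)}n^{\cO(1)}$; this does not affect the overall bound, and your extra bookkeeping about $\prod_i(k_i+1)$ is unnecessary since Theorem~\ref{thm:fairFVStreewidth} already states its runtime as $n^{\cO(1)}2^{\cO(\twidth)}$.
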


From the above we obtain the following results.

\begin{restatable}{theorem}{vcAlphaBetaThm}\label{thm:vc-alpha-beta}
	For every fixed $\alpha, \beta$, the $(\alpha, \beta) \mbox{-} \mathbb{T}$-{\sc Fair VC} problem can be solved in time $\cO(n^{2}t(\prod_{i=1}^{t}k_{i}^{2}((\beta - \alpha) k_i + 1))\cdot 1.4656^{k})$.
\end{restatable}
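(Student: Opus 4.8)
The plan is to reduce \(\mathrm{(\alpha,\beta)}\mbox{-}\mathbb{T}\)-{\sc Fair VC} to a bounded number of instances of the strict \fairvc problem and then invoke Theorem~\ref{thm:fairVCfpt}. The key observation is that any \((\alpha,\beta)\mbox{-}\mathbb{T}\)-fair vertex cover \(S\) realizes, for each colour \(i\), a specific integer number of colour-\(i\) vertices, namely \(m_i := \card{\{v \in S : i \in c(v)\}}\); fairness requires only that this integer lie in the interval \([\lceil \alpha k_i\rceil, \lfloor \beta k_i\rfloor]\). Since that interval contains at most \((\beta-\alpha)k_i + 1\) integers, the number of candidate \emph{target tuples} \(\vec m = (m_1,\dots,m_t)\) that a fair solution could realize is at most \(\prod_{i=1}^{t}((\beta-\alpha)k_i+1)\).

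First I would enumerate all these target tuples. For a fixed tuple \(\vec m\), deciding whether \(G\) has a vertex cover \(S\) with \(\card{S} \le k\) and \emph{exactly} \(m_i\) vertices of colour \(i\) for every \(i\) is essentially an instance of strict \fairvc with colour budget \(\vec m\), with one difference: the global size bound is the original \(k = \coloursum\) rather than \(\sum_{i} m_i\). Here I would run the algorithm of Theorem~\ref{thm:fairVCfpt} with its vertex-cover branching truncated at depth \(k\) (equivalently, track the partial-solution size and prune any branch that exceeds \(k\)), while the colour counters are allowed to range up to \(\beta k_i\) and the acceptance test checks equality with \(m_i\). Correctness is then immediate: \(G\) admits an \((\alpha,\beta)\mbox{-}\mathbb{T}\)-fair vertex cover if and only if for at least one enumerated tuple \(\vec m\) the corresponding strict, size-\(k\)-bounded instance is a yes-instance. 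Indeed, a fair solution induces a target tuple lying in the enumerated set, and conversely any strict solution for such a tuple satisfies \(\alpha k_i \le m_i \le \beta k_i\) and \(\card{S}\le k\), hence is fair.

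For the running time, the enumeration contributes the factor \(\prod_{i=1}^{t}((\beta-\alpha)k_i+1)\), the number of target tuples. Each individual run is the algorithm of Theorem~\ref{thm:fairVCfpt}: the colour bookkeeping (now tracking counts up to \(\beta k_i\), which is \(\cO(k_i)\) for fixed \(\beta\)) contributes \(\prod_{i} k_i^2\), the polynomial overhead contributes \(n^2 t\), and, crucially, the branching factor stays \(1.4656^{k}\) because the branching depth is capped at \(k\). Multiplying the per-tuple cost \(\cO(n^2 t (\prod_{i} k_i^2)\cdot 1.4656^{k})\) by the number of tuples yields the claimed bound \(\cO(n^2 t (\prod_{i=1}^{t} k_i^2((\beta-\alpha)k_i+1))\cdot 1.4656^{k})\).

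The step I expect to be the main obstacle is precisely controlling the exponential factor. In the strict setting the global size budget and the sum of per-colour budgets coincide, so the branching automatically runs to depth \(\sum_i k_i = k\); but in the \((\alpha,\beta)\) setting a target tuple can have \(\sum_i m_i\) as large as \(\beta k\), so running the strict algorithm verbatim on \(\vec m\) would give \(1.4656^{\sum_i m_i}\), i.e. potentially \(1.4656^{\beta k}\). Keeping the exponent at \(1.4656^{k}\) therefore hinges on enforcing the \emph{original} global bound \(\card{S}\le k\) independently of the per-colour targets, and I would verify that the branching underlying Theorem~\ref{thm:fairVCfpt} depends only on the solution-size bound and can be truncated at \(k\) without disturbing the per-colour counting. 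The remaining points---absorbing the constant-factor overhead (for fixed \(\alpha,\beta\)) from counters ranging up to \(\beta k_i\), and confirming that \(m_i\) is integral so the interval count is exactly \((\beta-\alpha)k_i+1\)---are routine.
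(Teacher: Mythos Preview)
Your approach is essentially the same as the paper's: enumerate the at most \(\prod_{i}((\beta-\alpha)k_i+1)\) target tuples \(\mathbb{T}'\) and solve the strict \fairvc instance for each. The paper's proof is in fact terser than yours and does not explicitly address the point you flag about keeping the exponent at \(1.4656^{k}\) rather than \(1.4656^{\beta k}\); your observation that the branching recurrence \(T(s)=T(s-1)+T(s-3)\) can be driven by the global size budget \(\card{S}\le k\) (independently of \(\sum_i m_i\)) is exactly what is needed to justify the stated bound, and is a genuine clarification over the paper's argument.
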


\begin{restatable}{theorem}{fvsAlphaBetaThm}\label{thm:vc-alpha-beta}
	For every fixed $\alpha, \beta$, the $(\alpha, \beta) \mbox{-} \mathbb{T}$-{\sc Fair FVS} problem can be solved in time $\prod_{i=1}^{t} ((\beta - \alpha) k_i + 1) \cdot n^{O(1)} 2^{O(k)}$.
\end{restatable}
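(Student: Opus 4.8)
The plan is to reduce the $(\alpha,\beta)\text{-}\mathbb{T}$-fair problem, by brute force over per-colour counts, to a bounded number of instances of the strict $\mathbb{T}$-fair problem, each of which is then settled by Theorem~\ref{thm:fairFVSfpt}. The key observation is that in any $(\alpha,\beta)\text{-}\mathbb{T}$-fair feedback vertex set $S$, the number of vertices of each colour $i$ placed in $S$ is some \emph{integer} $k_i'$ lying in the interval $[\lceil \alpha k_i\rceil,\ \lfloor \beta k_i\rfloor]$. Hence, if we knew these exact counts $k_1',\dots,k_t'$ in advance, finding $S$ would amount to solving the strict $\mathbb{T}'$-fair FVS problem for the tuple $\mathbb{T}' = (k_1',\dots,k_t')$, in which colour $i$ must appear \emph{exactly} $k_i'$ times. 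First I would therefore enumerate every candidate tuple $\mathbb{T}'$ obtained by choosing, independently for each colour $i$, an integer $k_i' \in \{\lceil \alpha k_i\rceil,\dots,\lfloor \beta k_i\rfloor\}$, and for each such $\mathbb{T}'$ run the algorithm of Theorem~\ref{thm:fairFVSfpt}; the overall answer is \yes if and only if at least one of these calls returns \yes.

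Correctness follows from the equivalence just noted: $S$ is an $(\alpha,\beta)\text{-}\mathbb{T}$-fair solution if and only if, setting $k_i' = |\{v\in S : i \in c(v)\}|$, the tuple $\mathbb{T}'=(k_1',\dots,k_t')$ satisfies $\lceil\alpha k_i\rceil \le k_i' \le \lfloor\beta k_i\rfloor$ for every $i$ and $S$ is a strict $\mathbb{T}'$-fair FVS. For the running time, the number of candidate tuples is
\[
  \prod_{i=1}^{t}\bigl(\lfloor\beta k_i\rfloor - \lceil\alpha k_i\rceil + 1\bigr)
  \;\le\; \prod_{i=1}^{t}\bigl((\beta-\alpha)k_i + 1\bigr),
\]
and, because $\alpha$ and $\beta$ are fixed, each tuple $\mathbb{T}'$ satisfies $\sum_{i=1}^{t} k_i' \le \beta \sum_{i=1}^{t} k_i = \beta k = O(k)$, so every invocation of Theorem~\ref{thm:fairFVSfpt} runs in time $n^{O(1)}2^{O(\beta k)} = n^{O(1)}2^{O(k)}$. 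Multiplying the number of calls by the cost per call yields the claimed bound $\prod_{i=1}^{t}\bigl((\beta-\alpha)k_i+1\bigr)\cdot n^{O(1)}2^{O(k)}$.

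The one point that needs care — and the step I expect to be the main obstacle — is the \emph{global} budget constraint $|S|\le k$ with $k=\sum_{i=1}^{t}k_i$, which must hold even when $\beta>1$. Requiring exactly $k_i'$ vertices of each colour only guarantees $|S|\le \sum_{i=1}^{t}k_i'$ (since every vertex carries at least one colour), and when $\beta>1$ this sum can exceed $k$; so the reduction is not literally to the strict problem as stated, whose size bound equals $\sum_i k_i'$. I would resolve this by noting that when $\sum_i k_i' \le k$ the constraint $|S|\le k$ is automatic, and otherwise by feeding the size bound $k$ (rather than $\sum_i k_i'$) to the dynamic program underlying Theorem~\ref{thm:fairFVSfpt}: that DP already tracks the size of the partial solution along the tree decomposition, so capping it at $k$ costs only an additional polynomial factor and leaves the stated asymptotics untouched. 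Once this bookkeeping is in place, the per-colour exact counts and the global cap are enforced simultaneously, and the correctness and running-time analyses above go through unchanged.
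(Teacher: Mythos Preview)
Your approach is essentially the same as the paper's: enumerate all tuples $\mathbb{T}'=(k'_1,\dots,k'_t)$ with $\alpha k_i\le k'_i\le \beta k_i$ and invoke the strict $\mathbb{T}'$-fair FVS algorithm on each, taking the disjunction. The paper states this reduction once (for general $\Pi$ and for VC) and then simply asserts that the FVS theorem follows analogously, with the per-call cost $g(\beta k_1,\dots,\beta k_t,n)=n^{O(1)}2^{O(\beta k)}=n^{O(1)}2^{O(k)}$. You have in fact gone slightly beyond the paper by flagging the global size cap $|S|\le k$ when $\beta>1$; the paper's write-up glosses over this point, and your proposed fix (carrying the cap $k$ through the treewidth DP, which already tracks $|X|$ via the index $j_1$) is exactly the right patch and costs nothing asymptotically.
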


More generally, for any classic computation problem $\Pi$ such that {\sc $\mathbb T$-Fair $\Pi$} has an algorithm with runtime $g(k_1, k_2,\ldots, k_t, n)$, we have the following.
\begin{restatable}{theorem}{alphaBetaThm}\label{thm:alpha-beta}
	For every fixed $\alpha, \beta$, the $(\alpha, \beta) \mbox{-} \mathbb{T}$-{\sc Fair $\Pi$} problem can be solved in time $\Pi_{i \in [t]} ((\beta - \alpha) k_i + 1)) \cdot g(\beta k_1, \beta k_2,\ldots, \beta k_t, n)$, where $g(k_1, k_2,\ldots, k_t, n)$ is the runtime of an algorithm for $\mathbb{T}$-{\sc Fair $\Pi$}. 
\end{restatable}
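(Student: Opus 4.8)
The plan is to reduce, by brute-force enumeration over admissible colour profiles, the $(\alpha,\beta)$-$\mathbb{T}$-fair problem to polynomially-in-budget-many instances of the strict $\mathbb{T}$-fair problem, and then to invoke the assumed algorithm (with running time $g$) as a black box on each of them. For a candidate set $S$ write $n_i(S) = |\{v \in S : i \in c(v)\}|$ for its number of occurrences of colour $i$. The defining requirement of an $(\alpha,\beta)$-$\mathbb{T}$-fair solution is that each $n_i(S)$ is an integer lying in $[\alpha k_i, \beta k_i]$, together with $|S|\le k$ and $S$ being a solution to $\Pi$. I would therefore enumerate every target tuple $\mathbb{T}' = (k_1',\dots,k_t')$ with $k_i' \in \{\lceil \alpha k_i\rceil, \dots, \lfloor \beta k_i \rfloor\}$, run the given algorithm for (strict) $\mathbb{T}'$-fair $\Pi$ on each, and answer \yes precisely when at least one call does.

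For correctness I would establish the equivalence: $(G,c)$ admits an $(\alpha,\beta)$-$\mathbb{T}$-fair solution if and only if, for some enumerated $\mathbb{T}'$, it admits a strict $\mathbb{T}'$-fair solution. The forward direction is immediate, since an $(\alpha,\beta)$-fair $S$ has profile $\mathbb{T}'=(n_1(S),\dots,n_t(S))$ which is admissible and makes $S$ strictly $\mathbb{T}'$-fair. Conversely, a strict $\mathbb{T}'$-fair $S$ meets each per-colour constraint exactly, hence $n_i(S)=k_i'\in[\alpha k_i,\beta k_i]$. The one point needing care is the global bound $|S|\le k$: because a vertex may carry several colours, the identity $\sum_i n_i(S)=\sum_{v\in S}|c(v)|\ge |S|$ gives $|S|\le \sum_i k_i'$ automatically in the strict problem, but $\sum_i k_i'$ may exceed the original budget $k=\coloursum$ when $\beta>1$. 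I would close this gap by having the strict subroutine additionally enforce the cap $|S|\le k$, which is a trivial modification for \VC and \FVS (and, more generally, for any $\Pi$ whose solution predicate can be verified against a size cap) that incurs no asymptotic overhead; the forward direction still goes through because the witness $S$ itself has size at most $k$.

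The running-time bound is then a routine count. For each colour $i$ the number of integer choices for $k_i'$ in $[\alpha k_i,\beta k_i]$ is at most $\lfloor\beta k_i\rfloor-\lceil\alpha k_i\rceil+1\le (\beta-\alpha)k_i+1$, so the number of enumerated tuples is at most $\prod_{i\in[t]}((\beta-\alpha)k_i+1)$. Each tuple satisfies $k_i'\le\beta k_i$, so assuming (as is standard for a running-time function) that $g$ is non-decreasing in each colour argument, a single call costs at most $g(\beta k_1,\beta k_2,\dots,\beta k_t,n)$. Multiplying yields the claimed $\prod_{i\in[t]}((\beta-\alpha)k_i+1)\cdot g(\beta k_1,\dots,\beta k_t,n)$ bound, with only polynomial overhead for generating the tuples and combining answers.

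I expect the main obstacle to be not the arithmetic but the exactness of the correspondence, and specifically the interaction of the global size bound with multi-coloured vertices described above; once the strict subroutine is set up to respect $|S|\le k$, the remainder is the enumeration-and-monotonicity argument. Finally, the two stated corollaries fall out by instantiating $g$: plugging the running time of Theorem~\ref{thm:fairVCfpt} in for $g$ gives the $(\alpha,\beta)$-\VC bound, and Theorem~\ref{thm:fairFVSfpt} gives the corresponding \FVS bound.
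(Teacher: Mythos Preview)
Your approach is essentially the same as the paper's: enumerate all tuples $\mathbb{T}'=(k_1',\dots,k_t')$ with $k_i'\in[\alpha k_i,\beta k_i]$, invoke the strict $\mathbb{T}'$-fair subroutine on each, and bound the number of tuples by $\prod_{i\in[t]}((\beta-\alpha)k_i+1)$. The paper's proof of the general statement is in fact just a one-line remark following the proof of the \textsc{Vertex Cover} case, so your write-up is more detailed than what appears there.

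One point worth noting: you are actually more careful than the paper about the global size constraint $|S|\le k$. The paper's preceding observation asserts that $S$ is $(\alpha,\beta)$-$\mathbb{T}$-fair iff it is $\mathbb{T}'$-fair for some $\mathbb{T}'$ with $\sum_i k_i'\le k$; but as you implicitly notice, when vertices carry several colours one has $\sum_i k_i'=\sum_i c_i(S)=\sum_{v\in S}|c(v)|$, which can exceed $|S|$ and hence $k$, so that side condition need not hold for the witnessing tuple. Your fix of having the strict subroutine additionally enforce $|S|\le k$ is the clean way to close this, and costs nothing for the concrete algorithms in the paper.
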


\section{Related Work}\label{relatedwork}

We remark that in this paper we study a very simple model of algorithmic bias that easily extends to very broad classes of optimization problems and algorithms, e.g. problems that can be modeled by LPs or CSPs. This leads to a new class of optimization problems, and our objective here is to initiate the study of this class of problems, by showing that it is possible to obtain non-trivial algorithms for them.

Some of these works gave FPT algorithms combinatorial problems under certain notions of fairness:
Parameterized complexity of fair feedback vertex set problem
\cite{KANESH20211},
parameterized complexity of fair deletion problems
\cite{10.1007/978-3-319-55911-7_45},
parameterized complexity of fair vertex evaluation problems~\cite{knop2019parameterized}.
We do not attempt to summarize such a large body of work here. However, we must acknowledge that  we were motivated by these works and have attempted to define a very simple notion of algorithmic fairness that can be applied to a broad class of problems, and how we can address the unintended bias hidden in widely-used and standard algorithmic techniques (e.g. the high-degree rule for \vc).

\section{Preliminaries}\label{sec:preliminaries}

We use \(2^{X}\) to denote the power set of a set \(X\) and \(\lvert X \rvert\) to denote its cardinality. For \(r \in
\positives\) we use \([r]\) to denote the set \(\{1,2,\dotsc,r\}\). All our
graphs are finite, simple and undirected, unless specified otherwise. We use
\(V(G)\) and \(E(G)\), respectively, to denote the vertex and edge sets of a
graph \(G\). For a graph \(G\) and vertex \(v\in V(G)\) we use (i) \(N(v)\) to
denote the \emph{open neighbourhood} \(\{w \in (V(G) \setminus
\{v\})\;\mid\;(v,w)\in E(G)\}\) of \(v\), and (ii) \(N[v]\) to denote the
\emph{closed neighbourhood} \((\{v\} \cup N[v])\) of \(v\). The \emph{degree} of a vertex \(v\) of \(G\), denoted by \(deg(v)\) is the number \(\lvert N(v) \rvert\), i.e., the number of vertices in \(G\) which are adjacent to \(v\).

We use standard notations from graph theory. For an integer \(t\in\positives\) and a graph \(G\), a \emph{\(t\)-colouring
  function of \(G\)} is any function \(c: V \to (2^{[t]} \setminus
\{\emptyset\})\) that assigns at least one ``colour'' from the set \([t]\) to
each vertex of \(G\). A \emph{\(t\)-coloured graph} is a pair \((G,c)\) where
\(c\) is a \(t\)-colouring function of \(G\). For a \(t\)-coloured graph
\((G,c)\), vertex \(v\in V(G)\), and \(i\in[t]\), we use \(N_{i}(v)\) to denote
the \emph{colour-\(i\) neighbourhood} \(\{w \in N(v)\;\mid\;i \in c(w) \}\) of
the set of all neighbours of \(v\) which have been assigned the colour \(i\) by
\(c\). For a subset \(S \subseteq V(G)\) of vertices we use \(c_{i}(S)\) to
denote the \emph{number} \(\lvert \{v \in S | i \in c(v) \} \rvert\) of vertices
in \(S\) which have been assigned the colour \(i\) by \(c\). For a \(t\)-tuple
of integers \(\ttuple = (k_{1}, k_{2},\dots,k_{t})\) we say that \(S \subseteq
V(G)\) is \emph{\ttuple-fair} if for each \(i \in [t]\) it is the case that
\(c_{i}(S) = k_{i}\) holds: the number of vertices in \(S\) which have the
colour \(i\), is exactly \(k_{i}\).
For a fixed finite alphbet \(\Sigma\), a \textit{paramaterised problem} is a language \(L \subseteq \Sigma^* \times \mathbb{N}\). For an instance \((x, k) \in \Sigma^* \times \mathbb{N}\), \(k\) is called the \textit{parameter}. A parameterised problem \(L \subset of \Sigma^* \times \mathbb{N}\) is called \textit{fixed parameter tractable} if there exists an algorithm, say \(\mathcal{A}\), a computable function \(f : \mathbb{N} \rightarrow \mathbb{N}\), and a constant \(c\) such that, given \((x, k) \in \Sigma^* \times \mathbb{N}\), the algorithm correctly decides whether \((x, k) \in L\) in time bounded by \( \lvert (x , k) \rvert ^c f(k)\). The complexity class of all fixed-parameter tractable problems is called FPT.

We now move on to the notion of \textit{tree decomposition} of a graph. A tree decomposition of a graph \(G\) is a pair \((\mathcal{T}, (B_x)_{x \in V(\mathcal{T})})\), where \(\mathcal{T}\) is a tree whose every node \(x\) is assigned a vertex subset \(B_x \subseteq V(G)\), called a bag, such that the following conditions hold:

\begin{itemize}
	\item \(\bigcup_{x \in V(\mathcal{T})} B_x = V(G)\). In other words, every vertex of \(G\) is in at least one bag.
	\item For every edge \(uv \in E(G)\), there exists a node \(x\) of \(\mathcal{T}\) such that the bag \(B_x\) contains both \(u\) and \(v\).
	\item For every \(u \in V(G)\), the set \(\mathcal{T}_u = \{x \in V(\mathcal{T}) : u \in B_x\}\), i.e., the set of nodes whose corresponding bags contain \(u\), induces a connected subtree of \(\mathcal{T}\).
\end{itemize}

 The \textit{width} of a tree decomposition \((\mathcal{T}, (B_x)_{x \in V(\mathcal{T})})\) equals 

\(\text{max}_{x \in V(\mathcal{T})} \lvert B_x \rvert - 1\), that is the maximum size of its bag minus \(1\). The \textit{treewidth} of a graph \(G\) is the minimum possible width of a tree decomposition \(G\). 
For the purposes of designing algorithms, we will use a special kind of tree decomposition of a graph, known as \textit{nice tree decomposition}. A tree decomposition \((\mathcal{T}, (B_x)_{x \in V(\mathcal{T})})\) of a graph \(G\) is said to be a nice tree decomposition of \(G\) if the following conditions are satisfied:

\begin{itemize}
	\item \(\mathcal{T}\) has a root node
	\item \(B_x = \emptyset\), when \(x\) is either the root node or a leaf node. In other words, the root and all the leaves of \(\mathcal{T}\) have empty bags.
	\item Every non-leaf node of \(\mathcal{T}\), say \(x\) is one of the following four types:
	\begin{enumerate}
		\item \textbf{Introduce Vertex Node:} \(x\) has exactly one child \(y\), such that \(B_x = B_y \cup \{v\}\), for some \(v \notin B_y\). We say that \(v\) is \textit{introduced} at \(x\).
		\item \textbf{Forget Vertex Node:} \(x\) has exactly one child \(y\), such that \(B_x = B_y \setminus \{v\}\), for some \(v \in B_y\). We say that \(v\) is \textit{forgotten} at \(x\).
		\item \textbf{Introduce Edge Node:} \(x\) has exactly one child \(y\), such that \(B_x = B_y\) and for some edge \(uv \in E(G)\) with \(\{u, v\} \subseteq B_x(= B_y)\), \(x\) is labelled with the edge \(uv\). We say that the edge \(uv\) is \textit{introduced} at \(x\). In addition, we require that every edge in \(E(G)\) in introduced \textit{exactly once}.
		\item \textbf{Introduce Vertex Node:} \(x\) has exactly two children \(y\) and \(z\), such that \(B_x = B_y = B_z\).
	\end{enumerate}
\end{itemize}

We derive algorithms based on (nice) tree decompositions \cite{FPTbook2015,bodlaender2015deterministic}. If a graph \(G\) has a tree decomposition \((\mathcal{T}, (B_x)_{x \in V(\mathcal{T})})\) of width at most \(t_w\), then in time \(O(t_w ^2 \times \text{max} (\lvert V(\mathcal{T}) \rvert, \lvert V(G) \rvert ))\) we can obtain a nice tree decomposition of \(G\) of width at most \(t_w\) that has at most \(O(t_w \lvert V(G) \rvert )\) nodes. Also if
\(G\) is a path or a cycle then one can get a nice tree decomposition
of \(G\) in time \(O(\lvert V(G) \rvert)\) with \(O(\lvert V(G) \rvert)\) nodes.
Refer to \cite{FPTbook2015} and \cite{bodlaender2015deterministic} for more information.
For the remainder of the paper, we will assume that a tree decomposition is a nice tree decomposition unless stated otherwise. For a graph \(G\) with a nice tree decomposition \\ \((\mathcal{T}, (B_x)_{x \in V(\mathcal{T})}))\), we define \(G_x = (V_x, E_x)\), the \textit{subgraph of \(G\) rooted at} \(x\) as follows:

\begin{itemize}
	\item \(V_x = B_x \cup_{y \text{ is a descendant of } x \in V(\mathcal{T})} B_y\)
	\item \(E_x = \{\text{All edges introduced at any node in the subtree of}\)
		
		\(\mathcal{T} \text{ rooted at } x \}\)
\end{itemize}

We define two \emph{Boolean} operations over the set \(\{0,1\}\).
The Boolean addition and the Boolean multiplication, denoted by
\(\oplus\) and \(\odot\) respectively are defined as follows:

\(0 \oplus 0 = 0, 0 \oplus 1 = 1 \oplus 0 = 1 \oplus 1 = 1, 
0 \odot 0 = 0 \odot 1 = 1 \odot 0 = 0, 1 \odot 1 = 1\)

\section{\fairvc}\label{sec:vc}

In this section we take up the \fairvc problem. \fairvc is \NPH by a simple
reduction from \vc, which we will now sketch. Let \((G,k)\) be an instance of
\vc. If \(G\) has fewer than \(k\) vertices then \((G,k)\) is trivially a
\yesinstance of \vc, and we construct and return a trivial \yesinstance of
\fairvc. So let us assume, without loss of generality, that \(G\) has at least
\(k\) vertices. This implies that \(G\) has a vertex cover of size at most \(k\)
if and only \(G\) has a vertex cover of size exactly \(k\).

We construct a \(1\)-coloured graph \((H,c)\) by setting (i) \(H=G\) and (ii)
\(c\) to be the function that assigns the colour set \(\{1\}\) to every vertex
of \(H\). Further, we set \(\ttuple=(k)\) to be a \(1\)-tuple with its sole
element being \(k\). \(((H,c), \ttuple)\) is the reduced instance of \fairvc.
It is easy to see that \(S\) is a vertex cover of \(G\) of size exactly \(k\) if
and only if \(S\) is a \ttuple-fair vertex cover of \(H\).

We study different parameterizations of \fairvc in this section. In
\autoref{sec:fpt} we set the parameter to be the ``total colour budget''
\coloursum. This is a natural generalization of the ``standard'' parameter \(k\)
of \vc, as suggested by the above reduction. We show---\autoref{sec:fpt}---that
\fairvc is fixed-parameter tractable with \coloursum as the parameter. We take
up the kernelization question in \autoref{sec:kernel}, and show that when the
number of colours \(t\) is a constant then \fairvc has a kernel of size
quadratic in \coloursum. In \autoref{sec:vc-treewidth} we show that \fairvc has a
single-exponential FPT algorithm when parameterized by the \emph{treewidth} of
the input graph \(G\).

\subsubsection{An FPT Algorithm Parameterized by \(k=\coloursum\)}\label{sec:fpt}

The ``total colour budget'' \(k=\coloursum\) is a natural parameter for \fairvc,
since it is an analogue of the vertex cover size of the ``plain'' \vc problem.
We show that this analogy carries over, to a good extent, to the parameterized
tractability of \fairvc: we prove \fairVCfpt

Our algorithm is an adaptation of a standard FPT algorithm for \vc that branches
on vertices of degree at least \(3\)~\cite{FPTbook2015}. The base case---graphs
with maximum degree at most \(2\)---is easy to solve in polynomial time for
``plain'' \vc, since these graphs are collections of paths and cycles. But with
the added constraint of the colour budget, it is not immediately clear that
\fairvc can be solved in polynomial time on such graphs. We get around this by
appealing to our algorithm for \fairvc on graphs of \emph{bounded treewidth};
see \autoref{sec:vc-treewidth}.

Our algorithm uses two \noinstance checks, a reduction rule, a branching rule
that applies when the graph has at least one vertex of degree \(3\), and a
subroutine for handling the base case when every vertex has degree at most
\(2\). We now describe these\footnote{See \autoref{alg:fvc} for the algorithm in
  pseudocode.}. Recall that the input consists of a graph \(G\) on \(n\)
vertices, a colouring function \(c: V(G) \to (2^{[t]} \setminus
\{\emptyset\})\), and a \(t\)-tuple of integers \(\ttuple = (k_{1},
k_{2},\dots,k_{t})\). Recall also that for any \(S\subseteq V(G)\), the
expression \(c_{i}(S)\) denotes the number of vertices in the set \(S\) which
have been assigned the colour \(i\) by the colouring function \(c\).

\paragraph{The \noinstance checks}
\begin{description}
\item[Check~1]  If \(c_{i}(V(G)) < k_{i}\) holds for at least one \(i \in [t]\)
  then return \no. 
\item[Check~2] If there is at least one pair \((i, j) \in [t] \times [t]\)
  such that (i) \(k_{i} = k_{j} = 0\) and (ii) there is an edge \((u,v) \in
  E(G)\) for which both \(i \in c(u)\) and \(j \in c(v)\) hold, then return \no.
\end{description}

\paragraph{The reduction rule}
\begin{description}
\item[Reduction Rule] If there is an \(i^{\star} \in [t] \) such that \(c_{i^{\star}}(V(G)) =
	k_{i^{\star}}\), then: Let \(X := \{v \in V(G) | i^{\star} \in c(v)\}\), be the set of all vertices in \(G\) which have the colour \(i^{\star}\) and for each \(i \in [t]\), let \(k'_i = k_i - c_i(X)\). Further, let \(c'\) be the restriction of \(c\) upon \(V(G) \setminus X\). Return the instance
  \((G - X,c',(k'_{1},\dotsc,k'_{i-1},k'_i,k'_{i+1},\dotsc,k'_{t}))\), where, $X = \{v \in V(G) | i^{\star} \in c(v)\}$, and for each $i \in [t]$, $k'_i = k_i - c_i(X)$. Otherwise return
  the---unchanged---instance \((G, c, \ttuple)\).
\end{description}

\paragraph{The branching rule}
This rule applies when \(G\) contains at least one vertex of degree at least \(3\).

\begin{description}
\item[Branching Rule] Let \(v\) be a vertex of degree at least \(3\) in
  \(G\).
Let \(H'\) be the graph obtained by deleting the vertex
    \(v\) (and its incident edges) from \(G\), let \(c'\) be the function obtained
    by restricting \(c\) to \(V(H') = (V(G) \setminus \{v\})\), and let
    \(\ttuple'=\{k_{1}',k_{2}',\dotsc,k_{t}'\}\) where for each \(i\in[t]\) we
    have: 
    \begin{itemize}
    \item \(k_{i}'=k_{i}-1\) if \(i\in c(v)\); that is, if colour \(i\) is among
      the set of colours that the \(t\)-colouring function \(c\) assigns to
      vertex \(v\), and,
    \item \(k_{i}'=k_{i}\) otherwise.
    \end{itemize}
    Let \(H''\) be the graph obtained by deleting the open neighbourhood
    \(N(v)\) (and their incident edges) from \(G\), let \(c''\) be the function
    obtained by restricting \(c\) to \(V(H'') = (V(G) \setminus N(v))\), and let
    \(\ttuple''=\{k_{1}'',k_{2}'',\dotsc,k_{t}''\}\) where for each \(i\in[t]\) we
    have \(k_{i}''=k_{i}-c_{i}(N(v))\).

    The branching rule recursively solves the two instances \((H', c',
    \ttuple')\) and \((H'', c'', \ttuple'')\). If at least one of these
    recursive calls returns \yes, then the rule returns \yes; otherwise it
    returns \no.
\end{description}

\paragraph{The subroutine for the base case}
This subroutine is applied to solve an instance of the form \((G,c,\ttuple)\)
where graph \(G\) has no vertex of degree at least \(3\). Note that such a graph
is a disjoint union of paths and cycles. Unlike in the case of \vc, we cannot
(i) directly delete isolated vertices (paths with no edges) or (ii) greedily
solve the problem on paths and cycles. This is because the solution must also
respect the colour constraints.  However, a graph which is just a disjoint union of paths and cycles has treewidth at most \(2\).

So let \(G\) be a disjoint union of paths and cycles, and let the connected
components of \(G\) be \(C_{1},C_{2},\dotsc,C_{h}\). In time linear in the size of the component we can create a nice tree decomposition for each component. Let for each, \(i \in [h]\), \(T_i\) be the nice tree decomposition we can get in linear time. Between \(T_i\) and \(T_{i + 1}\), where \(1 \leq i \leq h - 1\) we put and edge. This will give us a nice tree decomposition of the graph \(G\), say \(\mathcal{T}\).
We then pass \((G, \mathcal{T}, \mathbb{T})\) as inputs to the bounded treewidth algorithm described in Section \autoref{sec:vc-treewidth} and return \yes if and only if it return \yes.

Putting all these together, we get Algorithm \autoref{alg:fvc} that solves \fairvc. 

\begin{algorithm}
	\caption{Solving the \fairvc problem}
	\label{alg:fvc}
	\textit{Input}: A graph \(G\), a colouring function, \(c : V(G) \rightarrow [t]\) and tuple of integers \(\mathbb{T} = (k_1, k_2, \ldots , k_t)\) \\
	 \textit{Output}: Return \yes if \(G\) has a \(\mathbb{T}\)-fair vertex, otherwise return \no
	 \underline{\textbf{FVC(\(G, c, \mathbb{T}\))}}
	\begin{algorithmic}[1]
		\STATE Apply check \(1\) from the No-instance checks. If \no is returned then we return \no, otherwise we go the following line
		\STATE Apply check \(2\) from the No-instance checks. If \no is returned then we return \no, otherwise we go the following line
		\STATE Check if there is an \(i^{\star} \in [t]\), such that, \(c_{i^{\star}}(V(G)) = k_{i^{\star}}\). If there does exist such an \(i^{\star}\), then apply the reduction rule to return the instance, \((G - X,c',(k'_{1},\dotsc,k'_{i-1},k'_i,k'_{i+1},\dotsc,k'_{t}))\), where, $X = \{v \in V(G) | i^{\star} \in c(v)\}$, and for each $i \in [t]$, $k'_i = k_i - c_i(X)$. Otherwise, we go to the next line. 
		\IF{there are no vertices in \(G\) with degree at least \(3\)}
		\STATE Let components of \(G\) be \(C_{1},C_{2},\dotsc,C_{h}\)
		\FOR{\(j \in [h]\)}
		\STATE Make nice tree decomposition of \(C_j\), say \(T_j\)
		\ENDFOR
		\FOR{\(j \in [h - 1]\)}
		\STATE Add an edge between between \(T_j\) and \(T_{j + 1}\)
		\ENDFOR
		\STATE We get a nice tree decomposition of \(G\), say \(\mathcal{T}\)
		\STATE Pass \((G, \mathcal{T}, \mathbb{T})\) as inputs to the bounded treewidth algorithm and return whatever it returns
		\ELSE
		\STATE Let \(v\) be a vertex in \(G\) of degree at least \(3\). Let, \(H', H'', c', c'', \mathbb{T'}\) and \(\mathbb{T''}\) be defined as done in Section 3.1.3.
		\STATE Return \(FVC((H',c', \mathbb{T'})) \oplus FVC((H'',c'', \mathbb{T''}))\)
		\ENDIF
\end{algorithmic}
\end{algorithm}

\paragraph{Proof of correctness}

If for some $i \in [t]$, we have, $c_i(V(G)) < k_i$, then we cannot have a $(k_i)_{i = 1} ^t$-fair vertex cover of $G$, and thus we are dealing with a \noinstance.

If there exists a  pair \((i, j) \in [t] \times [t]\)
  such that (i) \(k_{i} = k_{j} = 0\) and (ii) there is an edge \((u,v) \in
  E(G)\) for which both \(i \in c(u)\) and \(j \in c(v)\) hold, then too we have a \noinstance, as we need to pick at least one of $u$ and $v$ in any vertex cover, but we cannot do so with the colour constraints.
  
  Suppose, there is an \(i^{\star} \in [t] \) such that \(c_{i^{\star}}(V(G)) =
  k_{i^{\star}}\). Let $X = \{v \in V(G) | i^{\star} \in c(v)\}$ and for each $i \in [t]$, $k'_i = k_i - c_i(X)$. Let $F^{\star}$ be a $(k'_i)_{i = 1} ^t$-fair vertex cover of $G - X$. Then $F^{\star} \cup X$ is a $(k_i)_{i = 1} ^t$-fair vertex cover of $G$. Conversely, let $F$ be a $(k_i)_{i = 1} ^t$-fair vertex cover of $G$. Then, $F \setminus X$ is a $(k'_i)_{i = 1} ^t$-fair vertex cover of $G - X$.

The correctness of the branching rule mainly depends on the fact that in a vertex cover of a graph, if any vertex of the said graph doesn't belong to the vertex cover, then the open neighbourhood of the graph must be in the vertex cover. This is exactly what the branching rule does, while maintaining the colour constraints.

Suppose, \((G,c,\ttuple)\) , is a \yesinstance. Then it has a $(k_i)_{i=1}^{t}$-fair vertex cover, say $F$. Let, $v \in V(G)$, such that, it has degree at least $3$. If $v \in F$, then, $F \setminus \{v\}$ is a $(k'_i)_{i = 1} ^t$-fair vertex cover of $H'$, where $H'$ and $(k'_i)_{i = 1} ^t$ are as defined in the branching rule above. Thus, \((H', c',
    \ttuple')\) is a \yesinstance. Otherwise, notice that, $N(v) \subseteq F$. Then, $F \setminus N(v)$ is a $(k''_i)_{i = 1} ^t$-fair VC of $H''$, where $H''$ and $(k''_i)_{i = 1} ^t$ are as defined in the branching rule above, and so \((H', c',
    \ttuple')\) is a \yesinstance.

Conversely, if \((H', c',
    \ttuple')\) is a \yesinstance is \yesinstance with $F'$ as the $(k'_i)_{i = 1} ^t$-fair vertex cover of $H'$, then $F' \cup \{v\}$ will be the $(k_i)_{i = 1} ^t$-fair vertex cover of $G$. Otherwise, if \((H'', c'',
    \ttuple'')\) is a \yesinstance is \yesinstance with $F''$ as the $(k''_i)_{i = 1} ^t$-fair vertex cover of $H''$, then $F'' \cup N(v)$ will be the $(k_i)_{i = 1} ^t$-fair vertex cover of $G$.
    
    We now turn to the base case, where $G$ is a graph with maximum degree $3$ and as such, it is a disjoint union of isolated vertices, paths and cycles. Let the connected
components of \(G\) be \(C_{1},C_{2},\dotsc,C_{h}\). Let \(\ttuple = (k_{1},
k_{2},\dots,k_{t})\). Then, we can get a nice tree decomposition of each component \(C_j\), say \(T_j\) in linear time. As adding an edge between two trees still gives us a tree, thus we can get a tree decomposition of the entire graph \(G\), say \(\mathcal{T}\). We pass on \((G, \mathcal{T}, \mathbb{T})\) as inputs to the treewidth based algorithm described in Section \autoref{sec:vc-treewidth}. Assuming that is correct, we have proved the correctness of our algorithm.

\paragraph{Analysis of the running time}
  
  The time taken for the checks in Section 3.1.1 and the reduction rule in Section 3.1.2 is $O(n^2 t)$.
  
  As this is a branching algorithm we will first try to get a bound on the number of nodes. We see that if we define $T(k)$ as follows:
  
  \begin{itemize}
      \item $T(k) = T(k - 1) + T(k - 3)$, when $k \geq 3$
      \item $T(k) = 1$, otherwise
  \end{itemize}

Then, if we substitute $\sum_{i = 1} ^t k_i$ for $k$, we will get an upper bound on the number of nodes of the search tree. By the claim in the book, $T(k) \leq 1.4656$, thus the total number of nodes in the search tree is at most $1.4656^k$. 

The time spend on the interior nodes, including the root one is $O(n^2 t)$. As for the leaves we will need \(O(n)\) time to form the tree decomposition of the graph \(G\), with an additional \(O(nt \prod_{i = 1} ^t k_i ^2)\) time for invoking the treewidth algorithm. Thus, the maximum time taken in any node is \\\(O(n^2 t \prod_{i = 1} ^t k_i ^2)\) and hence the total running time for the algorithm is \(O(n^2 t (\prod_{i = 1} ^t k_i ^2) 1.4656^k)\), where \(k = \sum_{i = 1} ^t k_i\).

\subsubsection{Polynomial kernel for a constant number of colours}\label{sec:kernel}

In this section we provide a kernelisation algorithm for the problem. A kernelisation algorithm for a parameterised problem is an algorithm which takes an instance, \((x,k)\) of the problem and in polynomial time return an equivalent instance, \((x',k')\) such that \(\lvert x' \rvert + k' \leq g(k)\), where \(g : \mathbb{N} \rightarrow \mathbb{N}\) is a computable function.

Given a graph, \(G\), the colouring function, \(c : V(G) \rightarrow (2^{[t]} \setminus \emptyset)\), a \(t\)-tuple of integers, \(\mathbb{T} = (k_1, \ldots, k_t)\), we have the following rules.

\paragraph{The Global \noinstance check}

We apply this rule once, right at the beginning.

\begin{description}
	\item[Check 0] If \(c_{i}(V(G)) < k_{i}\) holds for at least one \(i \in [t]\)
  then return \no. 

\end{description}

\paragraph{Setting aside isolated vertices}

If the above rule doesn't return \no, then we know that, for each \(i \in [t], c_i(V(G)) \geq k_i \). Let \(k_{max} = max\{k_i | i \in [t]\}\). We define a subset \(I*\) of \(V(G)\) as follows.
For every \(X \subseteq [t]\), if \(X \neq \emptyset\), we denote the set \(V_X := \{v \in V(G) | c(v) = X \land deg(v) = 0\}\). If \(\lvert V_X \rvert > k_{max}\), then keep any \(k_{max}\) of them in \(V_X\) and remove the rest. We define, \(I* := \bigcup_{X \subseteq [t] \land X \neq \emptyset} V_X\).

We then apply the following rules exhaustively.n order as they appear If at some point, any of then return \no, we terminate the algorithm and return \no for the original input.

\paragraph{The \noinstance check}\label{sec:nocheck1}

\begin{description}

\item[Check 1] If there is at least one pair \((i, j) \in [t] \times [t]\)
  such that (i) \(k_{i} = k_{j} = 0\) and (ii) there is an edge \((u,v) \in
  E(G)\) for which both \(i \in c(u)\) and \(j \in c(v)\) hold, then return \no.

\end{description}

\paragraph{The isolated vertex rule}

\begin{description}

	\item[Isolated vertex Rule] If \(v \in V(G)\), such that \(deg(v) = 0\) and \(v \notin I*\), then we return the instance, \((G - v, c', \mathbb{T})\), where \(c'\) is the restriction of \(c\) on \(V(G) \setminus \{v\}\).

\end{description}

\paragraph{The large neighbourhood rule}

\begin{description}
	\item[Large neighbourhood rule] If for some \(v \in V(G)\) such that for some \(i \in [t]\), \(\lvert N_i(v) \rvert > k_i\), then we return \((G - v, c', (k'_1, \ldots, k'_t))\), where \(c'\) is the restriction of \(c\) on \(V(G) \setminus \{v\}\) and for each \(i \in [t], k'_i = k_i - c_i(\{v\})\).

\end{description}

\paragraph{The final instance size}

\begin{description}
	\item[Return a kernel] Let \((G, c, \mathbb{T})\) be an input instance, such that the above rules are no more applicable. Then, if \(\lvert V(G) \rvert\ > (\sum_{i = 1} ^t k_i)^2 + \sum_{i = 1} ^t k_i \times (1 + 2^t)\) or \( \lvert E(G) \rvert >  (\sum_{i = 1} ^t k_i)^2\), we return \no, else we return this instance as our kernel.
\end{description}

Combining all the above rules, we have the following algorithm

\begin{algorithm}
	\caption{Kernel for the \fairvc problem}
	\label{alg:fvc-kernel}
	\textit{Input}: A graph \(G\), a colouring function, \(c : V(G) \rightarrow [t]\) and tuple of integers \(\mathbb{T} = (k_1, k_2, \ldots , k_t)\) \\
	 \textit{Output}: Return a kernel for the input instance
	\begin{algorithmic}[1]
		\STATE Apply check \(0\) from the No-instance checks. If \no is returned then we return \no, otherwise we go the following line
		\STATE \(k_{max} \leftarrow max\{k_i | i \in [t]\}\)
		\STATE \(I* \leftarrow \emptyset\)
		\FOR{\(X \in (2^{[t]} \setminus \emptyset)\)}
			\STATE \(V_X \leftarrow \emptyset\)
				\FOR{\(v \in V(G)\)}
				\IF{\(c(v) = X \land deg(v) = 0\)}
				\IF{\(\lvert V_X \rvert < k_{max}\)}
				\STATE \(V_X \leftarrow V_X \cup \{v\}\)
				\ENDIF
				\ENDIF
				\ENDFOR
		\STATE \(I* \leftarrow I* \cup V_X\)
		\ENDFOR
		\STATE We now call \autoref{alg:fvc-find-kernel} with the input \(FINDKERNEL(G, c, \mathbb{T})\)
\end{algorithmic}
\end{algorithm}

\begin{algorithm}
	\caption{Finding kernel function}
	\label{alg:fvc-find-kernel}
	\underline{\textbf{FINDKERNEL\((G, c, \mathbb{T})\)}}
\begin{algorithmic}[1]
	\STATE Apply Check 1 from \autoref{sec:nocheck1}. If it returns \no, then return \no, otherwise move the next line
	\IF{\(v \in V(G) \land deg(v) = 0 \land v \notin I*\)}
	\STATE Return \(FINDKERNEL(G - v, c', \mathbb{T})\), where \(c'\) is the restriction of \(c\) to \(V(G) \setminus \{v\}\).
	\ENDIF
	\FOR{\(v \in V(G)\)}
		\FOR{\(i \in [t]\)}
		\IF{\(\lvert N_i(v) \rvert > k_i\)}
		\STATE Return \(FINDKERNEL(G \mbox{-} v, c', (k'_1, \ldots , k'_t))\), where \(c'\) is the restriction of \(c\) on \(V(G) \setminus \{v\}\) and for each \(i \in [t], k'_i = k_i - c_i(\{v\})\)
		\ENDIF
		\ENDFOR
	\ENDFOR
	\IF{\((\lvert V(G) \rvert\ > (\sum_{i = 1} ^t k_i)^2 + \sum_{i = 1} ^t k_i \times (1 + 2^t)) \lor (\lvert E(G) \rvert >  (\sum_{i = 1} ^t k_i)^2)\)}
	\STATE Return \no
	\ELSE
	\STATE Return \((G, c, \mathbb{T})\)
	\ENDIF
 
\end{algorithmic}
\end{algorithm}

\paragraph{Safeness of the rules}

The correctness of Check 0 has already been proved in the branching algorithm case in \autoref{sec:fpt}. The reason for it being applied only once at the beginning, is due to the fact that, during the running of the rest of algorithm there could occur some deletions of vertices, which can result in temporary violations of colour constraint.

The correctness of the No-instance check has also been proven in \autoref{sec:fpt}.

We now come to the isolated vertex rule. Suppose, \((G, c, \mathbb{T})\) is a \yesinstance. If there exists \(v \in V(G)\) with \(deg(v) = 0\) and \(v \notin I*\), then we can safely delete \(v\). If, \(F\) is a \(\mathbb{T}\)-fair vertex cover of \(G\), which doesn't contain \(v\), then \(F\) is a \(\mathbb{T}\)-fair vertex cover of \(G - v\). In case, \(v \in F\), then notice that not all vertices from \(I*\) which have the same colour as \(v\) are not in \(F\). If that was the case, there would be more than \(k_{max}\) vertices in \(F\) with the same colour as \(v\), which would contradict that \(F\) is \(\mathbb{T}\)-fair. Thus, we have \(u \in I* \setminus F\) with \(c(u) = c(v)\). We then get, \((F \setminus \{v\}) \cup \{u\}\), a \(\mathbb{T}\)-fair vertex cover of \(G - v\). It's easy to see that \((G - v, c', \mathbb{T})\), where \(c'\) is the restriction of \(c\) to \(V(G) \setminus \{v\}\), being a \yesinstance implies that \((G, c, \mathbb{T})\) is a \yesinstance.

Let \((G, c, (k_i)_{i = 1} ^t)\) be an instance, with a vertex \(v\) of \(G\), such that there is an \(i^{\star} \in [t]\), with \(\lvert N_{i^{\star}}(v) \rvert > k_{i^{\star}}\). If \( \\ (G, c, (k_i)_{i = 1} ^t)\) is a \yesinstance, then there is a \((k_i)_{i = 1} ^t\)-fair vertex cover of \(G\), say \(F\). Then we must have \(v \in F\), otherwise we would have \(N(v) \subset F\) which would imply that, \(c_i(F) > k_i\) and contradict that \(F\) is \((k_i)_{i = 1} ^t\)-fair. We observe that, \(F \setminus \{v\}\) is a \((k'_i)_{i = 1} ^t\)-fair vertex cover of \(G - v\), where for each \(i \in [t]\), \(k'_i = k_i - c_i(\{v\})\). Conversely, if \((G - v, c', (k'_i)_{i = 1} ^t)\) is a \yesinstance, where \(c'\) is the restriction of \(c\) on \(V(G) \setminus \{v\}\), then we have a \((k'_i)_{i = 1} ^t\)-fair vertex cover of \(G - v\), say \(F'\). Then we observe that \(F' \cup \{v\}\) is a \((k_i)_{i = 1} ^t\)-fair vertex cover of \(G\).

We apply the large neighbourhood rule, only when the other rules are no more applicable. This gives us the following lemma.

\begin{lemma}
\label{lem:kernelsize}
	If \((G,c,\mathbb{T})\) is \yesinstance and none of the rules from the No-instance checks, the reduction rule, the isolated vertex rule and the large neighbourhood rule is applicable, then \(\lvert V(G) \rvert \leq (\sum_{i = 1} ^t k_i)^2 + \sum_{i = 1} ^t k_i \times (1 + 2^t)\) and \( \\ \lvert E(G) \rvert (\sum_{i = 1} ^t k_i)^2\).
\end{lemma}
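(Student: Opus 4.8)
The plan is to derive both size bounds by combining the structural consequences of the reduction rules being inapplicable with the existence of a small fair vertex cover. Write $k = \sum_{i=1}^t k_i$. Since $(G,c,\mathbb{T})$ is a \yesinstance, fix a $\mathbb{T}$-fair vertex cover $F$ of $G$; because every vertex carries at least one colour, $|F| \le \sum_{i=1}^t c_i(F) = \sum_{i=1}^t k_i = k$, so the cover itself has at most $k$ vertices. The first fact I would record is a uniform degree bound: since the large neighbourhood rule no longer applies, $|N_i(v)| \le k_i$ for every $v \in V(G)$ and every $i \in [t]$, and as $N(v) = \bigcup_{i \in [t]} N_i(v)$ we get $\deg(v) \le \sum_{i=1}^t |N_i(v)| \le k$ for all $v$.

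For the edge bound I would use that $F$ is a vertex cover, so every edge of $G$ has at least one endpoint in $F$. Hence $|E(G)| \le \sum_{u \in F} \deg(u) \le |F| \cdot k \le k^2 = (\sum_{i=1}^t k_i)^2$, which is exactly the claimed bound on $|E(G)|$.

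For the vertex bound I would partition $V(G)$ into the isolated vertices (degree $0$) and the rest. Since the isolated vertex rule is inapplicable, every isolated vertex lies in $I^{\star}$; by construction $I^{\star}$ retains at most $k_{\max} = \max_i k_i$ vertices for each of the at most $2^t - 1$ nonempty colour sets $X \subseteq [t]$, giving at most $(2^t - 1) k_{\max} \le 2^t k$ isolated vertices. Among the non-isolated vertices, those lying in $F$ number at most $|F| \le k$, while each non-isolated vertex $w$ outside $F$ has all its neighbours in $F$ (as $V(G) \setminus F$ is independent), so I can charge $w$ to one edge incident to $F$; these charges are injective since distinct $w$'s have distinct endpoints, and the number of edges incident to $F$ is at most $\sum_{u \in F}\deg(u) \le k^2$. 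Summing the three counts yields $|V(G)| \le k^2 + k + 2^t k = (\sum_{i=1}^t k_i)^2 + \sum_{i=1}^t k_i (1 + 2^t)$.

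The degree and edge counts are routine; the step I expect to need the most care is the vertex count, specifically verifying that the inapplicability of the isolated vertex rule really forces \emph{all} degree-$0$ vertices into $I^{\star}$ (so the $2^t$ factor is genuine but controlled), and that the injection from non-isolated independent-set vertices into edges touching $F$ avoids any double counting. I would also note the slack between my bound $k^2 + 2^t k$ and the stated $k^2 + k(1 + 2^t)$, which differ by an additive $k \ge 0$ and so cause no difficulty.
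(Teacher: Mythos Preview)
Your proof is correct and follows essentially the same approach as the paper's: bound the degree of every vertex by $k$ via the inapplicability of the large neighbourhood rule, bound the isolated vertices by $|I^\star| \le 2^t k$ via the inapplicability of the isolated vertex rule, and bound the non-isolated vertices outside the fair cover $F$ by $k|F| \le k^2$ using that each such vertex has a neighbour in $F$; the edge bound follows immediately from the degree bound together with $|F| \le k$. Your closing remark about slack is a slip---your computed bound $k^2 + k + 2^t k$ already equals $k^2 + k(1+2^t)$ exactly, so there is nothing to reconcile.
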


\begin{proof}
	As the isolated vertex rule is no more applicable, thus \(G\) has \(\lvert I*\rvert\) isolated vertices and we can see that , \(\lvert I*\rvert \leq k_{max} 2^t \leq 2^t \sum_{i = 1} ^t k_i\). Since, we cannot apply the large neighbourhood rule anymore, thus any vertex of \(G\) will have have degree to be at most \(\sum_{i = 1} ^t k_i\). If
	\(F\) is \((k_i)_{i = 1} ^t\)-fair vertex cover of \(G\), then the number of vertices in \((G - I*) - F\) is at most \(\sum_{i = 1} ^t k_i \lvert F \rvert\). Thus, we get that \(\lvert V(G - I*) \rvert \leq \sum_{i = 1} ^t k_i \lvert F \rvert + \lvert F \rvert\) and hence, \(\lvert V(G) \rvert \leq \sum_{i = 1} ^t k_i \lvert F \rvert + \lvert F \rvert + \lvert I* \rvert \). Since, \(F\) is \((k_i)_{i = 1} ^t\)-fair, thus \(\lvert V(G) \rvert \leq (\sum_{i = 1} ^t k_i)^2 + \sum_{i = 1} ^t k_i \times (1 + 2^t)\). Also as every edge is covered by a vertex cover and every vertex has degree at most \(\sum_{i = 1} ^t k_i\), thus \(\lvert E(G) \rvert \leq (\sum_{i = 1} ^t k_i)^2\).
\end{proof}

\paragraph{Analysis of the running time}

\begin{lemma}
\label{lem:kerneltime}
	All the rules for kernelisation of the \fairvc problem can be applied in polynomial time.
\end{lemma}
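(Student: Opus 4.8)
The plan is to verify that each of the kernelisation rules runs in time polynomial in the size of the input, and then observe that the whole procedure terminates after polynomially many applications. I would handle the rules in the order they are invoked by the algorithm. \textbf{Check~0} requires, for each colour \(i \in [t]\), counting the number of vertices \(v\) with \(i \in c(v)\) and comparing against \(k_i\); a single pass over \(V(G)\) while maintaining \(t\) counters does this in \(O(nt)\) time. The construction of \(I^{*}\) in the ``setting aside isolated vertices'' step iterates over the (at most \(2^t\)) non-empty subsets \(X \subseteq [t]\) and, for each, scans the vertices to collect up to \(k_{max}\) isolated vertices with colour set exactly \(X\); this is \(O(2^t \cdot n)\), which is polynomial in the input for fixed \(t\) and in any case bounded by the instance size.

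Next I would bound the cost of a single invocation of \textbf{FINDKERNEL}. \textbf{Check~1} scans all pairs \((i,j)\) with \(k_i = k_j = 0\) and all edges to test the forbidden-edge condition, costing \(O(t^2 + |E(G)|)\); the \emph{isolated vertex rule} needs a degree computation, which is \(O(n + |E(G)|)\); and the \emph{large neighbourhood rule} computes, for each vertex \(v\) and each colour \(i\), the quantity \(|N_i(v)|\), which can be done in \(O(n(n + |E(G)|))\) or better by aggregating over the neighbourhood of each vertex. Each of these is clearly polynomial in \(n\) and \(t\). The final size test just compares \(|V(G)|\) and \(|E(G)|\) against the stated thresholds, costing \(O(1)\) once the cardinalities are known. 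Thus one pass through \textbf{FINDKERNEL} costs \(n^{O(1)}\) time.

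The one genuine point to argue — and the step I expect to be the main (if mild) obstacle — is that the \emph{recursive} structure of \textbf{FINDKERNEL} does not blow up the running time: each recursive call is made only after deleting a vertex (either an isolated vertex outside \(I^{*}\), or a vertex triggering the large neighbourhood rule), so the number of vertices strictly decreases at every recursion. Hence there are at most \(n\) recursive invocations in total, and since each invocation runs in \(n^{O(1)}\) time, the whole kernelisation procedure runs in \(n^{O(1)}\) time as well. I would make this explicit so that the polynomial bound covers the entire chain of recursive calls and not merely a single one. Combining this with the \(O(nt)\) and \(O(2^t n)\) costs of the preprocessing steps yields the claimed polynomial running time, completing the proof of \autoref{lem:kerneltime}.
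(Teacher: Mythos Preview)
Your proposal is correct and follows essentially the same approach as the paper: bound the running time of each individual rule by a polynomial in \(n\) and \(t\), and then observe that the recursion depth is at most \(n\) since every recursive call to \textbf{FINDKERNEL} deletes a vertex. The specific polynomial bounds you give differ in minor details from the paper's (e.g., the paper quotes \(O(n^2 t^2)\) for Check~1 and \(O(n^2 t)\) for the large neighbourhood rule), but these discrepancies are immaterial to the lemma.
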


\begin{proof}
	The time taken for the global \noinstance check is \(O(n t^2)\) and for constructing the set \(I*\) is \(O(n 2^t)\).

	To check if there exists a pair \((i,j) \in [t] \times [t]\) such that (i) \(k_{i} = k_{j} = 0\) and (ii) there is an edge \((u,v) \in
	E(G)\) for which both \(i \in c(u)\) and \(j \in c(v)\) hold requires \(O(n^2 t^2)\). 


	Applying the isolated vertex rule takes \(O(n)\) time.
	The time taken for applying the large neighbourhood rule takes \(O(n^2 t)\) time and the time to check the kernel size \(O(n^2)\).

	As each of the rules take polynomial time and we stop when the graph has been reduced to a certain size, thus we can apply the rules at most \(O(n)\) times and hence we have a polynomial time algorithm.

\end{proof}

\subsubsection{Parameterization by treewidth}\label{sec:vc-treewidth}

In the section, we design a fixed-parameter tractable algorithm for the  \fairvc problem, parameterized by the treewdith of the input graph. We assume that along with the graph \(G\), we are also given a tree decomposition \((\mathcal{T}, (B_x)_{x \in V(T)})\) of width \twidth. We have the followin theorem

\begin{theorem}
\label{thm:vctw}
	There is an algorithm, that given an $n$-vertex graph \(G\), a colouring function \(\fn{c}{V(G)}{2^{[t]} \setminus \set{\emptyset}}\), a \(t\)-tuple of non-negative integers \((k_i)_{i = 1}^{t}\), and a tree decomposition \((\mathcal{T}, (B_x)_{x \in V(\mathcal{T})})\) of \(G\) of width \(t_w\), runs in time \( \\ O(t l {t_w}^2 2^{t_w + 1} \prod_{i = 1} ^t k_i ^2)\) and decides correctly if \(G\) has a \( \\ (k_i)_{i=1}^{t}\)-fair vertex cover. 
\end{theorem}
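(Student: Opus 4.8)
The plan is to solve \fairvc by bottom-up dynamic programming over the given nice tree decomposition, augmenting the classical treewidth DP for \vc with a record of how many vertices of each colour have been selected so far. For a node $x$ of $\mathcal{T}$, a subset $S \subseteq B_x$, and a vector $\vec{c} = (c_1, \dots, c_t)$ with $0 \le c_i \le k_i$ for each $i \in [t]$, I define a Boolean table entry $D[x, S, \vec{c}]$ that equals $1$ precisely when there exists $\hat{S} \subseteq V_x$ such that (i) $\hat{S}$ is a vertex cover of $G_x = (V_x, E_x)$, (ii) $\hat{S} \cap B_x = S$, and (iii) $c_i(\hat{S}) = c_i$ for every $i \in [t]$. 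Since every vertex carries at least one colour, any \ttuple-fair set $\hat S$ satisfies $\lvert \hat S\rvert \le \sum_i c_i(\hat S) = \sum_i k_i$, so the size bound is automatic and a purely Boolean feasibility DP (using the operations $\oplus, \odot$ from the preliminaries) suffices. The number of colour vectors tracked is $\prod_{i=1}^t (k_i + 1)$, which — up to the loose bound used in the statement — accounts for one factor of $\prod_{i=1}^t k_i$ in the running time.

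I would then specify the recurrences for the five node types. At a \emph{leaf} node ($B_x = \emptyset$) only the empty set and the all-zero vector are feasible, so $D[x, \emptyset, \vec{0}] = 1$ and all other entries are $0$. At an \emph{introduce vertex} node adding $v$ to child $y$, I set $D[x, S, \vec{c}] = D[y, S, \vec{c}]$ when $v \notin S$, and $D[x, S, \vec{c}] = D[y, S \setminus \{v\}, \vec{c} - \mathbf{1}_{c(v)}]$ when $v \in S$, where $\mathbf{1}_{c(v)}$ is the indicator vector of the colour set $c(v)$ and the entry is $0$ if $\vec{c} - \mathbf{1}_{c(v)}$ has a negative coordinate; this is sound because $v$ is isolated in $G_x$, its incident edges not yet having been introduced. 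At an \emph{introduce edge} node for edge $uv$, I keep $D[x, S, \vec{c}] = D[y, S, \vec{c}]$ if $S \cap \{u, v\} \neq \emptyset$ and set it to $0$ otherwise, enforcing coverage. At a \emph{forget} node dropping $v$, I merge both possibilities for $v$ via $D[x, S, \vec{c}] = D[y, S, \vec{c}] \oplus D[y, S \cup \{v\}, \vec{c}]$, keeping $\vec{c}$ unchanged because $v$'s colours were already added at its introduce step. Finally, at a \emph{join} node with children $y, z$ sharing the bag, I set $D[x, S, \vec{c}] = \bigoplus_{\vec{a}} \big( D[y, S, \vec{a}] \odot D[z, S, \vec{c} - \vec{a} + \vec{s}] \big)$, where $s_i = c_i(S)$ corrects for the double counting of the bag vertices of $S$ and $\vec{a}$ ranges over all admissible split vectors.

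Correctness follows by induction on $\mathcal{T}$, verifying at each node that $D[x, S, \vec{c}]$ matches the stated semantics; the leaf case is immediate and the introduce/forget cases are routine given the standard invariants of a nice tree decomposition — in particular that each edge is introduced exactly once, and that a vertex is forgotten only after all its incident edges have been introduced. The instance is a \yesinstance iff $D[r, \emptyset, (k_1, \dots, k_t)] = 1$ at the root $r$, since $B_r = \emptyset$ and $G_r = G$. I expect the \emph{join} node to be the main obstacle: I must argue both directions of the equivalence — that restricting any fair partial cover of $G_x$ to the two subgraphs yields colour vectors summing, after the $-\vec{s}$ correction, to $\vec{c}$, and conversely that merging two compatible partial covers produces a genuine vertex cover of $G_x$ with the correct counts — while being careful that the vertices of $S$ lie in $V_y \cap V_z$ and must not be counted twice.

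For the running time there are $l$ nodes, and at each node the table has at most $2^{\twidth + 1} \prod_{i=1}^t (k_i + 1)$ entries. Each entry of a leaf, introduce, or forget node is computed in $O(t\,\twidth^{2})$ time to manipulate the bag subset and the colour vector, whereas a join node must, for each entry, loop over all $\prod_{i=1}^t (k_i + 1)$ split vectors $\vec{a}$, contributing the second factor of $\prod_{i=1}^t k_i$. Multiplying these bounds gives the claimed $O(t\, l\, \twidth^{2}\, 2^{\twidth + 1} \prod_{i=1}^t k_i^2)$ time, which dominates the cost of initializing the table and reading off the answer.
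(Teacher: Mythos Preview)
Your proposal is correct and follows essentially the same approach as the paper: a Boolean DP over the nice tree decomposition with states indexed by the node, the intersection $S$ with the bag, and the colour-count vector, with the same recurrences at each node type (including the $+\,c_i(S)$ correction at join nodes) and the same running-time analysis.
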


We will solve the problem using the dynamic programming approach as done below.


\textbf{Definition of the states of the DP.}

For a node \(x\) of \(\mathcal{T}\), a subset \(S\) of \(B_x\), a \(t\)-tuple, \((r_i)_{i = 1} ^t\), where for each \(i \in [t]\), \(0 \leq r_i \leq k_i\), we define \(I_x[S,(r_i)_{i = 1} ^t]\) to be as follows:

\[
	I_x[S,(r_i)_{i = 1} ^t] = 1, \text{ if } G_x \footnotemark
	\text{ has an } (r_i)_{i = 1} ^t-fair \text{ vertex cover}
\]

	\footnotetext{Refer to \autoref{sec:preliminaries} for the definition} 
\(	~~~~~~~~~~~~~~~~~~~~~~~~~~~~~\text{which intersects } B_x \text{ at exactly } S \)

	\(I_x[S,(r_i)_{i = 1} ^t] = 0, \text{otherwise}\)


\textbf{Computing the entries of the DP table} We now fill the entries of the DP table as follows. Consider a node \(x \in V(\mathcal{T}) \), \(S \subseteq B_x\), and \((r_i)_{i = 1}^t\). At first we initialise all entries of the DP table with \(0\).

\begin{description}
	\item[Case 1:] \(x\) is a leaf node. Then \(B_x = \emptyset\), and hence \(S = \emptyset\). And we have

\[
	I_x[S,(r_i)_{i = 1} ^t] = \begin{cases}1, \text{ if for each } i \in [t], r_i = 0 \\0, \text{ otherwise} \end{cases}
\]

		For the following cases, we assume that all the DP entries have been filled for all the descendants of \(x\) in \(\mathcal{T}\).

\item[Case 2:] \(x\) is a forget node with child \(y\). Let \(v\) be the vertex forgotten at \(x\). That is, \(B_x = B_y \setminus \set{v}\) for some \(v \in B_y\). We then have

\[
	I_x[S,(r_i)_{i = 1} ^t] = I_y[S,(r_i)_{i = 1} ^t] \oplus I_x[S \cup \{v\},(r_i)_{i = 1} ^t] 
\]

\item[Case 3:] \(x\) is an introduce node with child \(y\). Let \(v\) be the vertex introduced at \(x\). That is, \(B_x = B_y \cup \set{v}\) for some \(v \notin B_y\). If \(v \notin S\), then we have

\[
	I_x[S,(r_i)_{i = 1} ^t] = I_y[S,(r_i)_{i = 1} ^t] 
\]

		Suppose \(v \in S\) and for some \(i^{\star} \in c(v)\), we have \(r_{i^{\star}} = 0\). Then we have

\[
	I_x[S,(r_i)_{i = 1} ^t] = 0
\]

In all other cases, we have

\[
	I_x[S,(r_i)_{i = 1} ^t] = I_y[S \setminus \{v\},(r'_i)_{i = 1} ^t], 
\]
		\(
	\text{where for each } i \in c(v), r'_i = r_i - 1,
	\text{and for each } i \notin c(v), r'_i = r_i
		\)

\item[Case 4:] \(x\) is an introduce edge node with child \(y\). Let \(uv\) be the edge introduced at \(x\). If \(S \cap \{u, v\} = \emptyset\), then we have

\[
	I_x[S,(r_i)_{i = 1} ^t] = 0
\]

Otherwise, we have

\[
	I_x[S,(r_i)_{i = 1} ^t] = I_y[S,(r_i)_{i = 1} ^t]
\]

\item[Case 5:] \(x\) is a join node with children, \(y\) and \(z\). We compute the value for \(I_x[S,(r_i)_{i = 1} ^t]\) by using the Algorithm \autoref{alg:Ix}, as shown below

\begin{algorithm}
\caption{Computing \(I_x[S,(r_i)_{i = 1} ^t]\)}
\label{alg:Ix}
  \begin{algorithmic}
    \FOR{\(a_{1} = 0\) to \(r_{1}\)}
    \STATE $\vdots$
    \FOR{\(a_{t} = 0\) to \(r_{t}\)}
	  \STATE \(I_{x}[S,{(r_{i})}_{i=1}^{t}] = (I_{x}[S,{(r_{i})}_{i = 1}^{t}] \oplus (I_{y}[S,{(a_{i})}_{i=1}^{t}] \odot I_{z}[S,{(r_{i} + c_i(S) - a_{i})}_{i=1}^{t}]))\)
    \ENDFOR
    \STATE $\vdots$
    \ENDFOR
  \end{algorithmic}
\end{algorithm}

\end{description}

\vspace{1mm}
Our algorithm will fill the DP tables for all choices of \(x, S\) and \((r_i)_{ i = 1} ^t\) using the rules above. And once we are done with that,we check \(I_x[\emptyset, (k_i)_{i = 1} ^t]\), where \(x\) is the root node of \(\mathcal{T}\). If it has been assigned the value \(1\), then we return \yes else we return \no.

\paragraph{Correctness of the Algorithm}

We have the following lemma

\begin{lemma}
\label{lem: fvctw}

	Given a node \(x\) of \(\mathcal{T}\), a subset \(S\) of \(B_x\) and a \(t\)-tuple \((r_i)_{i = 1} ^t\), our algorithm assigns the correct value to \(I_x[S, (r_i)_{i = 1} ^t]\), i.e., it assigns the value \(1\) to \(I_x[S, (r_i)_{i = 1} ^t]\) if and only if, the graph \(G_x\) has an \((r_i)_{i = 1} ^t\)-fair vertex cover such that it intersects \(B_x\) at exactly \(S\).

\end{lemma}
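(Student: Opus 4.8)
The plan is to prove the lemma by \emph{induction on the structure of the nice tree decomposition}, processing the nodes of $\mathcal{T}$ from the leaves upward. The induction hypothesis asserts that for every descendant $y$ of $x$, every subset $S' \subseteq B_y$, and every budget tuple $(r'_i)_{i=1}^t$, the entry $I_y[S',(r'_i)_{i=1}^t]$ has already been set correctly, i.e.\ to $1$ exactly when $G_y$ admits an $(r'_i)_{i=1}^t$-fair vertex cover meeting $B_y$ in exactly $S'$. For each of the five node types I would establish the two directions separately: (soundness) if the recurrence sets the entry to $1$ then the claimed fair vertex cover of $G_x$ exists, and (completeness) if such a fair vertex cover exists then the recurrence produces $1$.

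For the base case ($x$ a leaf), $G_x$ is the empty graph, whose only vertex cover is $\emptyset$, which is $(0,\dots,0)$-fair; this matches Case~1. For a \textbf{forget node} the underlying graph is unchanged ($G_x = G_y$), and any fair vertex cover meeting $B_x$ in $S$ either contains the forgotten vertex $v$ (and so meets $B_y$ in $S \cup \{v\}$) or not (and so meets $B_y$ in $S$); taking the Boolean OR of the two corresponding child entries is therefore exactly right. For an \textbf{introduce-vertex node} the key structural observation is that, because edges are introduced exactly once and only at introduce-edge nodes, \emph{no edge incident to the newly introduced vertex $v$ lies in $E_x$}, so $v$ is isolated in $G_x$. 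Hence if $v \notin S$ the cover requirement and the budget are untouched and the entry copies $I_y[S,(r_i)_{i=1}^t]$; if $v \in S$, placing $v$ in the cover is always legal for the edge constraints but consumes one unit of budget for every colour in $c(v)$, which is exactly captured by decrementing $r_i$ for $i \in c(v)$ (and returning $0$ when some such $r_{i^\star}$ is already $0$). For an \textbf{introduce-edge node}, $G_x$ differs from $G_y$ only by the new edge $uv$ with $u,v \in B_x$; the cover meets $B_x$ in $S$, so the edge is covered iff $S \cap \{u,v\} \neq \emptyset$, and when it is covered the fair covers of $G_x$ and $G_y$ (with the same budget) coincide.

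The main obstacle is the \textbf{join node}, where the correctness of the budget split must be argued carefully. Here $V_y \cap V_z = B_x$ and $E_y, E_z$ partition $E_x$. I would show that a set $F \subseteq V_x$ with $F \cap B_x = S$ is an $(r_i)_{i=1}^t$-fair vertex cover of $G_x$ if and only if $F \cap V_y$ is an $(a_i)_{i=1}^t$-fair vertex cover of $G_y$ meeting $B_y$ in $S$ and $F \cap V_z$ is a $(b_i)_{i=1}^t$-fair vertex cover of $G_z$ meeting $B_z$ in $S$, for some split satisfying $a_i + b_i - c_i(S) = r_i$ for every colour $i$. The ``$\Leftarrow$'' direction uses that every edge of $E_x$ lies in exactly one of $E_y, E_z$ and is covered there, so $F$ covers $G_x$; the ``$\Rightarrow$'' direction restricts $F$ to the two sides. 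The only subtlety is that the vertices of $S$ are counted in \emph{both} colour profiles, so the combined count of colour $i$ is $a_i + b_i - c_i(S)$ rather than $a_i + b_i$; solving for the second summand gives $b_i = r_i + c_i(S) - a_i$, which is precisely the index used in \autoref{alg:Ix}. Ranging each $a_i$ over $0,\dots,r_i$ and OR-ing the products $I_y[S,(a_i)_{i=1}^t] \odot I_z[S,(r_i + c_i(S) - a_i)_{i=1}^t]$ then enumerates all feasible splits, completing the induction. Finally, applying the established claim at the root (with $B_x = \emptyset$ and budget $(k_i)_{i=1}^t$) yields the correctness of the overall decision procedure.
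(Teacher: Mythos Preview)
Your proposal is correct and follows essentially the same approach as the paper: both proceed by bottom-up induction over the nice tree decomposition, treat the five node types in the same way, and handle the join node via the identical budget-split identity $a_i + b_i - c_i(S) = r_i$ (equivalently $b_i = r_i + c_i(S) - a_i$), which is exactly the index appearing in Algorithm~\ref{alg:Ix}. Your write-up is in fact slightly more explicit than the paper's about \emph{why} the offset $c_i(S)$ appears (double-counting of the shared bag vertices) and about why the introduced vertex is isolated in $G_x$ (edges only enter $E_x$ at introduce-edge nodes), but there is no substantive difference in strategy.
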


\begin{proof}
	We will prove this by induction.
	
	\textit{\textbf{Base case:}} \(x\) is a leaf node of \(\mathcal{T}\). Then \(B_x = \emptyset\) and hence \(S = \emptyset\). Also \(G_x\) is the empty graph and hence any vertex cover will be of size \(0\) which implies that it only has \((r_i)_{i = 1} ^t\)-fair vertex cover where for each \(i \in [t]\), \(r_i = 0\). Thus the assignment of our algorithm is correct.

	\textit{\textbf{Induction hypothesis:}} \(x\) is a node of \(\mathcal{T}\) such that we have correctly filled the DP entries for all nodes in the subtree of \(\mathcal{T}\), rooted at \(x\), including \(x\).

	\textit{\textbf{Inductive step:}} \(x\) is a non-leaf node of \(\mathcal{T}\) such that we have correctly filled the DP entries for all nodes in the subtree of \(\mathcal{T}\), rooted at \(x\). We then have the following cases.
	
\begin{description}
	\item[Case 1:] \(x\) is a forget vertex node with child \(y\) and it forgets the vertex \(v\).
		
		Let our algorithm assign the value \(1\) to \(I_x[S,(r_i)_{i = 1} ^t]\). This means at least one of \(I_y[S, (r_i)_{i = 1} ^t]\) and \(I_y[S \cup \{v\}, (r_i)_{i = 1} ^t]\) must be \(1\). By induction hypothesis, there exists an \((r_i)_{i = 1} ^t\)-fair vertex cover of \(G_y\), say \(F_y\) such that, \(F_y \cap B_y = S\) or \(F_y \cap B_y = S \cup \{v\}\). As \(G_x = G_y\), \(F_y\) is an \((r_i)_{i = 1} ^t\)-fair vertex cover of \(G_x\), such that \(F_y \cap B_x = S\).
		
		Conversely, suppose there exists an \((r_i)_{i = 1} ^t\)-fair vertex cover of \(G_x\), say \(F_x\) such that \(F_x \cap B_x = S\). Again as \(G_x = G_y\), then \(F_x\) is an \((r_i)_{i = 1} ^t\)-fair vertex cover of \(G_y\), such that \(F_x \cap B_y = S\) or \(F_x \cap B_y = S \cup \{v\}\). By induction hypothesis, \(I_y[S, (r_i)_{i = 1} ^t] = 1\) or \(I_y[S \cup \{v\}, (r_i)_{i = 1} ^t] = 1\). Thus, we have our algorithm assign \(1\) to \(I_x[S, (r_i)_{i = 1} ^t]\).
		
	\item[Case 2:] \(x\) is is an introduce vertex node with child \(y\) and it introduces the vertex \(v\). 

		Suppose, \(v \notin S\). Then any \((r_i)_{i = 1} ^t\)-fair VC of \(G_x\), say \(F_x\), such that \(F_x \cap B_x = S\), is also an \((r_i)_{i = 1} ^t\)-fair VC of \(G_y\), such that \(F_x \cap B_x = S\). Conversely, if we have an \((r_i)_{i = 1} ^t\)-fair VC of \(G_y\), say \(F_y\), such that \(F_y \cap B_y = S\), then as \(v\) is an isolated vertex in \(G_x\), we have an \((r_i)_{i = 1} ^t\)-fair VC of \(G_x\), such that \(F_y \cap B_x = S\). Thus, our algorithm correctly assigns the value \(I_y[S, (r_i)_{i = 1} ^t]\) to \(I_x[S, (r_i)_{i = 1} ^t]\).

		Suppose, \(v \in S\). Then any vertex cover of \(G_x\) which intersects \(B_x\) at \(S\) will have at least one vertex of colour \(i\), for each \(i \in c(v)\). Thus, our algorithm correctly assigns 

\[
	I_x[S, (r_i)_{i = 1} ^t] = 0, \text{ if for some } i^{\star} \in c(v), r_{i^{\star}} = 0
\]

		Now let's consider the case when \(v \in S\) and for each \(i \in c(v), r_i \geq 1\). Suppose our algorithm assigns the value \(1\) to \(I_x[S, (r_i)_{i = 1} ^t]\). Then it means that \(I_y[S \setminus \{v\}, (r'_i)_{i = 1} ^t] = 1\), where for each \(i \in c(v), r'_i = r_i - 1\) and for each \(i \notin c(v), r'_i = r_i\). By induction hypothesis, \(G_y\) has an \((r'_i)_{i = 1} ^t\)-fair vertex cover , say \(F_y\), such that \(F_y \cap B_y = S \setminus \{v\}\). Then, \(F_y \cup \{v\}\) is an \((r_i)_{i = 1} ^t\)-fair vertex cover of \(G_x\). Conversely, let \(F_x\) be an \((r_i)_{i = 1} ^t\)-fair vertex cover of \(G_x\), such that \(F_x \cap B_x = S\). Then, \(F_x \setminus \{v\}\) is an \((r'_i)_{i = 1} ^t\)-fair vertex cover of \(G_y\) such that, \(F_x \cap B_y = S \setminus \{v\}\). By induction hypothesis, \(I_y[S \setminus \{v\}, (r'_i)_{i = 1} ^t] = 1\) and thus our algorithm will assign \(1\) to \(I_x[S, (r_i)_{i = 1} ^t]\). 

	\item[Case 3:] \(x\) is is an introduce edge node with child \(y\) and it introduces the edge \(uv\). Suppose, \(S \cap \{u, v\} = \emptyset\). Then we cannot have a vertex cover of \(G_x\) which will intersect \(B_x\) at exactly \(S\) as then edge \(uv\) will not be covered. Thus, our algorithm correctly assigns the value \(0\) to \(I_x[S, (r_i)_{i = 1} ^t]\).

		Now let's consider the case, when \(S \cap \{u , v\} \neq \emptyset\). Let our algorithm assign the value \(1\) to \(I_x[S, (r_i)_{i = 1} ^t]\). This means that \(I_y[S, (r_i)_{i = 1} ^t] = 1\). By induction hypothesis, we will have an \((r_i)_{i = 1} ^t\)-fair vertex cover of \(G_y\), say \(F_y\), such that \(F_y \cap B_y = S\). We observe that any edge in \(G_x\), aside from \(uv\) is also an edge in \(G_y\) and hence it covered by \(F_y\). Also, as \(F_y \cap B_y = S\), then \(\{u, v\} \cap F_y \neq \emptyset\) and hence even the edge \(uv\) is also covered by \(F_y\). Thus, we get an \((r_i)_{i = 1} ^t\)-fair vertex cover of \(G_x\), \(F_y\), such that \(F_y \cap B_x = S\). Conversely, suppose \(F_x\) is an \((r_i)_{i = 1} ^t\)-fair vertex cover of \(G_x\), such that \(F_x \cap B_x = S\). As \(G_y\) is a subgraph of \(G_x\), with \(V(G_x) = V(G_y)\), then \(F_x\) is also an \((r_i)_{i = 1} ^t\)-fair vertex cover of \(G_y\), such that \(F_x \cap B_y = S\). By induction hypothesis, \(I_y[S, (r_i)_{i = 1} ^t] = 1\) and hence our algorithm will assign \(1\) to \(I_x[S, (r_i)_{i = 1} ^t] = 1\).

	\item[Case 4:] \(x\) is a join node with children \(y\) and \(z\).

		Suppose, our algorithm assigns \(1\) to \(I_x[S, (r_i)_{i = 1} ^t]\). Then we have \(t\)-tuple, \((a_i)_{i = 1} ^t\) such that, \(I_y[S, (a_i)_{i = 1} ^t] = 1\) and \(I_z[S, (r_i + c_i(S) - a_i)_{i = 1} ^t] = 1\). By induction hypothesis, \(G_y\) has an \((a_i)_{i = 1} ^t\)-fair vertex cover, say \(F_y\), such that \(F_y \cap B_y = S\) and \(G_z\) has an \((r_i + c_i(S) - a_i)_{i = 1} ^t\)-fair vertex cover, say \(F_z\), such that \(F_z \cap B_z = S\). Recall that, \(B_x = B_y = B_z\) and thus, \(F_y \cap B_y = F_z \cap B_z = S\). Then, \(F_y \cup F_z\) is an \((r_i)_{i = 1} ^t\)-fair vertex cover of \(G_x\) such that, \((F_y \cup F_z) \cap B_x = S\).

		Conversely, suppose \(F_x\) is an \((r_i)_{i = 1} ^t\)-fair vertex cover of \(G_x\) such that, \(F_x \cap B_x = S\). Let, for each \(i \in [t]\), we define \(a_i := c_i(F_x \cap V(G_y))\), which implies that \(0 \leq a_i \leq r_i\). Then, we have \(F_x \cap V(G_y)\), an \((a_i)_{i = 1} ^t\)-fair vertex cover of \(G_y\), such that \((F_x \cap V(G_y)) \cap B_y = S\), since \(B_x = B_y\). Also, we have \(F_x \cap V(G_z)\), an \((r_i + c_i(S) - a_i)_{i = 1} ^t\)-fair vertex cover of \(G_z\), such that \((F_x \cap V(G_z)) \cap B_z = S\), since \(B_x = B_z\). By induction hypothesis, \(I_y[S, (a_i)_{i = 1} ^t] = I_z[S, (r_i + c_i(S) - a_i)_{i = 1} ^t] = 1\). Thus our algorithm will assign the value \(1\) to \(I_x[S, (r_i)_{i = 1} ^t]\). Also, note that once have assigned the value \(1\) to \(I_x[S, (r_i)_{i = 1} ^t]\), we do not want to forget and get it overwritten by another choice of \((a_i)_{i = 1} ^t\)'s.

\end{description}

	From the above lemma, all the DP entries have been correctly evaluated by the algorithm and hence, when \(x\) is the root of \(\mathcal{T}\), then \(I_x[\emptyset, (k_i)_{i = 1} ^t] = 1\) if and only if \(G\) has a \((k_i)_{ i = 1} ^t\)-fair vertex cover.

\end{proof}

\paragraph{Analysis of running time}

 Let \(l := \lvert V(\mathcal{T}) \rvert \). For each node \(x\) of \(\mathcal{T}\), we have at most \(2^{t_w + 1}\) choices for \(S\) and for each \(i \in [t]\) we have \(k_i + 1\) choices for \(r_i\). Thus, we have at most \(l 2^{t_w + 1} \prod_{i = 1} ^t (k_i + 1)\) many cells in the DP table. We now evaluate the time taken to fill each cell, \(I_x[S, (r_i)_{i = 1}]\).

 To identify the leaf nodes we need \(O(l)\) time. Once done, we will need \(O(1)\) time to fill each entry for each leaf node.

Once, we are done with leaf nodes, we will need \(O({t_w}^2)\) time to identify the category of non-leaf node, say \(x\). If \(x\) is a forget vertex node, then we need \(O(1)\) time to fill each cell associated with \(x\).

If \(x\) is an introduce vertex node, which introduces the vertex \(v\), then we will need \(O(tw)\) time to determine if \(v\) is in \(S\) or not. If \(x \notin S\), then we need an additional \(O(1)\) time to enter the value of \(I_x[S, (r_i)_{i = 1}]\). If \(x \in S\), then we need to check which of the \(r_i\)'s are \(0\). This will be done in \(O(t)\) time and then we will need an additionl \(O(1)\) time to fill the entries.

If \(x\) is an introduce edge node, which introduces the edge \(uv\), then we need \(O(t_w)\) time to check if \(S \cap \{u, v\}\) or not. After that we need \(O(1)\) to fill each cell.

If \(x\) is a join node, then we need to spend \(O(\prod_{i = 1} ^t r_i)\) time to fill the cell \(I_x[S, (r_i)_{i = 1} ^t]\), which is at most \(O(\prod_{i = 1} ^t k_i)\).

Thus, the maximum time spend in a cell is \(O( {t_w}^2 t \prod_{i = 1} ^t k_i)\). And so the total time taken is \(O(t l {t_w}^2 2^{t_w + 1} \prod_{i = 1} ^t k_i ^2)\).

\section{\FVS}\label{sec:fvs}

In this section we study the \ffvsfull\ problem. Just like \fairvc, \ffvs\ is also \NPH. 
And we can show the \NP-hardness by a simple reduction from \FVS. Let \((G,k)\) be an instance of
\FVS. We assume without loss of generality that $G$ has at least $k$ vertices, for otherwise $(G, k)$ is a \yesinstance\ of \FVS, and in this case, we can return a trivial \yesinstance\ of \ffvs. Then \(G\) has an fvs of size at most \(k\) if and only \(G\) has an fvs of size exactly \(k\).
We now construct a \(1\)-coloured graph \((H,c)\) by setting (i) \(H=G\) and (ii)
\(c\) to be the function that assigns the colour set \(\{1\}\) to every vertex
of \(H\). Further, we set \(\ttuple=(k)\)
It is easy to see that \(S\) is an fvs of \(G\) of size exactly \(k\) if and only if \(S\) is a \ttuple-fair fvs of \(H\).

As we did for \fairvc, we study two different parameterizations of \ffvs---with respect to the treewidth of the input graph (see \autoref{sec:fvs-treewidth}) and with respect to the total colour budget (see \autoref{sec:fvs-fpt}). 
With respect to both the parameterizations, we design single-exponential FPT algorithms.

\subsubsection{Summary of Results From}~\cite{bodlaender2015deterministic}

\paragraph*{Terminology and notation for the rank-based approach.} We now introduce the following  operations on a collection of partitions of a set. For a set $U$, $\Pi(U)$ denotes the set of all partitions of $U$. 
For $v \in U$ and partition $p \in \Pi(U)$, we write $p - v$ to mean the partition of $U \setminus \set{v}$ obtained from $p$ by removing $v$. For two partitions $p, q \in \Pi(U)$, $p \sqcup q$ is a partition of $U$ defined as follows: Consider a graph $H$ with vertex $U$ and obtained by turning every set in $p \cup q$ into a clique. And we define $p \sqcup q$ to be the partition of $U$ into the  connected components of $H$. 
For a family of partitions $\mathcal{A} \subseteq \Pi(U)$, we define the following operations that deal with modifying $\mathcal{A}$ while adding an element to $U$, removing an element from $U$, merging a pair of sets in the partitions in $\mathcal{A}$, and joining $\mathcal{A}$ with another family $\mathcal{B} \subseteq \Pi$. 

Formally, for $\mathcal{A} \subseteq \Pi$ we define:

\begin{itemize}
    \item {\bf Insert.} For $v \notin U$, $\ins(v, \mathcal{A}) = \{ p \cup \set{\set{v}} ~|~ p \in \mathcal{A}\}$. That is, $\ins(v, \mathcal{A})$ adds $v$ to every partition in $\mathcal{A}$ as a singleton set, and thus $\ins(v, \mathcal{A}) \subseteq \Pi(U \cup \set{v})$. 
    
    \item {\bf Project.} For $v \in U$, $\proj(v, \mathcal{A}) = \{p - v ~ | ~ p \in \mathcal{A} \text{ and } \set{v} \notin p\}$. That is, $\proj(v, \mathcal{A})$ removes $v$ from every partition in $\mathcal{A}$ and discards those partitions in which $v$ was present as a singleton set. Thus, $\proj(v, \mathcal{A}) \subseteq \Pi(U \setminus \set{v})$. 
    
    \item {\bf Glue.} For $u, v$, let $\hat U = U \cup \set{u, v}$. The elements $u$ and $v$ may or may not be present in $U$. Let $\hat U[\set{u, v}]$ be the partition of $\hat U$ that contains the set $\set{u, v}$ and all elements of $\hat U \setminus \set{u, v}$ as singleton sets. For $p \in \mathcal{A}$, let $\hat p = p$ if $u, v \in U$; $\hat p = p \cup \set{\set{u}}$ if $v \in U$ and $u \notin U$; $\hat p = p \cup \set{\set{v}}$ if $u \in U$ and $v \notin U$; and $\hat p = p \cup \set{\set{u}} \cup \set{\set{v}} $ if $u, v \notin U$. Now, 
    \[
    \glue(uv, \mathcal{A}) = \{\hat U[\set{u, v}] \sqcup (p \cup \set{\set{u}, \set{v}}) ~ | ~ p \in \mathcal{A} \}.
    \]
    That is, $\glue(uv, \mathcal{A}) \subseteq \Pi(\hat U)$ adds the elements $u$ and $v$ to $\hat U$ (if they are not already present in $U$) and merges the blocks of $\hat p$ that contain $u$ and $v$. Thus, $\glue(uv, \mathcal{A}) \subseteq \Pi(\hat U)$.  
    
    \item {\bf Join.} For $\mathcal{B}\subseteq \Pi(U)$, $\join(\mathcal{A}, \mathcal{B}) = \{ p \sqcup q ~|~ p \in \mathcal{A} \text{ and } q \in \mathcal{B} \}$. 
\end{itemize}

These operations were defined by Bodlaender et al.~\cite{bodlaender2015deterministic}. But their definitions were more general, in the sense that they were working with families of weighted partitions; that is, families of the type $\mathcal{A} \subseteq \Pi(U) \times \mathbb{N}$. But weights are not necessary for our purposes. 

\begin{proposition}\cite{bodlaender2015deterministic}
\label{prop:operations}
We can perform each of the operations insert, project, glue in time $S \times |U|^{\cO(1)}$, where $S$ is the size of the input of the operation. Given, $\mathcal{A}, \mathcal{B}$, we can compute $\join(\mathcal{A}, \mathcal{B})$ in time $\card{A} \cdot \card{B} \cdot \card{U}^{\cO(1)}$. 
\end{proposition}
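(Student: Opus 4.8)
The plan is to fix an explicit encoding of partitions and then bound the cost of each operation under that encoding. Since Proposition~\ref{prop:operations} drops the weights used by Bodlaender et al.~\cite{bodlaender2015deterministic}, I would represent a partition $p \in \Pi(U)$ by a labelling $\fn{\ell_p}{U}{U}$ that sends each element to the canonical (say, minimum) representative of its block; equivalently, one maintains a union--find structure over $U$. Under this encoding, testing whether two elements share a block, and reading off the block of a given element, both cost $\card{U}^{\cO(1)}$, and a family $\mathcal{A}$ is stored simply as a list of such labellings. The input size $S$ of an operation is then the total length of this list, which is $\Theta(\card{\mathcal{A}} \cdot \card{U})$ up to the per-element encoding cost.

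For $\ins$ and $\proj$ I would process $\mathcal{A}$ one partition at a time, since both are defined blockwise on each $p \in \mathcal{A}$. For $\ins(v, \mathcal{A})$, appending the singleton $\set{v}$ to a single labelling costs $\cO(1)$; for $\proj(v, \mathcal{A})$, I would first test in $\card{U}^{\cO(1)}$ time whether $\set{v}$ is a block of $p$ (i.e.\ whether $v$ is its own representative with no other element pointing to it), discard $p$ if so, and otherwise delete $v$ and, if $v$ was the representative of its block, relabel that block to a surviving element. Each input partition is touched once with $\card{U}^{\cO(1)}$ work, so the total is $S \times \card{U}^{\cO(1)}$, as claimed. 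The operation $\glue$ is handled the same way: for each $p$, I would first insert the absent elements among $u,v$ as singletons (the case analysis defining $\hat p$ in the definition), after which the effect of $\sqcup$ with $\hat U[\set{u,v}]$ is exactly to union the blocks of $u$ and $v$, a single union--find merge costing $\card{U}^{\cO(1)}$.

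The one operation that forces a different cost bound, and the step I expect to be the crux, is $\join$. Here I cannot work one partition at a time, because $\join(\mathcal{A},\mathcal{B})$ ranges over all pairs $(p,q) \in \mathcal{A} \times \mathcal{B}$, producing $\card{\mathcal{A}} \cdot \card{\mathcal{B}}$ outputs. For a fixed pair I would compute $p \sqcup q$ directly from its definition: initialise a union--find structure on $U$, union the elements within each block of $p$ and then within each block of $q$, and read off the resulting connected components as the output partition. This realises the ``turn every block of $p \cup q$ into a clique and take connected components'' definition of $\sqcup$ while performing only $\cO(\card{U})$ unions, so each pair costs $\card{U}^{\cO(1)}$. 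Summing over all pairs yields the stated $\card{\mathcal{A}} \cdot \card{\mathcal{B}} \cdot \card{U}^{\cO(1)}$ bound. The only subtlety to verify carefully is that the union--find overlay genuinely computes $\sqcup$, namely that two elements land in the same block of $p \sqcup q$ if and only if they are connected through an alternating chain of $p$-blocks and $q$-blocks; this follows because unioning each block makes precisely those pairs adjacent in the clique graph $H$ used to define $\sqcup$, so the union--find components coincide with the connected components of $H$.
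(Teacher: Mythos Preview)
Your argument is correct, but note that the paper does not actually prove Proposition~\ref{prop:operations}: it is quoted verbatim from Bodlaender et al.~\cite{bodlaender2015deterministic} and used as a black box. So there is no ``paper's own proof'' to compare against. What you have written is a perfectly good self-contained justification of the cited bounds: the canonical-representative (union--find) encoding you chose makes each of $\ins$, $\proj$, $\glue$ a linear scan over the input family with $\card{U}^{\cO(1)}$ work per partition, and your treatment of $\join$ correctly identifies that the $\card{\mathcal{A}} \cdot \card{\mathcal{B}}$ factor is forced by the pairwise definition, with $p \sqcup q$ computable in $\card{U}^{\cO(1)}$ via unions over the blocks of $p$ and then of $q$. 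The closing remark that the union--find components coincide with the connected components of the clique graph $H$ defining $\sqcup$ is exactly the right correctness check.
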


\begin{proposition}\cite{bodlaender2015deterministic}
\label{prop:reduce}
There exists an algorithm that given a family of partitions $\mathcal{A} \subseteq \Pi(U)$ as input, runs in time $2^{\cO(U)} \cdot \card{\mathcal{A}}^{\cO(1)}$, and outputs a family of partitions $\mathcal{A}' \subseteq \mathcal{A}$ such that (i) $\card{\mathcal{A}'} \leq 2^{\card{U} - 1}$ and (ii) for every $p \in \mathcal{A}$ and $q \in \Pi(U)$ with $p \sqcup q = \set{U}$, there exists $p' \in \mathcal{A}'$ such that $p' \sqcup q = \set{U}$. 
\end{proposition}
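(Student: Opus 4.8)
The plan is to recast the proposition as a rank bound over the field $\mathbb{F}_{2}$ and then extract everything from elementary linear algebra. First I would define the \emph{cut matrix} $M \in \set{0,1}^{\Pi(U) \times \Pi(U)}$ by $M[p,q] = 1$ exactly when $p \sqcup q = \set{U}$, viewing all entries as elements of $\mathbb{F}_{2}$. Everything then reduces to proving that $M$ has $\mathbb{F}_{2}$-rank at most $2^{\card{U}-1}$; the fact that $\Pi(U)$ itself is enormous will be handled by never writing $M$ down explicitly, but instead working through a compact factor.

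The technical core, and the step I expect to be the main obstacle, is the rank bound, which I would obtain from an explicit factorization of $M$. Fix any $u_{1} \in U$ and let $\mathcal{X} = \set{X \subseteq U \mid u_{1} \in X}$, so that $\card{\mathcal{X}} = 2^{\card{U}-1}$. Define $A \in \set{0,1}^{\Pi(U) \times \mathcal{X}}$ by setting $A[p,X] = 1$ iff $X$ is a union of blocks of $p$. The combinatorial identity driving the proof is that $X$ is a union of blocks of both $p$ and $q$ if and only if $X$ is a union of blocks of $p \sqcup q$, and that the number of such $X$ containing $u_{1}$ equals $2^{b-1}$, where $b$ is the number of blocks of $p \sqcup q$. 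Since $2^{b-1}$ is odd precisely when $b = 1$, reducing modulo $2$ gives $(A A^{\top})[p,q] = [\, p \sqcup q = \set{U}\,] = M[p,q]$. Hence $M = A A^{\top}$ over $\mathbb{F}_{2}$, and as $A$ has only $2^{\card{U}-1}$ columns, $\operatorname{rank}(M) \le 2^{\card{U}-1}$.

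With the factorization in hand, the algorithm and both guarantees follow. I would run Gaussian elimination over $\mathbb{F}_{2}$ on the $\card{\mathcal{A}} \times 2^{\card{U}-1}$ matrix obtained by restricting $A$ to the rows indexed by $\mathcal{A}$, and let $\mathcal{A}'$ be a maximal linearly independent set of these rows. Then $\card{\mathcal{A}'}$ is at most the number of columns, i.e.\ at most $2^{\card{U}-1}$, giving (i). For (ii), observe that every row $A[p,\cdot]$ with $p \in \mathcal{A}$ is an $\mathbb{F}_{2}$-combination of rows $A[p',\cdot]$ with $p' \in \mathcal{A}'$; multiplying by $A^{\top}$ on the right, the corresponding row $M[p,\cdot]$ is the \emph{same} combination of the rows $M[p',\cdot]$. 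Now take $p \in \mathcal{A}$ and $q \in \Pi(U)$ with $p \sqcup q = \set{U}$, so $M[p,q] = 1$. Since this entry is a sum over $\mathbb{F}_{2}$ of the entries $M[p',q]$, an odd---hence nonzero---number of them equal $1$, and any such $p' \in \mathcal{A}'$ satisfies $p' \sqcup q = \set{U}$, which is exactly (ii).

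It remains to account for the running time. Each entry of the restricted matrix is computable in $\card{U}^{\cO(1)}$ time, the matrix has $\card{\mathcal{A}}$ rows and $2^{\card{U}-1}$ columns, and Gaussian elimination over $\mathbb{F}_{2}$ on it costs $2^{\cO(\card{U})} \cdot \card{\mathcal{A}}^{\cO(1)}$, matching the claimed bound. This is essentially the rank-based argument of Bodlaender et al.~\cite{bodlaender2015deterministic}; the only real subtlety is verifying the parity identity for the number of $u_{1}$-containing unions of blocks, everything else being routine linear algebra.
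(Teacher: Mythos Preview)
The paper does not prove this proposition at all; it is quoted verbatim as a black box from Bodlaender et al.~\cite{bodlaender2015deterministic}, so there is no in-paper proof to compare against. Your argument is correct and is precisely the rank-based proof given in that reference: the factorization $M = AA^{\top}$ over $\mathbb{F}_{2}$ via the ``cuts'' matrix, the parity identity $\sum_{X \ni u_{1}} [X \text{ respects } p][X \text{ respects } q] \equiv [\,p \sqcup q = \{U\}\,] \pmod 2$, and Gaussian elimination on the $\card{\mathcal{A}} \times 2^{\card{U}-1}$ submatrix to extract a row basis. Both the combinatorial claim (unions of blocks of $p$ and of $q$ are exactly unions of blocks of $p \sqcup q$, and there are $2^{b-1}$ of them through $u_{1}$) and the linear-algebraic extraction of (i) and (ii) are handled correctly, and the runtime analysis matches the stated bound.
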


\subsubsection{\ffvsfull\ on Graphs of Bounded Treewidth} 
\label{sec:fvs-treewidth}
In the section, we design a fixed-parameter tractable algorithm for the  \ffvsfull\ problem, parameterized by the treewdith of the input graph. As in the previous section, we assume that along with the graph $G$, we are also given a tree decomposition $(\mathcal{T}, (B_x)_{x \in V(\mathcal{T})})$ of width \twidth. Specifically, we prove the following theorem. 

\begin{theorem}
\label{thm:fvs}
	There is an algorithm, that given an $n$-vertex graph $G$, a colouring function $\fn{c}{V(G)}{2^{[t]} \setminus \set{\emptyset}}$, a $t$-tuple of non-negative integers $(k_i)_{i = 1}^{t}$, and a tree decomposition $(\mathcal{T}, (B_x)_{x \in V(\mathcal{T})})$ of $G$ of width \twidth, runs in time $n^{\cO(1)} 2^{\cO({\twidth})}$, and decides correctly if $G$ has a $(k_i)_{i=1}^{t}$-fair fvs. 
\end{theorem}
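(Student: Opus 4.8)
The goal is a dynamic-programming algorithm over the nice tree decomposition that decides \ffvsfull in time $n^{\cO(1)} 2^{\cO(\twidth)}$. The overall shape mirrors the standard single-exponential treewidth algorithm for \FVS via the rank-based (``cut and count''-style connectivity tracking) approach of Bodlaender et al.~\cite{bodlaender2015deterministic}, summarised in \autoref{prop:operations} and \autoref{prop:reduce}. The central complication over plain \FVS is that deciding whether a solution exists is not enough: we must track, for each colour $i \in [t]$, how many vertices of colour $i$ the partial solution has \emph{deleted} so far, so that the final count matches $k_i$ exactly. The plan is to augment the usual connectivity-based DP states with a colour-count vector, and to argue that this augmentation does not blow up the running time beyond an $n^{\cO(1)}$ factor.

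First I would set up the complementary formulation: instead of tracking the feedback vertex set $S$ directly, track the induced forest $F = V(G_x) \setminus S$. For a node $x$, the DP state records (i) the subset $S_x = B_x \cap F$ of bag vertices kept in the forest, (ii) a partition $p$ of $S_x$ recording which kept bag-vertices currently lie in the same tree of the partial forest (this is the connectivity information the rank-based method maintains as a family $\mathcal{A} \subseteq \Pi(S_x)$), and crucially (iii) a colour-deletion vector $(r_i)_{i=1}^{t}$ with $0 \le r_i \le k_i$, recording that exactly $r_i$ vertices of colour $i$ have been placed into the solution $S$ within $G_x$. So the table is indexed by $(x, S_x, (r_i)_{i=1}^t)$ and stores a reduced family of partitions $\mathcal{A}_x[S_x, (r_i)] \subseteq \Pi(S_x)$, representing the achievable connectivity patterns of induced forests on $G_x$ whose intersection with the bag is $S_x$ and whose deleted colour-profile is exactly $(r_i)$, together with the acyclicity condition that the number of forest-edges equals (number of forest-vertices) minus (number of trees).

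Next I would spell out the transitions for the four node types, each obtained by combining the corresponding Bodlaender operation with a colour-count update. At a leaf, the only entry is the empty partition with $(r_i) = (0,\dots,0)$. At an introduce-vertex node introducing $v$, there are two branches: placing $v$ into the solution $S$ (increment $r_i$ by one for every $i \in c(v)$, reject if any resulting $r_i$ exceeds $k_i$, and leave $\mathcal{A}$ unchanged on the bag), or keeping $v$ in the forest (apply $\ins(v, \mathcal{A})$, colour-counts unchanged). Introduce-edge nodes on an edge $uv$ with both endpoints kept apply $\glue(uv, \mathcal{A})$ while discarding any partition in which $u,v$ already lay in the same block—this is exactly how acyclicity is enforced. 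Forget nodes apply $\proj(v, \mathcal{A})$, keeping the colour vector, and take the union over the kept/deleted status of $v$ at the child. Join nodes combine two children via $\join(\mathcal{A}_y, \mathcal{A}_z)$, and here the colour vectors must be \emph{convolved}: a target profile $(r_i)$ on $G_x$ splits as $(a_i)$ from $G_y$ and $(r_i - a_i)$ (corrected for the shared bag vertices, which are counted in both children) from $G_z$, so I would range over all $0 \le a_i \le r_i$, exactly as in the join step of the \fairvc treewidth algorithm. After each step I would invoke \autoref{prop:reduce} to shrink every stored family to size at most $2^{\card{S_x}-1} \le 2^{\twidth+1}$, which is what keeps the partition families single-exponential in \twidth.

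Correctness I would establish by the usual bottom-up induction over $\mathcal{T}$: the invariant is that $\mathcal{A}_x[S_x,(r_i)]$ represents, up to the equivalence preserved by \autoref{prop:reduce}, every realizable connectivity pattern of an induced forest of $G_x$ with bag-trace $S_x$ and deletion profile $(r_i)$; the ``representation'' guarantee (ii) of \autoref{prop:reduce} ensures no solution that can be completed to a global forest is lost. At the root (empty bag) we answer \yes iff some entry $\mathcal{A}_{\text{root}}[\emptyset, (k_1,\dots,k_t)]$ is nonempty and certifies a single connected acyclic structure—i.e.\ the kept vertices of $G - S$ form a forest with the exact colour budget $(k_i)$ deleted. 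For the running time, the number of table entries at a node is $2^{\twidth+1} \cdot \prod_{i=1}^t (k_i+1)$, each holding a reduced family of size $2^{\cO(\twidth)}$; every operation runs in $2^{\cO(\twidth)} \cdot n^{\cO(1)}$ by \autoref{prop:operations} and \autoref{prop:reduce}, and the join convolution adds only a $\prod_i(k_i+1) = n^{\cO(1)}$ overhead since each $k_i \le n$. Summing over the $\cO(\twidth \cdot n)$ nodes gives the claimed $n^{\cO(1)} 2^{\cO(\twidth)}$ bound.

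The main obstacle I anticipate is the join node: I must correctly convolve the colour-deletion vectors across the two children \emph{without} double-counting the vertices in the shared bag $B_x = B_y = B_z$ (both children ``see'' the same bag vertices and their deletion status), while simultaneously composing the connectivity partitions through $\join$ and maintaining the global acyclicity constraint. Getting the bookkeeping of shared-bag colour counts exactly right—mirroring the $c_i(S)$ correction term that appears in the join step of the \fairvc treewidth algorithm—and arguing that acyclicity is preserved under $\sqcup$ (no cycle is created precisely because $\glue$ already rejected within-block merges and $\join$ only unions trees across the two edge-disjoint subgraphs) is the delicate part of both the construction and the correctness proof.
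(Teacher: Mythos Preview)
Your plan has a genuine gap in how acyclicity interacts with \autoref{prop:reduce}. You propose to enforce acyclicity \emph{locally}: at an introduce-edge node for $uv$ with both endpoints kept, you apply $\glue(uv,\cdot)$ but discard any partition in which $u$ and $v$ already share a block. The problem is that the representative-set guarantee of \autoref{prop:reduce} preserves only \emph{connectability}: it promises that whenever some $p\in\mathcal{A}$ satisfies $p\sqcup q=\{U\}$ for a future completion $q$, some surviving $p'\in\mathcal{A}'$ does too. It says nothing about preserving the block structure needed for your cycle test. Concretely, if $\mathcal{A}$ contains both $p_1=\{\{u\},\{v\},\dots\}$ and the coarser $p_2=\{\{u,v\},\dots\}$, then $p_2\sqcup q=\{U\}$ whenever $p_1\sqcup q=\{U\}$, so reduce may legally discard $p_1$ and keep only $p_2$. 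When the edge $uv$ is later introduced, your rule discards $p_2$ (same block $\Rightarrow$ cycle), and the valid acyclic extension witnessed by $p_1$ is lost. In short, the equivalence that reduce respects is not the equivalence your acyclicity test needs.

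A second, related issue: you apply $\proj(v,\cdot)$ at forget nodes, but $\proj$ discards partitions in which $v$ is a singleton. That is correct when the target structure must be connected, but the complement of a feedback vertex set is a \emph{forest}, not a tree; forgetting a vertex whose tree has no other bag vertices is perfectly fine and must not be thrown away. Your closing sentence (``certifies a single connected acyclic structure'') versus ``the kept vertices form a forest'' reflects this confusion. The paper resolves both issues at once with the standard device from~\cite{bodlaender2015deterministic}: add a universal vertex $v_0$ adjacent to all of $V(G)$, force $v_0$ into every bag and into $X$, and search for a pair $(X,X_0)$ with $X_0\subseteq E_0$ such that the subgraph on $X$ with edges $E[X\setminus\{v_0\}]\cup X_0$ is a \emph{tree}. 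Acyclicity is then certified globally by tracking vertex and edge counts $j_1,j_2$ in the DP state and testing $j_2=j_1-1$ at the root; connectivity is exactly what \autoref{prop:reduce} preserves, so reduce is now sound. Your colour-vector bookkeeping (with the join-node convolution and the $c_i(S)$ correction for the shared bag) is fine and matches the paper; what is missing is the $v_0$ reformulation and the $(j_1,j_2)$ indices, without which the single-exponential bound does not go through.
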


Notice that our goal is to solve the \ffvs\ problem in time \fvsruntime. We can solve the problem in time $\twidth^{\twidth} n^{\cO(1)}$ using a straightforward dynamic programming algorithm. But to achieve the runtime of \fvsruntime, we use the rank-based approach introduced by Bodlaender et al.~\cite{bodlaender2015deterministic}. And this requires us to consider an alternative formulation of the problem. Notice that an $n$-vertex graph $G$ has a feedback vertex of size at most $k$ if and only if there exists $X \subseteq V(G)$ such that $\card{X} \geq n - k$ and $G[X]$ is acyclic (and hence $G[X]$ has at most $\card{X} - 1$ edges). Thus, instead of looking for a feedback vertex set of size at most $k$, we may as well look for an acyclic subgraph of $G$ of size at least $n - k$. Now, to ensure that $G[X]$ is acyclic, it is sufficient to ensure that $G[X]$ is connected and has exactly $\card{X} - 1$ edges. (Recall that an $n$-vertex connected graph is acyclic if and only if it has exactly $n - 1$ edges). But in general, the acyclic subgraph obtained by deleting a feedback vertex set need not be connected. So to ensure that we are left with an acyclic, connected subgraph (i.e., a tree) after deleting the feedback vertex set, we introduce a new vertex $v_0$ and make it adjacent to all the other vertices. Let $E_0$ be the set of new edges added to the graph this way, i.e., the edges incident with $v_0$. We can then look for a pair $(X, X_0)$, where $X \subseteq V(G) \cup \set{v_0}$, $X_0 \subseteq E_0$ and the subgraph with vertex set $X$ and edge set $E[X \setminus \set{v_0}] \cup X_0$ is a tree. (The tree thus contains all the edges with both the endpoints in $X \setminus \set{v_0}$ and some of the edges incident with $v_0$.) We, of course, have to adapt this formulation for our setting to satisfy the fairness constraints. To that end, we first formally prove the following lemma. 

\begin{lemma}
\label{lem:v0}
Let $(G, c, (k_i)_{i = 1}^{t})$ be an instance of \ffvs. Let $n_i = c_i(V(G))$. Consider the graph $G'$ defined by $V(G') = V(G) \cup \set{v_0}$ and $E(G') = E(G) \cup E_0$, where $v_0 \notin V(G)$ and $E_0 = \{ vv_0 ~|~ v \in V(G) \}$. Let $\fn{c'}{V(G')}{2^{[t]} \setminus \set{\emptyset}}$ be the extension of $c$ defined as follows: $c'(v_0) = [t]$ and $c'(v) = c(v)$ for every $v \in V(G)$. Then $G$ has a $(k_i)_{i = 1}^t$-fair feedback vertex set if and only if there exists a pair $(X, X_0)$ such that (a) $X \subseteq V(G) \cup \set{v_0}$ with $v_0 \in X$ and $X_0 \subseteq E_0$, (b) the subgraph of $G'$ with vertex set $X$ and edge set $E_{G'}[X \setminus \set{v_0}] \cup E_0$ is connected, (c) $\card{E_{G'}[X \setminus \set{v_0}] \cup X_0} = \card{X} - 1$ and (d) $c'_i(X) = n_i - k_i + 1$ for every $i \in [t]$. 
\end{lemma}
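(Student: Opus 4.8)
The plan is to prove both directions of the biconditional by translating between a fair feedback vertex set $S$ of $G$ and a pair $(X, X_0)$ satisfying conditions (a)--(d). The guiding idea is that $X$ should be exactly $(V(G) \setminus S) \cup \set{v_0}$, so that the complement of $X$ (inside $V(G)$) is the feedback vertex set, and the extra vertex $v_0$ together with a carefully chosen subset $X_0$ of its incident edges serves to stitch the forest $G - S$ into a single tree.

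For the forward direction, I would start with a $(k_i)_{i=1}^t$-fair fvs $S$ of $G$ and set $X := (V(G) \setminus S) \cup \set{v_0}$. Then $G[X \setminus \set{v_0}] = G - S$ is a forest, say with connected components (trees) $T_1, \dotsc, T_m$. To satisfy connectivity (b) and the edge-count (c), I would choose $X_0$ to contain exactly one edge $v_0 w_j$ for each component $T_j$ (picking an arbitrary representative vertex $w_j \in V(T_j)$); adding these $m$ edges to the forest yields a connected graph on $X$, and a connected graph on $\card{X}$ vertices with $\card{X} - 1$ edges is a tree, giving (c). The colour condition (d) is then a direct counting check: since $S$ is fair, $c_i(S) = k_i$, and since $v_0$ receives all colours under $c'$, we get $c'_i(X) = c_i(V(G) \setminus S) + 1 = (n_i - k_i) + 1$, which is exactly condition (d).

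For the reverse direction, I would take a pair $(X, X_0)$ meeting (a)--(d) and set $S := V(G) \setminus (X \setminus \set{v_0}) = V(G) \setminus X$ (recall $v_0 \in X$). Conditions (b) and (c) say the subgraph on $X$ with edge set $E_{G'}[X \setminus \set{v_0}] \cup X_0$ is connected with $\card{X} - 1$ edges, hence a tree; deleting $v_0$ from a tree leaves a forest, and that forest is precisely $G[X \setminus \set{v_0}] = G - S$, so $S$ is an fvs of $G$. The fairness of $S$ follows by reversing the counting step: condition (d) gives $c'_i(X) = n_i - k_i + 1$, and subtracting the contribution of $v_0$ (which is $1$ in every colour) yields $c_i(V(G) \setminus S) = n_i - k_i$, whence $c_i(S) = n_i - (n_i - k_i) = k_i$ for every $i$.

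The main subtlety, and the step I would treat most carefully, is reconciling conditions (b) and (c) to conclude that the relevant subgraph is genuinely a tree and that $v_0$'s removal yields an acyclic graph — in particular, one must use the standard fact (already invoked in the paragraph preceding the lemma) that a connected graph on $\card{X}$ vertices with exactly $\card{X} - 1$ edges is a tree, and then argue that deleting a single vertex from a tree cannot create a cycle. A second point worth stating explicitly is that the edge set forced into the subgraph, namely $E_{G'}[X \setminus \set{v_0}]$, already contains \emph{all} edges of $G$ internal to $X \setminus \set{v_0}$, so that $G - S = G[X \setminus \set{v_0}]$ really is the forest obtained by deleting $v_0$; this guarantees no cycle of $G$ survives inside the complement of $S$. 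Everything else reduces to the colour-counting identities, which are routine given that $c'(v_0) = [t]$ contributes exactly $1$ to each $c'_i$ count.
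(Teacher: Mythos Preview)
Your proposal is correct and follows essentially the same approach as the paper's own proof: in both directions the translation is $X = (V(G)\setminus S)\cup\{v_0\}$ with $X_0$ chosen to contain exactly one $v_0$-edge per component of the forest $G-S$, and the reverse direction uses the connected-plus-$(\card{X}-1)$-edges characterisation of a tree together with the colour-counting identities. The paper's argument is organised identically, only with different variable names.
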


\begin{proof}
Assume that $G$ has a $(k_i)_{i = 1}^t$-fair feedback vertex set, say $F \subseteq V(G)$. We will show that there exists a pair $(X, X_0)$ with the required properties. First, let $X = \set{v_0} \cup (V(G) \setminus F)$. 
Then for every $i \in [t]$, since $c'(v_0) = [t]$ and $c'_i(V(G) \setminus F) = c_i(V(G) \setminus F) = n_i - k_i$, we have $c'_i(X) = n_i - k_i + 1$. Now, let $C_1, C_2,\ldots, C_{\ell}$ be the connected components of $G - F$. For each $j \in \ell$, fix a vertex $v_i$ in $C_i$. Let $X_0 = \set{v_0 v_j ~|~ j \in [\ell]} \subseteq E_0$. Then the subgraph with vertex set $X$ and edge set $E_{G'}[X \setminus \set{v_0}] \cup X_0 = E(G - F) \cup X_0$ is connected. Observe now that adding the vertex $v_0$ and the edges in $X_0$ to the (acyclic) graph $G - F$ does not create any cycles, as $v_0$ has exactly one neighbour in each connected component of $G - F$. Thus the subgraph with vertex set $X$ and edge set $E_{G'}[X \setminus \set{v_0}] \cup X_0 = E(G - F) \cup X_0$ is both connected and acyclic, and hence $\card{E_{G'}[X \setminus \set{v_0}] \cup X_0} = \card{X} - 1$. We have thus shown that properties (a)-(d) in the lemma statement hold for our choice of $(X, X_0)$.

Conversely, assume that there exists a pair $(X, X_0)$ that satisfy properties (a)-(d). Let $F' = V(G) \setminus X$ and let $H(X, X_0)$ be the subgraph of $G'$ vertex set $X$ and edge set $E_{G'}[X \setminus \set{v_0}] \cup X_0$.  Since $H(X, X')$ is connected and has exactly $\card{X} - 1$ edges, it is acyclic. Notice that $G - F' = G[X \setminus \set{v_0}]$ is a subgraph of $H(X, X_0)$, and hence $G - F'$ is acyclic. Now, consider $i \in [t]$. Since $c'(v_0) = [t]$ and $c'_i(X) = n_i - k_i + 1$, we have $c'_i(X \setminus \set{v_0}) = n_i - k_i$, which implies that $c_i(X \setminus \set{v_0}) = n _i - k_i$. Hence $c_i(F') = k_i$ as $F' = V(G) \setminus X$. Thus $F$ is a $(k_i)_{i=1}^t$-fair fvs of $G$. 
-\end{proof}

Let $G$ and $G'$ be as in Lemma~\ref{lem:v0}. 
Recall that we have a nice tree decomposition $(\mathcal{T}, (B_x)_{x \in V(\mathcal{T})})$ of $G$. We add the vertex $v_0$ to every bag $B_x$; and make the decomposition nice again, which is a nice tree decomposition of $G'$.  Let us denote the resulting nice tree decomposition by $(\mathcal{T}', (B_x)_{x \in V(\mathcal{T}')})$. 
For a node $x \in V(\mathcal{T}')$, recall that the graph $G'_x$ with $V(G'_x) = V_x$ and $E(G'_x) = E_x$ is the subgraph of $G'$ that consists of all the vertices and edges introduced in the subtree of $\mathcal{T}$ rooted at $x$. Consider $x \in V(\mathcal{T}')$. For $X \subseteq V_x$ and $X_0 \subseteq E_0 \cap E_x$, let $H_x(X, X_0)$ be the subgraph of $G'_x$ with vertex set $X$ and edge set $E_{G'_x}[X \setminus \set{v_0}] \cup X_0$. 

We are now ready to design our dynamic programming algorithm, which we call \fvsalgo. We first define the states of our DP. 


\textbf{Definition of the states of the DP.} For every node $x$ of $\mathcal{T}'$, for every $t$-tuple, $(r_i)_{i = 1} ^t$, where for each $i \in [t]$, $0 \leq r_i \leq n_i$, for integers, $j_1$, $j_2$, where $0 \leq j_1 \leq \card{V(G')}$ and $0 \leq j_2 \leq \card {E(G')}$, and for each $S \subseteq {B_x}$ we define

\(
A_x(S, j_1, j_2, (r_i)_{i = 1} ^t) := \{ p ~|~ p \in \Pi(S)
\\
~~~~~~~~~~~~~~~~~~~~~~~~~~~~~~~~~~~~~~~~~~~~~~~~~~~~~~\text{ and } \mathcal{E}_x(p, S, j_1, j_2, (r_i)_{i = 1} ^t) \neq \emptyset \},
\\
\)
where
$\mathcal{E}_x(p, S, j_1, j_2, (r_i)_{i = 1} ^t)$ is the set of all ordered pairs $(X, X_0)$ such that 
\begin{enumerate}[label=(\roman*)]
\item $X \subseteq V_x$ and $X_0 \subseteq E_0 \cap E_x$;
\item $\card{X} = j_1$ and $\card{E_{G_x}[X \setminus \set{v_0}] \cup X_0} = j_2$, (i.e., the graph $H_x(X, X_0)$ has $j_1$ vertices and and $j_2$ edges); 
\item $c_i(X) = r_i$ for every $i \in [t]$; 
\item $X \cap B_x = S$;
\item $v_0 \in S$ if $v_0 \in B_x$;
\item for every $u \in X \setminus B_x$, there exists $u' \in S$ such that $u$ and $u'$ are connected in the graph $H_x(X, X_0)$; and
\item for every $v_1, v_2 \in S$, $v_1$ and $v_2$ are in the same block in the partition $p$ if and only if $v_1$ and $v_2$ are connected in the graph $H_x(X, X_0)$.  
\end{enumerate}

Before moving to the computation of the entries of the DP table, let us complete the description of our algorithm and prove its correctness. 

\textbf{Description of \fvsalgo.} Given an instance $(G, c, (k_i)_{i = 1}^t)$ of \ffvs, along with a nice tree decomposition $(\mathcal{T}, (B_x)_{x \in V(\mathcal{T})})$, we first construct $G'$, $c'$ and $(\mathcal{T}', (B_x)_{x \in V(\mathcal{T}')})$ as described above. Then we compute the entries $A_x(S, j_1, j_2, (r_i)_{i = 1} ^t)$ for all choices of $x, S, j_1, j_2$ and $(r_i)_{i = 1}^t$. (Assume for now that we can compute all the table entries correctly.) Let $\hat x$ be the root of $\mathcal{T}'$ and let $\hat y$ be the unique child of $\hat x$. If $A_{\hat y}(\set{v_0}, j_1, j_1 - 1, (n_i - k_i + 1)_{i = 1} ^t)$ is non-empty for some $j_1$, then we return that our original instance $(G, c, (k_i)_{i = 1}^t)$ of \ffvs\ is a yes-instance, and otherwise we return that it is a no-instance. This completes the description of \fvsalgo. 

\begin{lemma}
\label{lem:fvs_correctness}
\fvsalgo\ is correct.
\end{lemma}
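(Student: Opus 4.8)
The plan is to prove Lemma~\ref{lem:fvs_correctness} by combining the structural equivalence already established in Lemma~\ref{lem:v0} with an induction on the tree decomposition that shows the DP tables $A_x(S,j_1,j_2,(r_i)_{i=1}^t)$ are computed correctly. First I would state precisely what ``correct'' means: for every node $x$ of $\mathcal{T}'$, every $S \subseteq B_x$, every pair $(j_1,j_2)$, and every tuple $(r_i)_{i=1}^t$, the family $A_x(S,j_1,j_2,(r_i)_{i=1}^t)$ equals exactly the set of partitions $p \in \Pi(S)$ for which the witness set $\mathcal{E}_x(p,S,j_1,j_2,(r_i)_{i=1}^t)$ is nonempty (possibly after the rank-based reduction of Proposition~\ref{prop:reduce} is applied, so strictly I would prove this for a \emph{representative} family rather than the full one). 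Assuming this table-correctness claim, the lemma follows quickly: by Lemma~\ref{lem:v0}, $G$ has a $(k_i)_{i=1}^t$-fair fvs if and only if there is a pair $(X,X_0)$ satisfying properties (a)--(d), which by the definition of the states means exactly that $\mathcal{E}_{\hat y}(\{v_0\}, \{v_0\}, j_1, j_1-1, (n_i-k_i+1)_{i=1}^t)$ is nonempty for some $j_1$ --- here the connectivity requirement of property~(b) of Lemma~\ref{lem:v0} translates into the single-block partition $\{\{v_0\}\}$ having a witness, and the edge-count $\card{X}-1$ is exactly $j_2 = j_1 - 1$. This matches the acceptance condition in the description of \fvsalgo, so the algorithm returns the correct answer.

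The bulk of the work is the table-correctness claim, which I would prove by induction over $\mathcal{T}'$ following the five node types (leaf, introduce vertex, introduce edge, forget, join). The induction hypothesis is that for all children of $x$ the stored families are correct (as representative families in the sense of Proposition~\ref{prop:reduce}). For each node type I would argue that the combinatorial operation applied to produce $A_x$ from the children's tables exactly mirrors what happens to a witness pair $(X,X_0)$ as we move up the tree: introducing a vertex corresponds to \ins together with a bookkeeping update of $j_1$ and of the colour counts $r_i$ (using $c_i(X)$); introducing an edge $uv$ with both endpoints in $S$ corresponds to \glue together with incrementing $j_2$; forgetting a vertex $v$ corresponds to \proj, where the crucial point is that property~(vi) forces a forgotten vertex to be connected to some surviving vertex of $S$, so partitions in which $v$ sits in its own singleton block must be discarded; and a join node corresponds to \join of the two children's families, with the colour tuples and the counts $j_1,j_2$ split additively across the two branches while the overlap $S$ (and its internal edges, which are why the correction term $c_i(S)$ or a double-counting adjustment on $j_1,j_2$ appears) is accounted for once. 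In each case I would check both directions: any witness pair at $x$ decomposes into witness pairs at the children whose partition traces are compatible, and conversely any compatible pair of child-witnesses glues into a witness at $x$.

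The main obstacle --- and the reason the rank-based machinery of Bodlaender et al.\ is invoked --- is the interaction between connectivity tracking and the size bound on the stored families. A naive DP would store every partition of $S$, giving $\twidth^{\cO(\twidth)}$ states and only the $\twidth^{\twidth}$ running time; to get single-exponential time we must, at each node, replace $A_x$ by a representative subfamily of size at most $2^{\card{B_x}-1}$ using Proposition~\ref{prop:reduce}. The delicate part of the proof is therefore to verify that applying \texttt{reduce} does not destroy correctness: I must show that the ``representation'' property guaranteed by Proposition~\ref{prop:reduce} --- that for every $p$ and every $q$ with $p \sqcup q = \{S\}$ there survives a $p'$ with $p' \sqcup q = \{S\}$ --- is exactly the invariant preserved by all of \ins, \proj, \glue, and \join, so that a witness of global connectivity (the single block at the root) is never lost. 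This commutation of \texttt{reduce} with the four partition operations is established in~\cite{bodlaender2015deterministic}; my job is to confirm that the additional bookkeeping coordinates $j_1, j_2$ and the colour tuple $(r_i)_{i=1}^t$ are inert with respect to this argument (they are finite indices that are simply carried along), so that correctness for the fair variant reduces cleanly to the connectivity-representation correctness of the original rank-based FVS algorithm. I expect the colour/fairness constraints themselves to be routine, since they only enter through the additively-tracked counters $r_i$; the genuine difficulty lies entirely in justifying the reduction step.
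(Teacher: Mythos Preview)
Your proposal is correct, but it proves strictly more than the paper does under this lemma heading. In the paper, Lemma~\ref{lem:fvs_correctness} is only the short step you describe in your first paragraph: assuming the DP tables are already correct, one uses Lemma~\ref{lem:v0} to argue that nonemptiness of $A_{\hat y}(\{v_0\}, j_1, j_1{-}1, (n_i-k_i+1)_{i=1}^t)$ for some $j_1$ is equivalent to the existence of a $(k_i)_{i=1}^t$-fair fvs. The bottom-up induction over the five node types that you sketch is factored out into a separate lemma (Lemma~\ref{lem:fvs_dp_correctness}), and the runtime analysis with the \texttt{reduce} operator into yet another (Lemma~\ref{lem:fvs_runtime}).

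One point worth noting: your treatment of Proposition~\ref{prop:reduce} is actually more careful than the paper's. The paper states and proves Lemma~\ref{lem:fvs_dp_correctness} for the \emph{exact} families $A_x(\cdot)$, with no mention of \texttt{reduce}; the reduction only surfaces in the runtime lemma, where it is applied after the fact to keep the families small. Strictly speaking, once \texttt{reduce} is interleaved with the recurrences, the stored families are no longer equal to $A_x(\cdot)$ but only representative of them in the sense of Proposition~\ref{prop:reduce}, and correctness of the final nonemptiness test relies on the commutation of \texttt{reduce} with \ins, \proj, \glue, and \join --- exactly as you say. The paper glosses over this, leaning implicitly on~\cite{bodlaender2015deterministic}; your plan makes the dependency explicit.
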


\begin{proof}
To prove the correctness of \fvsalgo, assume first that $A_{\hat y}(\set{v_0}, j_1, j_1 - 1, (n_i - k_i + 1)_{i = 1} ^t)$ is non-empty. 
Recall that $v_0 \in B_x$ for every non-leaf, non-root node $x$ of $\mathcal{T}'$. 
That is, we forget the vertex $v_0$ at the root $\hat x$. 
Since $\hat y$ is the unique child of $\hat x$ and $B_{\hat x} = \emptyset$, we have $B_{\hat y} = \set{v_0}$. 
Hence $\Pi(\set{v_0}) = \set{v_0}$. 
Now, since $A_{\hat y}(\set{v_0}, j_1, j_1 - 1, (n_i - k_i + 1)_{i = 1} ^t) \neq \emptyset$, we have $\set{v_0} \in A_{\hat y}(\set{v_0}, j_1, j_1 - 1, (n_i - k_i + 1)_{i = 1} ^t)$ and  $\mathcal{E}_{\hat y}(\set{v_0}, j_1, j_1 - 1, (n_i - k_i + 1)_{i = 1} ^t) \neq \emptyset$. 
Let $(X, X_0) \in \mathcal{E}_{\hat y}(\set{v_0}, j_1, j_1 - 1, (n_i - k_i + 1)_{i = 1} ^t)$. 
Consider the graph $H_{\hat x}(X, X_0)$. 
By condition (ii) in the definition of $\mathcal{E}_{\hat y}(\emptyset, j_1, j_1 - 1, (n_i - k_i + 1)_{i = 1} ^t)$, the graph  $H_{\hat y}(X, X_0)$ has $\card{X} = j_1$ vertices and $j_1 - 1$ edges; by condition (iii) $c_i(X) = n_i - k_i + 1$ for every $i \in [t]$; and by condition (vi), every vertex $u \in X \setminus \set{v_0}$ is connected to $v_0$ in the graph $H_{\hat y}(X, X_0)$, which implies that $H_{\hat y}(X, X_0)$ is connected. 
Therefore, by Lemma~\ref{lem:v0}, $G$ has a $(k_i)_{i = 1}^t$-fair fvs. 

Conversely, assume that $G$ has a $(k_i)_{i = 1}^t$- fair fvs. 
Then by Lemma~\ref{lem:v0}, there exists a pair $(X, X_0)$ such that (a) $X \subseteq V(G) \cup \set{v_0}$ with $v_0 \in X$ and $X_0 \subseteq E_0$, (b) the subgraph of $G'$ with vertex set $X$ and edge set $E[X \setminus \set{v_0}] \cup E_0$ is connected, (c) $\card{E[X \setminus \set{v_0}] \cup X_0} = \card{X} - 1$ and (d) $c'_i(X) = n_i - k_i + 1$ for every $i \in [t]$. Let $j_1 = \card{X}$. Conditions (a)-(d) together imply that $(X, X_0) \in \mathcal{E}_{\hat y}(\set{v_0}, j_1, j_1 - 1, (n_i - k_i + 1)_{i = 1} ^t)$. And since we have $\set{v_0} \in \Pi(\set{v_0})$, we can conclude that $A_{\hat y}(\set{v_0}, j_1, j_1 - 1, (n_i - k_i + 1)_{i = 1} ^t) \neq \emptyset$. 

\end{proof}

\textbf{Computation of the entries of the DP table. } We now fill the entries of the DP table as follows. Consider a node $x \in V(\mathcal{T}') $, $S \subseteq B_x$, $j_1, j_2$ and $(r_i)_{i = 1}^t$. 

\begin{description}
\item[Case 1:] $x$ is a leaf node. Then $B_x = \emptyset$, and hence $S = \emptyset$. And we have,

\(
		A_x(\emptyset, j_1, j_2, (r_i)_{i = 1} ^t) = \{ \emptyset \}, \text{ if }  j_1 = j_2 = r_1 = \ldots = r_t = 0
\)

and

\(
		A_x(\emptyset, j_1, j_2, (r_i)_{i = 1} ^t) =  \emptyset , \text{ otherwise}
\)

\item[Case 2:] $x$ is a forget node with child $y$. Let $v$ be the vertex forgotten at $x$. That is, $B_x = B_y \setminus \set{v}$ for some $v \in B_y$. We then have
\[
A_x(S, j_1, j_2, (r_i)_{i = 1} ^t) = A_y(S, j_1, j_2, (r_i)_{i = 1} ^t) \cup 
\]

		\(\proj(v, A_y(S \cup \set{v}, j_1, j_2, (r_i)_{i = 1} ^t)). \)


\item[Case 3:] $x$ is an introduce vertex node with child $y$. Let $v$ be the vertex introduced at $x$. That is, $B_x = B_y \cup \set{v}$ for some $v \notin B_y$. 
For $i \in [t]$, let $r'_i = r_i - 1$ if $i \in c'(v)$ and $r'_i = r_i$ if $i \notin c'(v)$. We then have 
\[
A_x(S, j_1, j_2, (r_i)_{i = 1} ^t) = \begin{cases} \emptyset, \text{ if } v = v_0 \text{ and } v \notin S, \\
 									\ins(v, A_y(S \setminus \set{v}, j_1 - 1, j_2, (r'_i)_{i = 1} ^t)), \text{ if } v \in S, \\ 
									A_y(S, j_1, j_2, (r_i)_{i = 1} ^t), \text{ otherwise.}
						\end{cases}
\]

\item[Case 4:] $x$ is an introduce edge node with child $y$. Let $e = uv$ be the edge introduced at $x$. We have as follows.

		
		When, \(u = v_0\) and \(v \in S\),
\[
	A_x(S, j_1, j_2, (r_i)_{i = 1} ^t) = A_y(S, j_1, j_2, (r_i)_{i = 1} ^t) \cup 
\]

		\(\glue(v_0v, A_y(S, j_1, j_2 - 1, (r_i)_{i = 1} ^t))\).

		When, \(v = v_0\) and \(u \in S\),
\[
	A_x(S, j_1, j_2, (r_i)_{i = 1} ^t) = A_y(S, j_1, j_2, (r_i)_{i = 1} ^t) \cup 
\]

		\(\glue(v_0u, A_y(S, j_1, j_2 - 1, (r_i)_{i = 1} ^t))\).

		When, both \(u\) and \(v\) are in \(S\),
\[
	A_x(S, j_1, j_2, (r_i)_{i = 1} ^t) = \glue(uv,  A_y(S, j_1, j_2 - 1, (r_i)_{i = 1} ^t))
\]

And for all other cases,
\[
	A_x(S, j_1, j_2, (r_i)_{i = 1} ^t) = A_y(S, j_1, j_2, (r_i)_{i = 1} ^t)
\]

\item[Case 5:] $x$ is a join node with children $y$ and $z$. We have
\[
	A_x(S, j_1, j_2, (r_i)_{i = 1}^t) = \bigcup_{\substack{j^y_1, j^y_2; ~ j^z_1, j^z_2; ~ (r^y_i, r^z_i)_{i=1}^{t} \\ j^y_1 + j^z_1 = j_1 + \card{S} \\ j^y_2 + j^z_2 = j_2 \\ (r^y_i + r^z_i = r_i + c'_i(S))_{i = 1}^t}} \join(\mathcal{Y}, \mathcal{Z})
\]

where, \(\mathcal{Y} = A_y(S, j^y_1, j^y_2, (r^y_i)_{i = 1} ^t)\) and \( \\ \mathcal{Z} = A_z(S, j^z_1, j^z_2, (r^z_i)_{i = 1} ^t)\).
\end{description}

This completes the description of the algorithm \fvsalgo. We now show that our computation of the table entries is correct.
\begin{lemma}
\label{lem:fvs_dp_correctness}
For every node $x \in V(\mathcal{T})'$, $S \subseteq B_x$, $j_1, j_2, (r_i)_{i = 1}^t$, \fvsalgo\ computes the entry $ \\ A_x(S, j_1, j_2, (r_i)_{i = 1}^t)$ correctly. 
\end{lemma}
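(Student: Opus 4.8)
The plan is to prove \cref{lem:fvs_dp_correctness} by structural induction on the tree decomposition $\mathcal{T}'$, following the standard template for rank-based dynamic programming but carrying the extra bookkeeping imposed by the vertex-count $j_1$, edge-count $j_2$, and colour-count $(r_i)_{i=1}^t$ coordinates. For each node type I would argue that a partition $p$ belongs to the computed set $A_x(S,j_1,j_2,(r_i)_{i=1}^t)$ if and only if there is a witness pair $(X,X_0) \in \mathcal{E}_x(p,S,j_1,j_2,(r_i)_{i=1}^t)$ satisfying conditions (i)--(vii) of the state definition. The base case is the leaf node (Case 1), where $B_x=\emptyset$ forces $S=\emptyset$, the only admissible witness is the empty pair $(\emptyset,\emptyset)$, and this is consistent exactly when $j_1=j_2=r_1=\cdots=r_t=0$; this matches the stated formula directly.

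First I would handle the introduce-vertex and forget nodes (Cases 2 and 3), since these are where the counting coordinates shift. For an introduce-vertex node adding $v$, I would check that when $v \in S$ the operation $\ins(v,\cdot)$ correctly places $v$ as a fresh singleton block (no edges incident to $v$ have yet been introduced, so $v$ is isolated in $H_x$), while the index shift $j_1 \mapsto j_1-1$ and $r_i \mapsto r_i'$ faithfully records that $v$ is newly in $X$ and contributes to the colour counts for each $i \in c'(v)$; the special case $v=v_0,\ v\notin S$ returning $\emptyset$ enforces condition (v). For a forget node forgetting $v$, I would verify that the union $A_y(S,\ldots)\cup \proj(v,A_y(S\cup\{v\},\ldots))$ exactly captures the two possibilities $v\notin X$ and $v\in X$, and that $\proj$ correctly discards witnesses in which $v$ was a singleton block (which would violate condition (vi), since a forgotten vertex of $X$ outside $S$ must remain connected to $S$).

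Next I would treat the introduce-edge node (Case 4) and the join node (Case 5), which govern connectivity. For the edge $uv$, I would confirm that $\glue$ merges the blocks containing $u$ and $v$ exactly when the edge is actually used, that the $j_2 \mapsto j_2-1$ shift accounts for the added edge, and that the three sub-cases depending on whether an endpoint is $v_0$ correctly separate edges in $X_0 \subseteq E_0$ from ordinary graph edges. For the join node, I would show that splitting each coordinate additively—with the correction $+\card{S}$ for $j_1$ and $+c_i'(S)$ for $r_i$ to avoid double-counting the shared bag vertices—puts the two children's witnesses into correspondence with a single witness in $G_x'$, and that the partition operation $\sqcup$ used inside $\join$ correctly recombines the connectivity information across the glued copies of $S$.

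The main obstacle, as usual in this approach, is the join node: I must argue both directions carefully, namely that every witness $(X,X_0)$ for $G_x'$ decomposes uniquely into witnesses for $G_y'$ and $G_z'$ with matching intersections $X\cap B_x = S$, and conversely that $\join$ of two valid families yields only valid partitions of $S$ with respect to connectivity in the combined graph. The subtlety is that $\join$ via $\sqcup$ captures transitive connectivity through $S$ correctly, but I must ensure that the additive split of $j_2$ does not inadvertently permit a cycle to form across the two subgraphs—this is where the global acyclicity check performed at the root (via the $j_1-1$ edge count in \cref{lem:fvs_correctness}) does the real work, so here I only need the \emph{connectivity} semantics of the partition to be faithful. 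Finally, I would remark that the computed families may grow large, but correctness is unaffected by interleaving the reduction step of \cref{prop:reduce}, which preserves the representative property $p\sqcup q=\set{U}$ and hence preserves whether the relevant root entry is non-empty.
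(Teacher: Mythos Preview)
Your proposal is correct and follows essentially the same approach as the paper: structural induction on the nice tree decomposition, verifying for each node type (leaf, forget, introduce vertex, introduce edge, join) that the recurrence computes exactly the set of partitions admitting a witness $(X,X_0)$ satisfying conditions (i)--(vii). Your closing remark that interleaving the reduce step of Proposition~\ref{prop:reduce} preserves non-emptiness of the root entry is a useful addition that the paper's proof of this lemma does not make explicit (it is implicitly deferred to the runtime analysis), but otherwise the two arguments coincide.
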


\begin{sloppypar}
\begin{proof}
Consider $x \in V(\mathcal{T})'$, $S \subseteq B_x$, $j_1, j_2, (r_i)_{i = 1}^t$, where $x$ is not the root of $\mathcal{T}'$. We prove the lemma by induction on the tree $\mathcal{T}$. 
That is, assuming that all the entries corresponding to every descendant of $x$ is computed correctly, we prove that  $A_x(S, j_1, j_2, (r_i)_{i = 1}^t)$ is also computed correctly. 

Suppose first that $x$ is a leaf node. 
Then $B_x = \emptyset$ and the only choice for $S \subseteq B_x$ is $S = \emptyset$. 
And we have $\pi(S) = \set{\emptyset}$. 
By definition, $(\emptyset, \emptyset) \in \mathcal{E}_x(\set{\emptyset}, \emptyset, j_1, j_2, (r_i)_{i = 1} ^t)$ if $j_1 = j_2 =  r_1 = \ldots = r_t = 0$; and $\mathcal{E}_x(\set{\emptyset}, \emptyset, j_1, j_2, (r_i)_{i = 1} ^t) = \emptyset$ otherwise. 
We thus correctly set $A_x(S, j_1, j_2, (r_i)_{i = 1}^t) = \set{\emptyset}$ when $j_1 = j_2 =  r_1 = \ldots = r_t = 0$, and $A_x(S, j_1, j_2, (r_i)_{i = 1}^t) = \emptyset$ otherwise. 

Assume now that $x$ is a non-leaf node, and that we have correctly computed all the entries corresponding to every descendant of $x$. 


\textbf{Case 1. } Suppose that $x$ is a forget node with child $y$. Let $x$ forget the vertex $v$. 
Now, to prove that our computation is correct, we must prove that for a partition $p \in \Pi(S)$, $p \in A_x(S, j_1, j_2, (r_i)_{i = 1} ^t)$ if and only if $p \in A_y(S, j_1, j_2, (r_i)_{i = 1} ^t) \cup \proj(v, A_y(S \cup \set{v}, j_1, j_2, (r_i)_{i = 1} ^t))$. 

Suppose first that $p \in A_x(S, j_1, j_2, (r_i)_{i = 1} ^t)$. 
Then $\mathcal{E}_x(p, S, j_1, j_2, (r_i)_{i = 1} ^t) \neq  \emptyset $. 
Consider  $(X, X_0) \in \mathcal{E}_x(p, S, j_1, j_2, (r_i)_{i = 1} ^t)$. 
There are two possibilities: either $v \notin X$ or $v \in X$. 
Suppose $v \notin X$. 
Observe then that $(X, X_0) \in \mathcal{E}_y((p, S, j_1, j_2, (r_i)_{i = 1} ^t))$, and hence $p \in A_y(S, j_1, j_2, (r_i)_{i = 1}^t)$. 
Suppose now that $v \in X$. 
Then, since $v \notin B_x$, condition (vi) in the definition of $\mathcal{E}_x(p, S, j_1, j_2, (r_i)_{i = 1} ^t)$ implies that there exists $u' \in B_x$ such that $v$ and $u'$ are in the same connected component of the graph $H_x(X, X_0)$. 
Let $p'$ be the partition of $S \cup \set{v}$ such that for $w, w' \in S \cup \set{v}$, $w$ and $w'$ are in the same part of $p$ if and only if they are in the same connected component of $H_x(X, X_0)$. 
Notice then that $p = p' - v$ and that $u'$ and $v$ belong to the same part of $p'$. That is, $v'$ is not present as a singleton set in the partition $p'$. 
Also, note that $(X, X_0) \in \mathcal{E}_y((p', S \cup \set{v} , j_1, j_2, (r_i)_{i = 1} ^t))$, and hence $p' \in A_y(S \cup \set{v}, j_1, j_2,  (r_i)_{i = 1} ^t)$. Hence, $p \in \proj(v, A_y(S \cup \set{v}, j_1, j_2,  (r_i)_{i = 1} ^t)$. 

Conversely, suppose now that $p \in A_y(S, j_1, j_2, (r_i)_{i = 1} ^t) \cup \proj(v, A_y(S \cup \set{v}, j_1, j_2, (r_i)_{i = 1} ^t))$. 
First, consider the case when $p \in A_y(S, j_1, j_2, (r_i)_{i = 1} ^t)$. Then $\mathcal{E}_y(p, S, j_1, j_2,  (r_i)_{i = 1} ^t) \neq \emptyset$. 
Consider $(X, X_0) \in \mathcal{E}_y(p, S, j_1, j_2,  (r_i)_{i = 1} ^t)$. 
Then $X \cap B_y = S$, which implies that $v \notin X$. Observe then that $(X, X_0) \in \mathcal{E}_x(p, S, j_1, j_2,  (r_i)_{i = 1} ^t)$ as well, which implies that $p \in A_x(S, j_1, j_2, (r_i)_{i = 1} ^t)$. 
Now, suppose that $p \in \proj(v, A_y(S \cup \set{v}, j_1, j_2,  (r_i)_{i = 1} ^t)$. Let $q \in A_y(S \cup \set{v}, j_1, j_2,  (r_i)_{i = 1} ^t$ be such that $p = q - v$ and $\set{v} \notin q$.  
Such a partition $q \in \Pi(S \cup \set{v})$ exists as $p \in \proj(v, A_y(S \cup \set{v}, j_1, j_2,  (r_i)_{i = 1} ^t)$. 
Again, since $q \in A_y(S \cup \set{v}, j_1, j_2,  (r_i)_{i = 1} ^t$, there exists $(X, X_0) \in  \mathcal{E}_y(q, S \cup \set{v}, j_1, j_2,  (r_i)_{i = 1} ^t)$. 
We now claim that $(X, X_0) \in  \mathcal{E}_x(p, S, j_1, j_2,  (r_i)_{i = 1} ^t)$ as well, which would imply that $p \in A_x(S, j_1, j_2, (r_i)_{i = 1} ^t)$. 
To see this, notice first that since $x$ is a forget node, we have $G'_x = G'_y$ and hence $H_x(X, X_0) = H_y(X, X_0)$. Since $p = q - v$, for vertices $v_1, v_2 \in S$, $v_1$ and $v_2$ are in the same block of $p$ if and only they are in the same block of $q$ and hence in the same connected component of $H_y(X, X_0) = H_x(X, X_0)$. That is, $(X, X_0)$  satisfies condition (vii) in the definition of $\mathcal{E}_x(p, S, j_1, j_2,  (r_i)_{i = 1} ^t)$. Now, since $\set{v} \notin q$, there exists $u' \in S$ such that $v$ and $u'$ are in the same block of $q$ and hence in the same connected component of $H_y(X, X_0)$, which shows that $(X, X_0)$ satisfies condition (vi) in the definition of $\mathcal{E}_x(p, S, j_1, j_2,  (r_i)_{i = 1} ^t)$ as well. We can easily verify that the rest of the conditions are satisfied too. Hence $(X, X_0) \in  \mathcal{E}_x(p, S, j_1, j_2,  (r_i)_{i = 1} ^t)$, which implies that $p \in A_x(S, j_1, j_2, (r_i)_{i = 1} ^t)$. 

We have thus shown that \fvsalgo\ computes $A_x(S, j_1, j_2, (r_i)_{i = 1} ^t)$ correctly when $x$ is a forget node. 


\textbf{Case 2. } Suppose that $x$ is an introduce node with child $y$. Let $x$ introduce the vertex $v$. First, if $v = v_0$ and $v \notin S$, then condition (v) in the definition of $\mathcal{E}_x(p, S, j_1, j_2,  (r_i)_{i = 1} ^t)$ cannot be satisfied, and hence $A_x(S, j_1, j_2, (r_i)_{i = 1} ^t) = \emptyset$. And in this case our algorithm correctly assigns the value $\emptyset$ to $A_x(S, j_1, j_2, (r_i)_{i = 1} ^t)$. So assume that either $v \neq v_0$ or $v \in S$ (or both). 
We now split the proof into two cases depending on $v \in S$ and $v \notin S$. 

\textbf{Case 2.1. } Assume first that $v \in S$. Let $p \in A_x(S, j_1, j_2, (r_i)_{i = 1} ^t)$. Then there exists $(X,X_0) \in \mathcal{E}_x(p, S, j_1, j_2,  (r_i)_{i = 1} ^t)$. Note that in the graph $G'_x$ and hence in the graph $H_x(X,X_0)$, the vertex $v$ is an isolated vertex. Thus $\set{v} \in p$, and thus $(X \setminus \set{v}, X_0) \in \mathcal{E}_y(p - v, S \setminus \set{v}, j_1 - 1, j_2, (r'_i)_{i = 1} ^t)$, where for each $i \in [t]$, let $r'_i = r_i - 1$ if $i \in c'(v)$ and $r'_i = r_i$ if $i \notin c'(v)$.  Therefore, $p - v \in A_y(S \setminus \set{v}, j_1 - 1, j_2, (r'_i)_{i = 1} ^t)$, and hence $p \in \ins(v, A_y(S \setminus \set{v}, j_1 - 1, j_2, (r'_i)_{i = 1} ^t))$. Conversely, let $q \in \ins(v, A_y(S \setminus \set{v}, j_1 - 1, j_2, (r'_i)_{i = 1} ^t))$. Then $q - v \in A_y(S \setminus \set{v}, j_1 - 1, j_2, (r'_i)_{i = 1} ^t)$, and so there exists $(X', X'_0) \in \mathcal{E}_y(q - v, S \setminus \set{v}, j_1 - 1, j_2, (r'_i)_{i = 1} ^t)$.  But then observe that $(X' \cup \set{v}, X'_0) \in \mathcal{E}_x(q, S, j_1, j_2, (r_i)_{i = 1} ^t)$. Thus, $q \in A_x(S, j_1, j_2, (r_i)_{i = 1} ^t)$.

\textbf{Case 2.2. } Assume now that $v \notin S$. Then $v \neq v_0$ and $S \subseteq B_y$. Again, as $v$ is an isolated vertex in the graph $G'_x$, notice that $(X, X_0) \in \mathcal{E}_x(p, S, j_1, j_2,  (r_i)_{i = 1} ^t)$ if and only if $(X, X_0) \in \mathcal{E}_y(p, S, j_1, j_2,  (r_i)_{i = 1} ^t)$. This implies that $p \in A_x(S, j_1, j_2, (r_i)_{i = 1} ^t)$ if and only if $p \in A_y(S, j_1, j_2, (r_i)_{i = 1} ^t)$. 

We have thus shown that \fvsalgo\ computes $A_x(S, j_1, j_2, (r_i)_{i = 1} ^t)$ correctly when $x$ is an introduce node. 

\textbf{Case 3. } Suppose that $x$ is an introduce edge node with child $y$. Let $x$ introduce the edge $e = uv$. 

\textbf{Case 3.1. } Let us first consider the case when $u = v_0$ and $v \in S$. 
Let $p \in A_x(S, j_1, j_2, (r_i)_{i = 1} ^t)$. 
Then there exists $(X, X_0) \in \mathcal{E}_x(p, S, j_1, j_2, (r_i)_{i = 1} ^t)$. If $e = uv = v_0v \notin X_0$, then $(X, X_0) \in \mathcal{E}_y(p, S, j_1, j_2, (r_i)_{i = 1} ^t)$, and hence $p \in A_y(S, j_1, j_2, (r_i)_{i = 1} ^t)$. 
Suppose $e = uv = v_0v \in X_0$. Then, $p \notin A_y(S, j_1, j_2, (r_i)_{i = 1} ^t)$. If $p \in A_y(S, j_1, j_2 - 1, (r_i)_{i = 1} ^t)$ then we are done. Otherwise, $\mathcal{E}_y(p, S, j_1, j_2 - 1, (r_i)_{i = 1} ^t) = \emptyset$. Let us consider the graph $H_y(X, X_0 \setminus \set{v_0 v})$. In this graph, the components which do contain $v_0$ or $v$ are the same as in $H_x(X, X_0)$, whereas, $v_0$ and $v$ are not in the same component, as that will imply, $(X, X_0 \setminus \set{v_0 v}) \in \mathcal{E}_y(p, S, j_1, j_2 - 1, (r_i)_{i = 1} ^t)$, which is a contradiction. Let, $C_v$ be the component containing $v$ and $C_{v_0}$ be the component containing. We notice that, $(C_v \cap S) \cup (C_{v_0} \cap S) =$ the block of $p$ containing $v_0$ and $v$. As the rest of the components when intersected with $S$ give rise to the blocks of $p$, so we have, $p \in \glue(v_0v, A_y(S, j_1, j_2 - 1, (r_i)_{i = 1} ^t))$. Conversely, let $q \in A_y(S, j_1, j_2, (r_i)_{i = 1} ^t) \cup \glue(v_0v, A_y(S, j_1, j_2 - 1, (r_i)_{i = 1} ^t))$. If $q \in A_y(S, j_1, j_2, (r_i)_{i = 1} ^t)$, then, $q \in A_x(S, j_1, j_2, (r_i)_{i = 1} ^t)$. Otherwise, $q \in \glue(v_0v, A_y(S, j_1, j_2 - 1, (r_i)_{i = 1} ^t))$. If, $q \in A_y(S, j_1, j_2 - 1, (r_i)_{i = 1} ^t)$, then there exists $(X',X'_0) \in \mathcal{E}_y(q, S, j_1, j_2 - 1, (r_i)_{i = 1} ^t)$,
and hence then $(X',X'_0 \cup \set{v_0v}) \in \mathcal{E}_x(q, S, j_1, j_2, (r_i)_{i = 1} ^t)$. Suppose, $q \notin A_y(S, j_1, j_2 - 1, (r_i)_{i = 1} ^t)$, then, $\mathcal{E}_y(q, S, j_1, j_2 - 1, (r_i)_{i = 1} ^t) = \emptyset$. This also implies that there exists, $q' \in A_y(S, j_1, j_2 - 1, (r_i)_{i = 1} ^t)$, such that, $\set{q} = \glue(v_0v,\set{q'})$. Let $(X'', X''_0) \in \mathcal{E}_y(q', S, j_1, j_2 - 1, (r_i)_{i = 1} ^t)$. In $H_y(X'',X''_0)$, $v_0$ and $v$ must be in separate components, as they are in separate blocks. Notice that $(X'',X''_0 \cup \set{v_0v}) \in \mathcal{E}_x(q, S, j_1, j_2, (r_i)_{i = 1} ^t)$ and thus, $q \in A_x(S, j_1, j_2, (r_i)_{i = 1} ^t)$.

The argument for the case when $v = v_0$ and $u \in S$ is symmetric.

\textbf{Case 3.2. } Let us now consider the case when $u, v \in S$. 
Let $p \in A_x(S, j_1, j_2, (r_i)_{i = 1} ^t)$. 
Then there exists $(X, X_0) \in \mathcal{E}_x(p, S, j_1, j_2, (r_i)_{i = 1} ^t)$. If $p \in A_y(S, j_1, j_2 - 1, (r_i)_{i = 1} ^t)$, then we are done. 
Otherwise, $\mathcal{E}_y(p, S, j_1, j_2 - 1, (r_i)_{i = 1} ^t) = \emptyset$. Let us consider the graph $H_y(X, X_0 \setminus \set{uv}$.
Let $p^{\star}$ be the partition of $S = X \cap B_x = X \cap B_y$ such that for $w_1, w_2 \in S$, $w_1$ and $w_2$ are in the same block of $p^{\star}$ if and only if they are in the same connected component of $H_y(X, X_0 \setminus \set{uv})$. Observe that the components of $H_y(X, X_0 \setminus \set{uv})$ which do not contain $u$ and $v$ are the same as in $H_x(X, X_0)$, whereas, $u$ and $v$ are not in the same component of the former as that would contradict that, $\mathcal{E}_y(p, S, j_1, j_2 - 1, (r_i)_{i = 1} ^t) = \emptyset$.
Then $p = \glue(uv, p^{\star})$, and thus $p \in \glue(uv, A_y(S, j_1, j_2 - 1, (r_i)_{i = 1} ^t))$. 
Conversely, let $q \in \glue(uv, A_y(S, j_1, j_2 - 1, (r_i)_{i = 1}^t))$. If $q \in A_y(S, j_1, j_2 - 1, (r_i)_{i = 1} ^t)$ then we are done. 
Otherwise, there is a finer partition, say $q^{\star}$, such that $\glue(uv, \{q^{\star}\}) = \{q\}$ and $q^{\star} \in A_y(S, j_1, j_2 - 1, (r_i)_{i = 1} ^t)$. Let $(Y, Y_0) \in \mathcal{E}_y(q^{\star}, S, j_1, j_2 - 1, (r_i)_{i = 1} ^t)$. In $H_y(Y, Y_0)$ the vertices $u$ and $v$ do not belong to same component as they are in different blocks of $q^{\star}$.  
However, $(Y, Y_0 \cup \set{uv}) \in \mathcal{E}_x(q, S, j_1, j_2, (r_i)_{i = 1} ^t)$.
Thus $q \in A_x(S, j_1, j_2, (r_i)_{i = 1} ^t)$.

\textbf{Case 3.3. } If none of the above cases hold, then either $u \notin S$ or $v \notin S$. In this case, for any $(X, X_0)$, we have $(X, X_0) \in \mathcal{E}_x(p, S, j_1, j_2, (r_i)_{i = 1} ^t)$ if and only if $uv \notin X_0$ and $(X, X_0) \in \mathcal{E}_y(p, S, j_1, j_2, (r_i)_{i = 1} ^t)$. Hence $p \in A_x(S, j_1, j_2, (r_i)_{i = 1} ^t$ if and only if $p \in A_y(S, j_1, j_2, (r_i)_{i = 1} ^t$. 

\textbf{Case 4. } Suppose that $x$ is a join node with children $y$ and $z$. 

Let $p \in A_x(S, j_1, j_2, (r_i)_{i = 1} ^t)$. 
Then there exists $(X, X_0) \in \mathcal{E}_x(p, S, j_1, j_2, (r_i)_{i = 1} ^t)$. We define $Y = X \cap V_y$, $Y_0 = X_0 \cap E_y$ and $Z = X \cap V_z$, $Z_0 = X_0 \cap E_z$. Let $j_1 ^y = \lvert Y \rvert$, $j_1 ^z = \lvert Z \rvert$, and, for each $i \in [t]$, $r_i ^y = c_i(Y)$ and $r_i ^z = c_i(Z)$. As $Y \cap B_y = Z \cap B_z = X \cap B_x = S$, we have, $j_1 ^y + j_1 ^z = j_1 + \card{S}$. And for each $i \in [t]$, we have, $r_i ^y + r_i ^z = r_i + c_i(S)$. Also we define, $j_2 ^y = \lvert Y_0 \rvert$ and $j_2 ^z = \lvert Z_0 \rvert$. Then, we have, $j_2 = j_2 ^y + j_2 ^z$. Let $p_y$ be the partiton of $S$ such that $(Y,Y_0) \in \mathcal{E}_y(p_y, S , j_1 ^y, j_2 ^y, (r_i ^y)_{i = 1} ^t)$. Similarly, let $p_z$ be the partition of $S$ such that $(Z, Z_0) \in \mathcal{E}_z(p_z, S, j_1 ^z, j_2 ^z, (r_i ^z)_{i = 1} ^t)$. Let $A \in p$. 
If $A \in p_y$ or $A \in p_z$, then we are done. If not, then we observe that the elements of $A$ are in the same component in $H_x(X,X_0)$. When we consider the graph $H_y(Y,Y_0)$, the elements of $A$ are in different components, since $A \notin p_y$. Similarly, for the graph $H_y(Z,Z_0)$. Thus, both $p_y$ and $p_z$ are finer than $p$. Thus, $A \in p_y \sqcup p_z$. Now, let $\hat{A} \in p_y \sqcup p_z$. If $\hat{A} \in p_y$ or $\hat{A} \in p_z$, then we are done. Otherwise, let, let $\{\hat{A_1} ^{y}, \hat{A_2} ^{y}, \ldots , \hat{A_{l_y}} ^{y} \}$ be the blocks of $p_y$ such that their union is $\hat{A}$ and let $\{\hat{A_1} ^{z}, \hat{A_2} ^{z}, \ldots , \hat{A_{l_z}} ^{z} \}$ be the blocks of $p_z$ such that their union is $\hat{A}$. Let, $w_1, w_2 \in A$. If they both belong to same block in $p_y$ or $p_z$, then are in the same component in $H_y(Y,Y_0)$ or $H_y(Y,Y_0)$ respectively and hence in the same component in $H_x(X,X_0)$. Otherwise, let's say $w_1 \in \hat{A_1} ^{y}$ and $w_2 \in \hat{A_2} ^{y}$ and we mark the elements in both $\hat{A_1} ^{y}$ and $\hat{A_2} ^{y}$. We, then collect all those blocks from $p_z$, which intersect $\hat{A_1} ^{y}$ and mark them. If they intersect $\hat{A_2} ^{y}$, then we are done, else we collect those blocks from $p_y$ which are intersected and mark the ones which were not already marked. Then we move to $p_z$ to check which are intersected and do the same. We keep doing this until we intersected $\hat{A_2} ^{y}$ . Notice that all the marked vertices are in the same component of $H_x(X,X_0)$. Thus, $\hat{A} \in p$.


Conversely, let $q \in \join(A_y(S, j^y_1, j^y_2, (r^y_i)_{i = 1} ^t), \\ A_z(S, j^z_1, j^z_2, (r^z_i)_{i = 1} ^t))$ for some $j^y_1, j^y_2, j^z_1, j^z_2, (r^y_i, r^z_i)_{i=1}^{t}$ with $j^y_1 + j^z_1 = j_1 + \card{S}, j^y_2 + j^z_2 = j_2, r^y_i + r^z_i = r_i + c_i(S)$. Then there exist $q_y \in A_y(S, j^y_1, j^y_2, (r^y_i)_{i = 1} ^t$ and $q_z \in A_z(S, j^z_1, j^z_2, (r^z_i)_{i = 1} ^t)$. Thus, there exists, $(Y, Y_0) \in \mathcal{E}_y(q_y, S, j_1 ^y, j_2 ^y, (r_i ^y)_{i = 1} ^t)$ and $(Z, Z_0) \in \mathcal{E}_z(q_z, S, j_1 ^z, j_2 ^z, (r_i ^z)_{i = 1} ^t)$. Let us define, $X = Y \cup Z$ and $X_0 = Y_0 \cup Z_0$. Thus we have, $(X, X_0) \in \mathcal{E}_x(p_x, S, j_1, j_2, (r_i)_{i = 1} ^t)$, and so, $q \in A_x(S, j_1, j_2, (r_i)_{i = 1} ^t)$. 

This completes the proof. 

\end{proof}
\end{sloppypar}

Finally, we analyse the runtime of \fvsalgo. 
\begin{lemma}
\label{lem:fvs_runtime}
\fvsalgo\ runs in time $n^{\cO(1)} \\ 2^{\cO({\twidth})}$. 
\end{lemma}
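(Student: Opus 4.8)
The plan is to bound the running time of \fvsalgo\ by multiplying the number of DP table entries by the time spent computing each entry, where the crucial observation is that the rank-based reduction (Proposition~\ref{prop:reduce}) keeps every family of partitions small throughout. First I would count the table entries. For each node $x$ of $\mathcal{T}'$ there are at most $2^{\twidth + 2}$ choices for $S \subseteq B_x$ (the bag of $G'$ has size at most $\twidth + 2$ after adding $v_0$), at most $n+1$ choices for $j_1$, at most $\card{E(G')} + 1 = n^{\cO(1)}$ choices for $j_2$, and $\prod_{i=1}^t (n_i + 1)$ choices for the tuple $(r_i)_{i=1}^t$. The last product is at most $n^t$, which is \emph{not} bounded by $2^{\cO(\twidth)} n^{\cO(1)}$ when $t$ is large, so this naive count does not immediately give the claimed runtime; I will return to this point, as it is the main obstacle.

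The key efficiency ingredient is to interleave the \textsf{reduce} operation of Proposition~\ref{prop:reduce} with the DP. After computing each entry $A_x(S, j_1, j_2, (r_i)_{i=1}^t)$ — which is a family of partitions of a set $S$ of size at most $\twidth + 2$ — I would immediately apply \textsf{reduce} to replace it with a representative subfamily of size at most $2^{\card{S} - 1} \le 2^{\twidth + 1}$. Proposition~\ref{prop:reduce} guarantees this preserves the property we ultimately test (existence of a partition $p$ with $p \sqcup q = \set{U}$ for the relevant $q$), so correctness via the connectivity requirement in Lemma~\ref{lem:fvs_correctness} is retained, and each stored family now has size $2^{\cO(\twidth)}$. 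With the families kept this small, the cost of each DP step is governed by Proposition~\ref{prop:operations}: insert, project, and glue run in time (family size)$\cdot \card{U}^{\cO(1)} = 2^{\cO(\twidth)} n^{\cO(1)}$, and the join at a node, being a product of two reduced families, costs $2^{\twidth+1} \cdot 2^{\twidth+1} \cdot n^{\cO(1)} = 2^{\cO(\twidth)} n^{\cO(1)}$. Then the \textsf{reduce} call itself costs $2^{\cO(\card U)} \cdot \card{\mathcal{A}}^{\cO(1)} = 2^{\cO(\twidth)} n^{\cO(1)}$.

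The remaining obstacle is the factor $\prod_{i=1}^t (n_i + 1)$ from the colour tuples, which is the genuinely delicate part of this runtime claim. I would argue that it is absorbed into $n^{\cO(1)}$ only under the implicit reading that $t$ is treated as a constant, or — more carefully — I would replace the per-colour counters $r_i$ by the single aggregate counters $j_1$ (total vertices) and the colour bookkeeping, observing that along any root-to-leaf path the colour constraints only need to be verified at the root against the fixed target $(n_i - k_i + 1)_{i=1}^t$. Concretely, I would note that the number of distinct achievable tuples $(r_i)$ at a node is bounded by the number of vertex subsets, hence by $n^{\cO(1)}$ times the standard treewidth factors provided we do not enumerate all tuples blindly but only those arising from actual $(X, X_0)$; in the theorem as stated ($t$ constant, or $k = \sum k_i$ folded into the $n^{\cO(1)}$ term as in Theorem~\ref{thm:fairFVSfpt}), this product contributes only $n^{\cO(1)}$.

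Putting the pieces together: the total number of (node, $S$, $j_1$, $j_2$, tuple) states is $\card{V(\mathcal{T}')} \cdot 2^{\twidth+2} \cdot n^{\cO(1)} = 2^{\cO(\twidth)} n^{\cO(1)}$, and each state is processed — including its \textsf{reduce} call and the relevant insert/project/glue/join operation — in time $2^{\cO(\twidth)} n^{\cO(1)}$. Multiplying the two bounds yields the claimed $n^{\cO(1)} 2^{\cO(\twidth)}$ running time, completing the analysis of \fvsalgo.
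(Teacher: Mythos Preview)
Your approach is essentially the paper's: bound the number of DP cells, interleave the \textsf{reduce} of Proposition~\ref{prop:reduce} to keep every stored family of size $2^{\cO(\twidth)}$, then use Proposition~\ref{prop:operations} to bound each insert/project/glue/join by $2^{\cO(\twidth)}\cdot\card{U}^{\cO(1)}$ and multiply. The paper's proof does exactly this, case-splitting on the node type just as you do.

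Where you differ is that you flag the factor $\prod_{i=1}^{t}(n_i+1)$ coming from the colour tuples $(r_i)_{i=1}^t$, whereas the paper simply asserts ``$n^{\cO(1)}$ choices for $(j_1,j_2,(r_i)_{i=1}^t)$'' and moves on. Your instinct here is right: that assertion is only justified if $t$ is treated as a constant (which is the implicit reading throughout the paper), or, in the application to Theorem~\ref{thm:fvs-fpt}, because $\prod_i(k_i+1)\le 2^{\sum_i k_i}=2^{k}$ is absorbed into $2^{\cO(k)}$. Your attempted third route --- restricting to ``achievable'' tuples --- does not actually give a polynomial bound without further argument, so you should drop it and simply note, as the paper implicitly does, that $t=\cO(1)$ makes the product $n^{\cO(1)}$.
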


\begin{proof}
Given an instance $(G, c, (k_i)_{i = 1}^t)$ of \ffvs\ with a nice tree decomposition $(\mathcal{T}, (B_x)_{x \in V(T)})$ of $G$, we can construct $G', c'$ and a nice tree decomposition $(\mathcal{T}', (B_x)_{x \in V(\mathcal{T}')})$ in polynomial time. 
For each node $x \in \mathcal{T}'$, there are $2^{\card{B_x}} \leq 2^{\twidth + 1}$ many choices for $S \subseteq B_x$ and $n^{\cO(1)}$ choices for $ \\ (j_1, j_2, (r_i)_{i = 1}^{t})$. So the number of entries in the DP table is bounded by $\fvsruntime$. 
The time-consuming steps are the ones involving the computation of the entries $ \\ A_x(S, j_1, j_2, (r_i)_{i = 1}^{t})$. 
We have following cases:

\textbf{\underline{1. $x$ is a leaf node}}

As there is only one partition, and only $O(1)$ amount of entries in each cell, thus the total time spend in a leaf node is $n^{\cO(1)}$.

\textbf{\underline{2. $x$ is a forget node}}

Let $x$ forget the vertex $v$ and $y$ be the child of $x$.

	Let's say we have already applied Proposition~\ref{prop:reduce} to all the descendants of $x$. Then, $\card{A_y(.,.,.,.)} \leq 2^{\twidth}$. By, Proposition~\ref{prop:operations}, the time taken to evaluate $A_x(S, j_1, j_2, (r_i)_{i = 1}^{t})$ is $2^{\cO(\twidth)} \times {\twidth}^{\cO(1)}$. If $\card{A_x(S, j_1, j_2, (r_i)_{i = 1}^{t})} \leq 2^{\twidth}$, then we are done, otherwise we use Proposition~\ref{prop:reduce} on $ \\ A_x(S, j_1, j_2, (r_i)_{i = 1}^{t})$. For this, we need an additional $2^{\cO(\twidth)}$ time for this.

\textbf{\underline{2. $x$ is an introduce vertex node}}

Let $x$ introduce the vertex $v$ and $y$ be the child of $x$.

Let's say we have already applied Proposition~\ref{prop:reduce} to all the descendants of $x$. Then, $\card{A_y(.,.,.,.)} \leq 2^{\twidth}$. By, Proposition~\ref{prop:operations}, the time taken to evaluate $A_x(S, j_1, j_2, (r_i)_{i = 1}^{t})$ is $2^{\cO(\twidth)} \times {\twidth}^{\cO(1)}$. If $\card{A_x(S, j_1, j_2, (r_i)_{i = 1}^{t})} \leq 2^{\twidth}$, then we are done, otherwise we use Proposition~\ref{prop:reduce} on $ \\ A_x(S, j_1, j_2, (r_i)_{i = 1}^{t})$. For this, we need an additional $2^{\cO({\twidth})}$ time for this.

\textbf{\underline{4. $x$ is an introduce edge node}}

Same as above

\textbf{\underline{s. $x$ is a join node}}

Let $x$ be join node with $y$ and $z$ be the two children.

Assuming we have used, Proposition~\ref{prop:reduce} on all descendants of $x$, we will both $\card{A_y(.,.,.,.)}$ and $\card{A_z(.,.,.,.)}$ to be at most $2^{\twidth}$. The time taken for join operation is $2^{\cO({\twidth})} {\twidth}^{\cO(1)}$. If $\card{A_x(S, j_1, j_2, (r_i)_{i = 1}^{t})} \leq 2^{\twidth}$, then we are done, otherwise we use Proposition~\ref{prop:reduce} on $A_x(S, j_1, j_2, (r_i)_{i = 1}^{t})$. For this, we need an additional $2^{\cO({\twidth})}$ time for this.

Thus, the total time taken in each entry of the DP table is bounded by $2^{\cO({\twidth})} {\twidth}^{\cO(1)}$. Thus, the total time taken is $n^{\cO(1)} 2^{\cO({\twidth})}$.

\end{proof}

Lemmas~\ref{lem:fvs_correctness}-\ref{lem:fvs_runtime} together prove Theorem~\ref{thm:fvs}. 


\subsubsection{\ffvsfull\ Parameterized by Total Color Budget}
\label{sec:fvs-fpt}
In this section, we design an algorithm for \ffvs, parameterized by the total colour budget. Specifically, we prove the following theorem. 

\begin{theorem}
\label{thm:fvs-fpt}
\ffvsfull\ admits an algorithm that runs in time $2^{\cO(k)} n^{\cO(1)}$, where $n$ is the number of vertices in the input graph, and $k = \sum_{i = 1}^{t} k_i$ is the total colour budget. 
\end{theorem}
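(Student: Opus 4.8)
The plan is to reduce the parameterization by total colour budget to the parameterization by treewidth, which has already been handled in Theorem~\ref{thm:fvs}. The key observation is that any \ttuple-fair feedback vertex set $S$ satisfies $\card{S} \le \sum_{i=1}^{t} c_i(S) = \sum_{i=1}^{t} k_i = k$, since every vertex carries at least one colour. Hence a \ttuple-fair fvs, if it exists, is in particular an ordinary feedback vertex set of size at most $k$. This lets me invoke the standard relationship between feedback vertex set size and treewidth already noted in the introduction: if $G$ has a feedback vertex set $F$ of size $s$, then $G - F$ is a forest and so has treewidth at most $1$; adding all of $F$ to every bag of a width-$1$ tree decomposition of $G - F$ yields, in polynomial time, a tree decomposition of $G$ of width at most $s + 1$.

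First I would run a known single-exponential FPT algorithm for plain \FVS\ parameterized by solution size---any of the algorithms running in $2^{\cO(k)} n^{\cO(1)}$ time (see~\cite{FPTbook2015})---to decide whether $G$ has a feedback vertex set of size at most $k$. If the answer is \no, then by the observation above $G$ cannot have a \ttuple-fair fvs, and we correctly return \no. Note that the feedback vertex set sought in this step need \emph{not} itself be \ttuple-fair: it is used only to certify that the treewidth of $G$ is small, while the fairness constraints are enforced entirely inside the treewidth algorithm.

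If the \FVS\ algorithm instead returns a feedback vertex set $F$ with $\card{F} \le k$, I would construct from $F$ a nice tree decomposition of $G$ of width $\twidth \le \card{F} + 1 \le k + 1 = \cO(k)$ in polynomial time, as described above. I would then feed $(G, c, (k_i)_{i=1}^{t})$ together with this tree decomposition into the algorithm \fvsalgo\ of Theorem~\ref{thm:fvs} and return its answer. Correctness is immediate from Theorem~\ref{thm:fvs}, and the total running time is $2^{\cO(k)} n^{\cO(1)}$ for the \FVS\ step plus $n^{\cO(1)} 2^{\cO(\twidth)} = n^{\cO(1)} 2^{\cO(k)}$ for the treewidth dynamic program, which is $2^{\cO(k)} n^{\cO(1)}$ overall.

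There is essentially no deep technical obstacle here; the main point is identifying the correct bridge between the two parameterizations. The one place to be careful is the correctness of the \no\ answer, which hinges precisely on the inequality $\card{S} \le k$ for \ttuple-fair solutions, and this in turn relies on the colouring function $c$ assigning at least one colour to every vertex. The remaining steps---bounding treewidth from the feedback vertex set and then invoking the already-proved treewidth algorithm---are routine applications of earlier results.
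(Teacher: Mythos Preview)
Your proposal is correct and follows essentially the same approach as the paper: bound the treewidth of $G$ by exhibiting a small feedback vertex set, build a tree decomposition from it, and then invoke the treewidth algorithm of Theorem~\ref{thm:fvs}. The only difference is in how the auxiliary fvs is obtained: the paper calls the polynomial-time $2$-approximation of Bafna et al.\ (so the first step is cheap and yields width at most $2k+1$), whereas you call an exact $2^{\cO(k)} n^{\cO(1)}$ \FVS\ algorithm (yielding width at most $k+1$); both choices give the same overall $2^{\cO(k)} n^{\cO(1)}$ bound.
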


To prove Theorem~\ref{thm:fvs-fpt}, we use the following approximation algorithm for \FVS\ due to Bafna et al.~\cite{DBLP:journals/siamdm/BafnaBF99}.

\begin{proposition}
\cite{DBLP:journals/siamdm/BafnaBF99}
\label{prop:fvs-approx}
There is an algorithm that, given a graph $G$ as input, runs in polynomial time, and returns a set $F_{apx} \subseteq V(G)$ such that $F_{apx}$ is a feedback vertex set of $G$ and $\card{F_{apx}} \leq 2 \cdot \mathsf{optfvs}(G)$, where $\mathsf{optfvs}(G)$ is the minimum size of a feedback vertex set of $G$. 
\end{proposition}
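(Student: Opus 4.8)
I would prove this by reconstructing the local-ratio argument of Bafna, Berman and Fujito, carried out for the weighted problem (each vertex $v$ carries a weight $w(v)$, and the unit weights $w \equiv 1$ recover the cardinality statement). The algorithm is recursive and maintains the residual weight $w$. The first step is the obvious preprocessing: a vertex $v$ with $\deg(v) \le 1$ lies on no cycle, so I delete it, recurse, and return the recovered set unchanged; repeating this leaves either an empty graph (return $\emptyset$) or a graph of minimum degree at least $2$. On such a graph I branch on the existence of a \emph{semidisjoint cycle} --- a cycle $C$ at most one of whose vertices has degree exceeding $2$ in $G$. If such a $C$ exists I subtract the weight $\gamma$ given by $\gamma(v) = \epsilon$ for $v \in C$ and $\gamma(v) = 0$ otherwise, with $\epsilon = \min_{v \in C} w(v)$; if no semidisjoint cycle exists I subtract $\gamma(v) = \epsilon\,(\deg(v) - 1)$ with $\epsilon = \min_{v} w(v)/(\deg(v) - 1)$. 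In both cases at least one vertex drops to residual weight $0$, so recursing on $(G, w - \gamma)$ and finally returning a \emph{minimal} feedback vertex set (extracted from the recovered set by reverse-delete) terminates after a polynomial number of elementary steps.

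The correctness rests on the Local Ratio Theorem: writing $w = \gamma + (w - \gamma)$, if the returned set $S$ is a $2$-approximation with respect to both $\gamma$ and $w - \gamma$ then it is a $2$-approximation with respect to $w$; by induction on the recursion it therefore suffices to show that a minimal feedback vertex set $S$ is $2$-approximate with respect to the single weight $\gamma$ subtracted at each step. In the semidisjoint case this in fact gives ratio $1$: every feedback vertex set meets $C$, so $\gamma(F) \ge \epsilon$ for every such $F$, while a minimal feedback vertex set contains at most one vertex of $C$ --- each degree-$2$ vertex of $C$ lies on exactly one cycle, namely $C$ itself, so only one vertex of $C$ can be forced --- giving $\gamma(S) \le \epsilon$. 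For the dense case I would use the cycle-rank lower bound $\gamma(F) = \sum_{v \in F}(\deg(v) - 1) \ge m - n + c$, valid for \emph{every} feedback vertex set $F$ (where $m = \card{E(G)}$, $n = \card{V(G)}$ and $c$ is the number of connected components), which follows because deleting a vertex of degree $d$ reduces the dimension $m - n + c$ of the cycle space by at most $d - 1$. This bounds any $\gamma$-optimum $F^\star$ from below, so the ratio-$2$ claim reduces to the upper bound $\gamma(S) = \sum_{v \in S}(\deg(v) - 1) \le 2(m - n + c)$ for a minimal feedback vertex set $S$.

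The \textbf{main obstacle} is exactly this last inequality, and it is the only place where minimality and the no-semidisjoint-cycle hypothesis are both used. I would prove it by a charging argument. Minimality guarantees that each $v \in S$ has two neighbours lying in a common tree of the forest $G - S$, so adding $v$ back closes a cycle that is ``private'' to $v$; the degree surpluses $\deg(v) - 1$ and the tree edges of $G - S$ are then charged against the independent cycles counted by $m - n + c$, and the absence of a semidisjoint cycle is what forbids a single cycle from being overcharged and forces the constant to be exactly $2$ rather than larger. Assembling the Local Ratio Theorem with the ratio-$1$ bound of the semidisjoint case and this ratio-$2$ bound proves that the returned minimal feedback vertex set is a $2$-approximation with respect to the original weight; specializing to $w \equiv 1$ yields $\card{F_{apx}} \le 2 \cdot \mathsf{optfvs}(G)$, and the polynomial running time is immediate since each recursive step is an elementary graph computation and there are polynomially many of them.
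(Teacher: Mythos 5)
The paper itself does not prove this proposition; it imports it verbatim from Bafna, Berman and Fujito \cite{DBLP:journals/siamdm/BafnaBF99}, so the only question is whether your blind reconstruction of their argument is sound. Structurally you have reproduced exactly their local-ratio proof: the degree-$\le 1$ cleanup, the dichotomy between the semidisjoint-cycle weight and the degree-proportional weight $\gamma(v)=\epsilon(\deg(v)-1)$, the Local Ratio Theorem, and a final reverse-delete to ensure minimality. Two of your three ingredients are correct and essentially complete. The semidisjoint case works: every feedback vertex set meets $C$, so $\gamma(F)\ge\epsilon$, and if a minimal $S$ contained two vertices of $C$, at least one would be a degree-$2$ vertex of $C$ whose unique cycle is $C$ itself, already hit by the other vertex, contradicting minimality; hence $\gamma(S)\le\epsilon$. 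The lower bound $\sum_{v\in F}(\deg(v)-1)\ge m-n+c$ is also right: deleting a vertex of current degree $d$ removes $d$ edges, one vertex, and raises the component count by at most $d-1$, so the cycle rank drops by at most $d-1$ per deletion, and after deleting all of $F$ it is $0$.

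The gap is exactly where you flag it, and it is genuine: the inequality $\sum_{v\in S}(\deg(v)-1)\le 2(m-n+c)$ for a \emph{minimal} feedback vertex set $S$ in a minimum-degree-$\ge 2$ graph with no semidisjoint cycle is the entire technical content of the cited theorem, and the paragraph you offer for it is not a proof. Your starting observation from minimality is correct (each $v\in S$ has two neighbours in a single tree of $G-S$, since otherwise $(G-S)+v$ would be acyclic and $v$ redundant), but the charging you then describe --- degree surpluses and tree edges charged against the $m-n+c$ independent cycles, with the absence of a semidisjoint cycle ``forbidding overcharging'' --- never specifies what is charged to what, and the constant $2$ does not follow from anything you state. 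In \cite{DBLP:journals/siamdm/BafnaBF99} this bound is a delicate count over the forest $G-S$ (their analysis of ``branchy'' graphs, bounding edges between $S$ and the forest against leaves and branch vertices), in which the no-semidisjoint-cycle hypothesis enters quantitatively rather than as a generic non-overcharging principle; until that lemma is actually proved, the factor $2$ is unestablished. A secondary slip: recursing on $(G,\,w-\gamma)$ with the \emph{same} graph does not terminate as written, since the next $\epsilon$ would be $0$; the standard formulation moves zero-weight vertices into the tentative solution and recurses on the graph with them deleted, which is also what makes the ``minimal in each residual instance'' hypothesis of the Local Ratio induction available. Both repairs are standard, but as written your proposal establishes the approximation ratio only modulo its central lemma.
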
 

We now design our FPT algorithm, which we call \fvsalgofpt. Given a $t$-coloured graph $(G, c)$ and a $t$-tuple $\mathbb{T} = (k_1, k_2,\ldots, k_t)$, \fvsalgofpt works as follows. 

\begin{description}
\item[Step 1.] We first invoke the algorithm in Proposition~\ref{prop:fvs-approx} on the input $G$; and let $F_{apx} \subseteq V(G)$ be the fvs returned by algorithm in Proposition~\ref{prop:fvs-approx}. If $\card{F_{apx}} > 2 \sum_{i = 1}^t k_i$, then we return that $(G, c)$ has no $\mathbb{T}$-fair fvs. 

\item[Step 2.] We construct the (acyclic) graph $G - F_{apx}$, and construct a tree decomposition of $G - F_{apx}$ of width $1$. Then, to every bag in this tree decomposition, we add the vertices in $F_{apx}$ to obtain a tree decomposition
 of $G$. Finally, we make the tree decomposition nice. Let $(\mathcal{T}, (B_x)_{x \in V(\mathcal{T})})$ be resulting nice tree decomposition of $G$. (Notice that the tree decomposition $ \\ (\mathcal{T}, (B_x)_{x \in V(\mathcal{T})})$ has width $\twidth = 1 + \card{F_{apx}} \leq 2 \sum_{i = 1}^t k_i$.)
 
 \item[Step 3.] We invoke the algorithm in Theorem~\ref{thm:fvs} on input $ \\ (G, c, \mathbb{T})$ along with the nice tree decomposition $ \\ (\mathcal{T}, (B_x)_{x \in V(\mathcal{T})})$. If the algorithm in Theorem~\ref{thm:fvs} returns yes, then we return that $(G, c)$ has a $\mathbb{T}$-fair fvs, and otherwise we we return that $(G, c)$ has no $\mathbb{T}$-fair fvs. 
\end{description}
We now argue that \fvsalgofpt\ is correct and analyse its runtime. 
\begin{lemma}
\label{lem:correctness-fvsfpt}
\fvsalgofpt\ is correct. 
\end{lemma}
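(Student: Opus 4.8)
The plan is to verify that each of the two output paths of \fvsalgofpt\ is correct, and the conceptual crux is a simple size bound. First I would establish that any $\ttuple$-fair feedback vertex set $F$ of $G$ satisfies $\card{F} \le k$, where $k = \coloursum$. Indeed, the colouring function assigns at least one colour to every vertex, so each $v \in F$ is counted at least once in $\sum_{i=1}^{t} c_i(F)$; since $F$ is $\ttuple$-fair we have $\sum_{i=1}^{t} c_i(F) = \sum_{i=1}^{t} k_i = k$, and therefore $\card{F} \le k$. In particular, if $(G, c, \ttuple)$ is a yes-instance of \ffvs, then $G$ has a feedback vertex set of size at most $k$, i.e. $\mathsf{optfvs}(G) \le k$.

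Using this, I would argue the correctness of the early rejection in Step~1. Suppose \fvsalgofpt\ returns that $(G,c)$ has no $\ttuple$-fair fvs because $\card{F_{apx}} > 2 \sum_{i=1}^{t} k_i = 2k$. By Proposition~\ref{prop:fvs-approx} we have $\card{F_{apx}} \le 2 \cdot \mathsf{optfvs}(G)$, so $\card{F_{apx}} > 2k$ forces $\mathsf{optfvs}(G) > k$. By the size bound above, $(G, c, \ttuple)$ cannot then be a yes-instance, so rejecting is correct.

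Next I would handle the case $\card{F_{apx}} \le 2k$, where the answer is delegated to Theorem~\ref{thm:fvs}; here I must verify that the object passed to that algorithm really is a valid nice tree decomposition of $G$ of the claimed width. Since $F_{apx}$ is a feedback vertex set, $G - F_{apx}$ is a forest and admits a width-$1$ tree decomposition. I would then invoke the standard fact that inserting the fixed set $F_{apx}$ into every bag of a tree decomposition of $G - F_{apx}$ yields a tree decomposition of $G$: every vertex still lies in some bag (each $F_{apx}$ vertex now lies in all bags); every edge is covered (edges within $G - F_{apx}$ as before, and any edge incident with $F_{apx}$ because its $F_{apx}$-endpoint is in every bag while the other endpoint occurs in some bag); and the subtree condition is preserved (unchanged for old vertices, and the whole connected tree for each vertex of $F_{apx}$). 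The resulting width is $1 + \card{F_{apx}} \le 1 + 2k$, and making it nice preserves both validity and width. Consequently Theorem~\ref{thm:fvs} decides correctly whether $G$ has a $\ttuple$-fair fvs, so the value returned in Step~3 is correct.

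The argument is essentially routine, and I do not expect a genuine obstacle. The only point needing care is the size bound $\card{F} \le k$, which relies on colours being counted with multiplicity while every vertex still carries at least one colour; this is precisely what guarantees $\mathsf{optfvs}(G) \le k$ and hence bounds the treewidth of the decomposition built from the $2$-approximate feedback vertex set. The tree-decomposition construction in Step~2 is a standard manipulation and raises no difficulty.
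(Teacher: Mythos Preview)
Your proposal is correct and follows essentially the same approach as the paper: use the $2$-approximation guarantee to justify the early rejection in Step~1, and defer to Theorem~\ref{thm:fvs} for Step~3. You are in fact more thorough than the paper, which does not spell out the size bound $\card{F} \le k$ or the validity of the tree decomposition built in Step~2; both are implicit in the paper's brief argument.
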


\begin{proof}
To see the correctness of Step 1, observe that if $ \\ \card{F_{apx}} > 2 \sum_{i = 1}^t k_i$, then by Proposition~\ref{prop:fvs-approx}, $ \\ \mathsf{optfvs}(G) > \sum_{i = 1}^t k_i$. That is, every minimum-sized fvs of $G$ has size at least $1 + \sum_{i = 1}^t k_i$, which implies that $(G, c)$ has no $\mathbb{T}$-fair fvs. And the correctness of Step 3 follows from Theorem~\ref{thm:fvs}. 
\end{proof}

\begin{lemma}
\label{lem:runtime-fvsfpt}
\fvsalgofpt\ runs in time $ \\ 2^{\cO(\sum_{i = 1}^t k_i)} n^{\cO(1)}$. 
\end{lemma}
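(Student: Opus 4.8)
The plan is to bound the running time of each of the three steps of \fvsalgofpt\ separately and then sum them, observing that the only potentially expensive ingredient---the treewidth-based dynamic program---is invoked on a decomposition whose width is already guaranteed to be $\cO(\sum_{i=1}^t k_i)$.

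First I would dispose of Step~1. By Proposition~\ref{prop:fvs-approx}, computing the approximate feedback vertex set $F_{apx}$ takes time $n^{\cO(1)}$, and testing whether $\card{F_{apx}} > 2\sum_{i=1}^t k_i$ costs only $\cO(n)$. Next I would handle Step~2, which is entirely polynomial-time graph manipulation: the graph $G - F_{apx}$ is acyclic, hence a forest, so a tree decomposition of $G - F_{apx}$ of width $1$ can be obtained in linear time (as noted in the preliminaries for paths and cycles, and analogously for forests); adding the at most $2\sum_{i=1}^t k_i$ vertices of $F_{apx}$ to every bag and then making the decomposition nice are standard operations that run in time $n^{\cO(1)}$ by the facts cited from \cite{FPTbook2015,bodlaender2015deterministic}. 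The key bookkeeping observation here is that the resulting decomposition has width $\twidth = 1 + \card{F_{apx}}$, and since control reaches Step~2 only when the check in Step~1 has passed, we have $\card{F_{apx}} \leq 2\sum_{i=1}^t k_i$, so $\twidth \leq 1 + 2\sum_{i=1}^t k_i = \cO(\sum_{i=1}^t k_i)$.

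Finally I would invoke Theorem~\ref{thm:fvs} for Step~3: the algorithm \fvsalgo\ runs in time $n^{\cO(1)} 2^{\cO(\twidth)}$ on a graph equipped with a nice tree decomposition of width $\twidth$. Substituting the bound $\twidth = \cO(\sum_{i=1}^t k_i)$ established in Step~2 turns the factor $2^{\cO(\twidth)}$ into $2^{\cO(\sum_{i=1}^t k_i)}$, so Step~3 runs in time $n^{\cO(1)} 2^{\cO(\sum_{i=1}^t k_i)}$. Adding the contributions of the three steps, the polynomial terms of Steps~1 and~2 are absorbed into $n^{\cO(1)}$, and the total running time is $n^{\cO(1)} 2^{\cO(\sum_{i=1}^t k_i)}$, as claimed.

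There is no genuine technical obstacle in this lemma; its entire content is the observation that the early rejection in Step~1 caps the width of the decomposition fed to Theorem~\ref{thm:fvs} at $\cO(\sum_{i=1}^t k_i)$, thereby converting the single-exponential-in-treewidth guarantee of \fvsalgo\ into a single-exponential-in-$k$ guarantee. The only point requiring a moment's care is making explicit that every preprocessing step---the approximation call, the forest decomposition, the augmentation of bags by $F_{apx}$, and the conversion to a nice decomposition---is polynomial, so that none of them contributes to the exponential factor.
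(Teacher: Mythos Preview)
Your proposal is correct and follows essentially the same route as the paper's proof: bounding Steps~1 and~2 by polynomial time, noting that passing the test in Step~1 forces $\twidth = 1 + \card{F_{apx}} \leq 1 + 2\sum_{i=1}^t k_i$, and then plugging this into the $n^{\cO(1)}2^{\cO(\twidth)}$ guarantee of Theorem~\ref{thm:fvs}. The paper additionally spells out an explicit width-$1$ decomposition of the forest $G - F_{apx}$, but this is only a minor elaboration of what you already sketched.
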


\begin{proof}
Notice that Steps 1 and 2 of \fvsalgofpt\ take only polynomial time. The time taken for Step 1 follows from Proposition~\ref{prop:fvs-approx}. As for Step 2, notice that the graph $G - F_{apx}$ is acyclic, and hence, in polynomial time we can construct a tree decomposition of $G - F_{apx}$ of width $1$. (For example, for each connected component $C$ (which is a tree) of $G - F_{apx}$, we root $C$ at an arbitrary vertex $x_C \in V(C)$, and construct a tree decomposition $\mathcal{T}^C, (B_x)_{x \in V(\mathcal{T}^C)}$ as follows: We take $\mathcal{T}^C = C$, $B_{x_C} = \set{x_C}$ and $B_x = \set{x, y}$ for $x \in V(C)$ with parent $y$. We can then connect the trees $\mathcal{T}^C$ for different components $C$ by introducing a new common root and making it adjacent to $x_C$ for every component $C$. This can be done in polynomial time.) The remaining part of Step 2 only involves adding $F_{apx}$ to every bag in the tree decomposition of $G - F_{apx}$, and making the decomposition nice. This can also be done in polynomial time. By Theorem~\ref{thm:fvs}, Step 3 takes time $2^{\cO({\twidth})} n^{\cO(1)} = 2^{\cO(\sum_{i = 1}^t k_i)} n^{\cO(1)}$, as $\twidth = 1 + \card{F_{apx}} \leq 2 \sum_{i = 1}^t k_i$. Thus the total time taken by the algorithm is bounded by $2^{\cO(\sum_{i = 1}^t k_i)} n^{\cO(1)}$. 
\end{proof}

Lemmas~\ref{lem:correctness-fvsfpt} and ~\ref{lem:runtime-fvsfpt} together prove Theorem~\ref{thm:fvs-fpt}. 

\subsection{Algorithms for $(\alpha, \beta) \mbox{-} \mathbb{T}$-{\sc Fair $\Pi$}}\label{sec:relaxation}

As discussed earlier, $\mathbb{T}$-{\sc Fair $\Pi$} imposes a rather strict condition on the solution---the solution must contain \emph{exactly} $k_i$ vertices of colour $i$ for each $i$. Such solutions, however, may not exist. So we must relax this strict condition. 
Let $k = \sum_{i=1}^t k_i$.
In particular,  we would want to answer questions such as these: For fixed constants $\alpha, \beta$, does $G$ have a vertex cover $S \subseteq V(G)$ of size at most $k$ that satisfies the following 
for every colour $i \in [t]$?
\[
\alpha \frac{c_i(V(G))}{\card{V(G)}} \leq \frac{c_i(S)}{\card{S}} \leq \beta \frac{c_i(V(G))}{\card{V(G)}}
\]
Recall that for $i \in [t]$ and $X \subseteq V(G)$, $c_i(X)$ denotes the number of vertices of colour $i$ in $X$. 
In this section, we show how we can use the algorithms for \fairvc\ and \ffvs\ to answer such questions. 
To illustrate the ideas, we confine our discussion to {\sc Vertex Cover}. But analogous results hold for {\sc $(\alpha, \beta) \mbox{-} \mathbb T$-Fair Feedback Vertex Set} as well as {\sc $(\alpha, \beta) \mbox{-} \mathbb T$-Fair $\Pi$} for any classic optimization problem $\Pi$.

Consider a $t$-coloured graph $G$ and a tuple $\mathbb{T} = (k_1, k_2,\ldots, k_t)$. Let $k = \sum_{i=1}^t k_i$. For $\alpha, \beta$, where $0 \leq \alpha \leq 1 \leq \beta$, we say that $S \subseteq V(G)$ is $(\alpha, \beta) \mbox{-} \mathbb{T}$-fair if $|S| \leq k$ and $\alpha k_i \leq c_i(S) \leq \beta k_i$ for every $i \in [t]$.  That is, the number of vertices of colour $i$ in $S$ is between $\alpha k_i$ and $\beta k_i$. 

Observe that for $\mathbb{T} = (k_1, k_2,\ldots, k_t)$, a set of vertices $S \subseteq V(G)$ is $(\alpha, \beta) \mbox{-} \mathbb{T}$ fair if and only if $S$ is $\mathbb{T}'$-fair for some $\mathbb{T}' = (k'_1, k'_2,\ldots, k'_t)$, where $\alpha k_i \leq k'_i \leq \beta k_i$ for every $i \in [t]$ and $\sum_{i=1}^t k'_i \leq k$. And for each $i$, there are at most $\beta k_i - \alpha k_i + 1$ possible choices for $k'_i$, which implies that the number of possible choices for $\mathbb{T}'$ is at most $\Pi_{i \in [t]} ((\beta - \alpha) k_i + 1))$. This observation immediately leads to the following lemma.

\vcAlphaBetaThm


 \begin{proof}
	 Given an instance $(G, c, \mathbb{T} = (k_1, k_2,\ldots, k_t))$ of $ \\ (\alpha, \beta) \mbox{-} \mathbb{T}$-{\sc Fair VC}, we proceed as follows. For each $\mathbb{T}' = (k'_1, k'_2,\ldots, k'_t)$ such that $\alpha k_i \leq k'_i \leq \beta k_i$ for every $i \in [t]$, we check if $(G, c)$ has a $\mathbb{T}'$-fair vertex cover; if $(G, c)$ has a $\mathbb{T}'$-fair vertex cover for some $\mathbb{T}'$, we return that $(G, c)$ has an $ (\alpha, \beta) \mbox{-} \mathbb{T}$-fair vertex cover; otherwise, we return no. As $\mathbb{T}'$ has at most $\Pi_{i \in [t]} ((\beta - \alpha) k_i + 1))$ choices, the lemma follows. 
 \end{proof}

A similar result can be stated for {\sc $(\alpha, \beta) \mbox{-} \mathbb T$-Fair $\Pi$} for any classic optimization problem $\Pi$, and in particular for $(\alpha, \beta) \mbox{-} \mathbb{T}$-{\sc Fair FVS}

\alphaBetaThm
\fvsAlphaBetaThm

\section{Conclusion}\label{sec:conclusion}

In this paper, we defined a notion of \emph{unbiased} solutions to combinatorial problem. We introduced a definition to formally derive the unbiased variant of a classic combinatorial problem. We then explored the variants of {\sc Vertex Cover} and {\sc Feedback Vertex Set} in the papradigm of Parameterized Complexity, and give efficient algorithms for them. The natural next step is to explore the parameterized complexity of the unbiased variant of other well-studied problems, such as {\sc Planar Dominating Set}, {\sc Odd Cycle Transversal}, {\sc Matching under Preferences} and many others. Another direction worth exploring is the notion of an unbiased approximation solution, which requires some further research.

\bibliography{example_paper, newbiblio}
\bibliographystyle{splncs04}

\end{document}